	\let\Cref\crtCref
	\let\cref\crtcref
\tikzset{
	v/.style={circle,fill=blue!70!yellow!50!gray!20, draw=black!75,inner sep=0pt,minimum size=5pt}
}
\newtheorem{theorem}{Theorem}[section]
\newtheorem*{theorem*}{Theorem}
\newtheorem{lemma}[theorem]{Lemma}
\newtheorem*{lemma*}{Lemma}
\newtheorem{corollary}[theorem]{Corollary}
\newtheorem{proposition}[theorem]{Proposition}
\newtheorem*{proposition*}{Proposition}
\theoremstyle{definition}
\newtheorem{definition}[theorem]{Definition}
\newtheorem{example}[theorem]{Example}
\newtheorem{remark}[theorem]{Remark}
\newtheorem{observation}[theorem]{Observation}
\Crefname{observation}{Observation}{Observations}
\crefname{equation}{}{}
\def\A{\mathbb{A}}
\def\B{\mathbb{B}}
\def\C{\mathbb{C}}
\def\D{\mathbb{D}}
\def\bbK{\mathbb{K}}
\def\bbI{\mathbb{I}}
\newcommand{\fA}{f^\A}
\newcommand{\fB}{f^\B}
\newcommand{\fC}{f^\C}
\newcommand{\fD}{f^\D}
\newcommand{\uA}{u^\A}
\newcommand{\uC}{u^\C}
\newcommand{\cA}{\mathcal{A}} 
\newcommand{\cP}{\mathcal{P}} 
\newcommand{\cG}{\mathcal{G}} 
\newcommand\overcasts{\mathrel{\succeq}}
\newcommand\overcastsL{\mathrel{\preceq}}
\newcommand{\poleq}[1][_\bot]{\sqsubseteq_{#1}}
\newcommand{\pogeq}[1][_\bot]{\sqsupseteq_{#1}}
\def\circbot{\circ_{\hspace*{-2.5pt}\scriptscriptstyle\bot\hspace*{-1pt}}}
\DeclareMathOperator{\dist}{d}
\newcommand{\dopt}{\dist_{\mathrm{opt}}}
\DeclareMathOperator{\tw}{tw}
\DeclareMathOperator{\dom}{dom}
\DeclareMathOperator*{\EX}{\mathrm{E}}
\DeclareMathOperator{\Pos}{Pos}
\DeclareMathOperator{\Opt}{Opt}
\DeclareMathOperator{\Feas}{Feas}
\DeclareMathOperator{\val}{val}
\DeclareMathOperator{\minval}{minval}
\DeclareMathOperator{\maxval}{maxval}
\DeclareMathOperator{\tup}{tup}
\DeclareMathOperator{\ar}{ar}
\DeclareMathOperator{\phom}{p-hom}
\DeclareMathOperator{\supp}{supp}
\newcommand{\toset}[1]{\operatorname{Set} (#1)}
\newcommand{\SA}[1]{\operatorname{SA}_{#1}}
\newcommand{\BigO}[1]{\mathcal{O}\bc{#1}}
\newcommand{\NN}{\mathbb{N}}
\newcommand{\QQ}{\mathbb{Q}}
\newcommand{\ZZ}{\mathbb{Z}}
\newcommand{\Qinf}{\ensuremath{\QQ_{\geq 0} \cup \set{-\infty}}}
\newcommand{\Qposinf}{\ensuremath{\QQ_{\geq 0} \cup \set{\infty}}}
\newcommand{\Qpos}{\ensuremath{\QQ_{\geq 0}}}
\newcommand{\Eg}{\EX_{g\sim\omega}}
\newcommand{\eps}{\varepsilon}
\newcommand{\MaxSol}{\ensuremath{\mbox{Max-Sol}}\xspace}
\newcommand{\MinSol}{\ensuremath{\mbox{Min-Sol}}\xspace}
\newcommand{\StrictCSP}{\ensuremath{\mbox{Strict-CSP}}\xspace}
\newcommand{\MinCostHom}{\ensuremath{\mbox{Min-Cost-Hom}}\xspace}
\newcommand{\MaxClique}{\ensuremath{\mbox{Max-Clique}}\xspace}
\newcommand\ve[1]{\mathbf{#1}}
\newcommand\vx{\ve{x}}
\newcommand\vy{\ve{y}}
\newcommand{\Alg}{\mathtt{Algo}} 
\DeclarePairedDelimiter{\PDbc}{\lparen}{\rparen}
\newcommand{\bc}[1]{\PDbc*{#1}}
\DeclarePairedDelimiter{\PDbrk}{\lbrack}{\rbrack}
\newcommand{\brk}[1]{\PDbrk*{#1}}
\DeclarePairedDelimiter{\abs}{\lvert}{\rvert}
\DeclarePairedDelimiter{\card}{\lvert}{\rvert}
\DeclarePairedDelimiter{\set}{\lbrace}{\rbrace}
\newcommand{\I}[1]{\mathbf{1} \brk{#1}}
\newcommand\restr[2]{{
		\left.\kern-\nulldelimiterspace
		#1 
		\right|_{#2} 
}}
\newcommand{\tuple}[1]{\mathbf{#1}}
\DeclareMathOperator{\Gaifman}{G}
\DeclareMathOperator{\p}{p}
\newcommand\rel[1]{\mathrm{Rel}[#1]}
\newcommand{\istar}{{i^*}}
\begin{document}

\author{Balázs F. Mezei
\and
Marcin Wrochna\\
University of Warsaw\\
\texttt{m.wrochna@mimuw.edu.pl}
\and
Stanislav \v{Z}ivn\'y\\
University of Oxford\\
\texttt{standa.zivny@cs.ox.ac.uk}
}

\title{PTAS for Sparse General-Valued CSPs\thanks{
An extended abstract of this work appeared in~\emph{Proceedings of the 36th
Annual ACM/IEEE Symposium on Logic in Computer Science} (LICS 2021)~\cite{mwz21:lics}.
Stanislav \v{Z}ivn\'y was supported by a Royal Society University
Research Fellowship. Work mostly done while Balázs F. Mezei and Marcin
Wrochna were employed at the University of Oxford. This project has received funding from the European
Research Council (ERC) under the European Union's Horizon 2020 research and
innovation programme (grant agreement No 714532). The paper reflects only the
authors' views and not the views of the ERC or the European Commission. The
European Union is not liable for any use that may be made of the information
contained therein. This research was funded by UKRI EP/X024431/1. For the purpose of Open Access, the authors have applied a CC BY public copyright licence to any Author Accepted Manuscript version arising from this submission. All data is provided in full in the results section of this paper.}}

\maketitle

\begin{abstract}
We study polynomial-time approximation schemes (PTASes) for constraint satisfaction
problems (CSPs) such as Maximum Independent Set or Minimum Vertex Cover on sparse graph classes.

Baker's approach gives a PTAS on planar graphs, excluded-minor classes, and beyond.
For Max-CSPs, and even more generally, maximisation finite-valued CSPs (where
constraints are arbitrary non-negative functions), Romero, Wrochna, and
\v{Z}ivn\'y~[SODA'21] showed that the Sherali-Adams LP relaxation gives a simple PTAS for all fractionally-treewidth-fragile classes, which is the most general ``sparsity'' condition for which a PTAS is known.
We extend these results to general-valued CSPs, which include ``crisp'' (or ``strict'') constraints that have to be satisfied by every feasible assignment.
The only condition on the crisp constraints is that their domain contains an element which is at least as feasible as all the others (but possibly less valuable).

For minimisation general-valued CSPs with crisp constraints, we present a PTAS
for all \emph{Baker} graph classes --- a definition by Dvo\v{r}\'ak~[SODA'20] 
which encompasses all classes where Baker's technique is known to work, except for fractionally-treewidth-fragile classes.
While this is standard for problems satisfying a certain monotonicity condition on crisp constraints, we show this can be relaxed to \emph{diagonalisability} --- a property of relational structures connected to logics, statistical physics, and random CSPs.
\end{abstract}

\section{Introduction}\label{sec:intro}
Min-Ones and Max-Ones, studied by Khanna and Motwani (under the names of TMIN and
TMAX, respectively)~\cite{Khanna96:stoc} and by Khanna, Sudan, Trevisan, and
Williamson~\cite{Khanna01:sicomp}, are Boolean CSPs in which one seeks a
feasible solution (a 0--1 assignment satisfying all constraints) minimising/maximising the number of variables assigned the
label~$1$. Classical examples are the Minimum Vertex Cover and the Maximum
Independent Set problem, respectively. A natural generalisation to larger
alphabets is the problem in which one seeks a solution to a CSP instance while
minimising/maximising a sum of unary functions. With injective unary functions, such problems have
been studied under the name of \StrictCSP by  plumar, Manokaran, Tulsiani, and
Vishnoi~\cite{Kumar11:soda}, and Min/Max-Solution by Jonsson, Kuivinen, and
Nordh~\cite{Jonsson08:sicomp}. With arbitrary unary functions, such problems
have been studied under the name of \MinCostHom by Gutin, Hell, Rafiey, and
Yeo~\cite{Gutin08:ejc-mincostdichotomy}, Takhanov~\cite{Takhanov10:stacs}, and
others~\cite{Hell12:sidma,Hell12:esa-approximation,Rafiey19:icalp}.
In this paper we consider the still more general setting of \emph{general-valued} CSPs, where constraints are functions which give values to every possible assignment on a tuple of variables;
we allow $\infty$ or $-\infty$ values to express \emph{crisp} (also known as \emph{strict}) constraints,
which have to be satisfied by every feasible (finite-valued) assignment.
While a lot of research is devoted to exact algorithms or optimal approximation
ratios in APX-hard cases (see~\cite{JonssonN08,JeavonsKZ14,Makarychev17:survey} for surveys),
we seek the most general conditions that allow to obtain a polynomial-time approximation scheme (PTAS).

Baker~\cite{Baker94:jacm} gave an elegant method (sometimes known as the \emph{shifting} or \emph{layering technique}) for constructing polynomial-time approximation schemes (PTASes) which applies to many such problems, with the condition that the input instance's graph (the Gaifman graph) is ``sparse''.
This was initially presented for planar graphs, but it is known that similar structural properties are exhibited by all proper minor-closed graph classes~\cite{Grohe03,DeVos+04,Demaine05:focs} and beyond: e.g.\ graphs embeddable in a fixed surface with few intersections per edge~\cite{Marathe97,Grigoriev2007}, or sparse unit ball intersection graphs in few dimensions~\cite{Hunt98}
(but not e.g.\ 3-regular expanders: bounded degree is not sufficient to get a
PTAS even for Independent Set~\cite{Berman99:icalp}).
Dvo\v{r}\'ak~\cite{Dvorak16} defined \emph{fractionally-treewidth-fragile} classes
--- a natural generalisation of earlier sparsity conditions --- which encompasses all these examples.
A~class of graphs is fractionally-treewidth-fragile if one can remove vertices in a randomised way so that each vertex is removed with arbitrarily small probability $\eps$, but the treewidth after removal is always bounded, the bound depending on $\eps$ only.
He showed that if this notion of sparsity can be efficiently certified in a class of graphs, then this suffices to guarantee a PTAS, at least for a few problems such as Weighted Maximum Independent Set.
On the other hand it is not known whether this suffices for Minimum Vertex Cover, for example.

To remedy this, Dvo\v{r}\'ak~\cite{Dvorak20:soda} later defined \emph{Baker classes} and proved that (an effective version of) this condition suffices to provide a PTAS
to all monotone optimisation problems expressible in first-order logic (including of course Vertex Cover).
Very roughly, a class of graphs is \emph{Baker} if one can reduce each graph in it to the empty graph by a bounded number of the following steps: either remove a single vertex, or select a breadth-first-search layering and recurse into all subgraphs that can be induced by a few consecutive layers.
Dvo\v{r}\'ak proved that the family of Baker classes still includes all the examples discussed above;
on the other hand, it is strictly included in the family of fractionally-treewidth-fragile classes (and hence less general)~\cite{DvorakPersonal}.
It is worth mentioning that proper minor-closed graph classes can be shown to be
Baker (and fractionally-treewidth-fragile) relatively easily, without using the
Graph Minor Structure theorem, in contrast to the earlier, less general
definitions (see~\cite{Dvorak20:soda} for details).

In order to provide a PTAS for a class of general-valued CSPs, \emph{a sparsity
condition is not enough}:
we also need to restrict what types of constraints can be used in an instance. 
Otherwise, even if the values to be optimised are trivial, either 0 or infinity,
one could use the crisp constraints to express 3-Colouring, which is NP-hard
even on planar graphs of bounded degree~\cite{Dailey80}. In fact, as long as all
crisp constraints are available, for any possible restriction on Gaifman graphs,
either the restriction implies bounded treewidth, making the problem exactly
solvable, or it is hard to decide whether the optimum is zero or infinite, by a
result of Grohe, Schwentick, and Segoufin~\cite{GroheSS01}.
We will hence require a condition which ensures that one can easily decide whether a feasible solution (of finite value) exists.
This usually takes the form of a monotonicity condition.

On the other hand, \emph{some sparsity condition is also necessary}: on
general Gaifman graphs, there is no restriction of constraint types that
would result in a general-valued CSP that admits a PTAS but is not solvable exactly in
polynomial time.\footnote{This follows from the NP-hardness
result of Kozik and Ochremiak~\cite{Kozik15:icalp}, which actually shows
APX-hardness; for earlier, explicit APX-hardness results for CSPs see, e.g.,~\cite{Jonsson08:sicomp,Jonsson09:tcs}.
However, we remark non-trivial PTAS examples are known for ``surjective'' maximisation finite-valued CSPs~\cite{FullaUZ19}.}
In this sense our work follows the line of ``uniform'' or ``hybrid'' CSPs, which
include restrictions on both the input's Gaifman graph (\emph{left-hand side}
restrictions)  and on the types of constraints (\emph{right-hand side} restrictions); see~\cite{CooperZ17} for a survey.
However, unlike that line of work, we look for PTASes instead of exact solvability, which also lets us go well beyond planar graphs and beyond very specialised algebraic algorithms.

\subsection{Related work}
The exact solvability of general-valued CSPs has
been characterised for left-hand side restrictions~\cite{crz22:sicomp} (tractable cases are precisely classes that have bounded treewidth, up to a certain notion of homomorphic equivalence) and right-hand side
restrictions~\cite{Kolmogorov17:sicomp} (tractable cases are precisely
delineated by certain algebraic properties); both results include the case where infinite values are allowed.

As discussed above, there are no PTASes for general-valued CSPs with only left-hand side or only right-hand side restrictions, beyond exactly solvable cases.
In fact Khanna et al.~\cite{Khanna01:sicomp}, in their work on Min-Ones and Max-Ones with right-hand side restriction, remark that ``Our framework lacks such phenomena as PTAS'' and discuss left-hand side restrictions as an interesting avenue for future work for that reason.
Similarly~\cite{Jonsson08:sicomp} and~\cite{JonssonN08} ask in the context of
right-hand side restricted Min-Solution and Max-Solution problems: ``Under which restrictions on variable scopes does Max Sol admit a PTAS?''. 

Very recently, PTASes for left-hand side restricted Max-CSP without crisp constraints, such as Max-Cut, have been studied by Romero, Wrochna, and \v{Z}ivn\'y~\cite{rwz21:soda}. 
More generally, they consider so-called finite-valued CSPs, where the only right-hand side restriction is having finite, non-negative values.
They showed a PTAS is possible for every fractionally-treewidth-fragile class of Gaifman graphs.
In fact the algorithm is simply the Sherali-Adams linear programming relaxation (with a growing number of levels giving a better and better approximation),
which is oblivious to the graph structure and does not require it to be efficiently certified in any way.

As for constant-factor approximations, Raghavendra's celebrated result gave the
best approximation ratio, assuming the Unique Games Conjecture of Khot~\cite{Khot02stoc}, for all
right-hand side restricted Max-CSPs (and also finite-valued
CSPs)~\cite{Raghavendra08:everycsp}.
Analogous results for monotone Strict-CSPs were obtained by Kumar et
al.~\cite{Kumar11:soda}. Constant-factor approximation algorithms have been
established for right-hand side restricted \MinCostHom on special graphs by
Hell, Mastrolilli, Nevisi, and Rafiey~\cite{Hell12:esa-approximation}, and for
all graphs and some digraphs by Rafiey, Rafiey, and Santos~\cite{Rafiey19:icalp}.

\subsection{Our results}
As in most research on constraint satisfaction, our main motivation is
to understand the mathematical structure that governs efficient computation and,
if at all possible, to obtain complexity classifications of large fragments of
CSPs. In this paper, the goal was to push the tractability frontier of
general-valued CSPs that admit polynomial-time approximation schemes.
In particular, we try to understand what right-hand side restrictions make
Baker's technique possible, tentatively answering the aforementioned questions
from~\cite{Khanna01:sicomp,Jonsson08:sicomp,JonssonN08}. We show that the most
general results known for Vertex Cover and Independent Set (PTASes on all Baker
or fractionally-treewidth-fragile classes, respectively) can be extended to any
to general-valued CSPs with a certain monotonicity restriction, and even further.

To clearly separate left-hand side and right-hand side restriction,
it is convenient to phrase a general-valued CSP (VCSP) as the problem of optimising the
value of a function between two valued structures.
Precise definitions are given in \cref{sec:prelims}.
Briefly, a \emph{valued structure} $\A$ consists of a (finite) domain $A$ and a collection of functions $f^\A \colon A^n \to \QQ \cup \{\pm \infty\}$, indexed by symbols $f$ belonging to a set of symbols $\sigma$ called a signature.
For two (finite) structures $\A,\C$, the \emph{value} of an assignment $h \colon A \to C$
is an expression of the form $\sum f^\A(\vx) f^\C(h(\vx))$.
We will be seeking to find either the minimum or maximum value over all assignments,
denoted $\minval(\A,\C)$ and $\maxval(\A,\C)$ respectively.
\emph{Feasible} assignments are those of finite value.
The reader should think of the left-hand side structure $\A$ as of a set of variables $A$ together with weighted constraint scopes: for $\vx \in A^n$, $f^\A(\vx) = w \neq 0$ means that the instance applies the constraint ``$f$'' to variables in $\vx$ with weight $w$. 
The right-hand side structure $\C$ encodes the alphabet $C$ (to which an assignment $h$ maps each variable) and the collection of available constraints, which could be arbitrary $\QQ \cup \{\pm \infty\}$-valued cost functions in general.
An instance of the VCSP is a pair $(\A,\C)$; its Gaifman graph, denoted
by $\Gaifman(\A)$, is a graph whose vertex set is the domain $A$ with edges between two vertices that occur together in a constraint of non-zero weight.

\subsection*{Minimum Solution}
For minimisation, we first consider $(\QQ_{\geq 0} \cup \{\infty\})$-valued right-hand side structures $\C$, in which the sets of zero-valued tuples and finite-valued tuples are anti-monotone, in the following sense.
There is a total order $\leq_\top$ on $C$, and for all tuples $\vx, \vy \in C^n$ with $\vx \leq_\top \vy$ (coordinate-wise) we have that for all non-unary function symbols $f$ of $\C$:
\begin{itemize}
	\item $f^\C(\vx) < \infty$ implies $f^\C(\vy) < \infty$, and
	\item $f^\C(\vx) = 0$ implies $f^\C(\vy) = 0$.
\end{itemize}
Intuitively, larger tuples are more feasible.
We call valued structures $\C$ satisfying this condition $\MinSol$ structures.
We define $\MinSol_\cG$ to be the general-valued CSP restricted to instances
$(\A,\C)$ where $\A$ is a $\QQ_{\geq 0}$-valued structured with $\Gaifman(\A) \in \cG$ and $\C$ is a \MinSol structure.

For example, Weighted Minimum Vertex Cover is equivalent to the $\MinSol$ case where 
$\C$ is the structure with domain $\{0,1\}$ and $0\leq 1$ and
with a 2-ary cost function $f^\C(0,0)=\infty$, $f^\C(1,0)=f^\C(0,1)=f^\C(1,1)=0$,
and a unary cost function $u^\C(0) = 0$, $u^\C(1) = 1$.

We
show that $\MinSol_\cG$ admits a PTAS for all graph classes $\cG$ that are efficiently Baker.
(Dvo\v{r}\'{a}k's definition is somewhat involved but we give an exposition in \cref{sec:baker}).
As discussed above, this captures essentially all graph classes where a version of Baker's technique is known to apply (including excluded-minor classes and more),
except for fractionally-treewidth-fragile classes.
We remark that already the very special case of Minimum Vertex Cover is not known to admit a PTAS on fractionally-treewidth-fragile classes.

Simultaneously, our results are less restrictive on the right-hand side,
as unlike in earlier work such as the framework of \StrictCSP of~\cite{Kumar11:soda},
we allow arbitrary values strictly between $0$ and $\infty$ (not only on unary constraints).
Once we realise this is possible, however, the algorithm turns out to be a rather standard application of Baker's technique: the only difference is that we increase the number of layers to account for the maximum ratio between finite, positive values (which is a constant depending on values of $\C$ only).

The main novelty in our work is establishing the existence of a
PTAS under a weaker assumption on the right-hand side structure $\C$ -- we only
require that $\C$ should be a diagonalisable structure. (As we will
show in \cref{lem:minsol-dis}, all
$\MinSol$ structures are diagonalisable and thus our result establishes a PTAS
for $\MinSol$ structures as a special case.)
Diagonalisability is a notion derived from the work of Brightwell and
Winkler~\cite{BrightwellW00} in the case of graphs and Brice{\~n}o, Bulatov,
Dalmau, and Larose~\cite{Briceno21:jctb} in the case of 
relational structures (which are more general than graphs).
The precise definition of diagonalisability is technical and can be found
in \cref{sec:dism}.
For relational structures, one characterisation is that a structure $\C$ is diagonalisable if and only if
the two projection homomorphisms $\pi_1, \pi_2 \colon \C \times \C \to \C$ (defined as $\pi_i(x_1,x_2) = x_i$)
are connected by some sequence of homomorphisms $\psi \colon \C \times \C \to \C$ such that consecutive homomorphisms in the sequence differ at only one vertex, and all the homomorphisms in the sequence are idempotent (meaning $\psi(x,x)=x$).
This turns out to be equivalent to saying that for all structures $\A$, the set of all homomorphisms from $\A$ to $\C$ is connected in a similar sense.
A few other characterisations connect diagonalisability to statistical physics via ``mixing'' properties.
Diagonalisability is also equivalent to finite duality (the existence of finitely many obstructions to having a homomorphism into $\C$),
a notion important to the study of CSPs via logic~\cite{BulatovKL08}.
For these and many other equivalent definitions of diagonalisability, cf.~\cite[Corollary~6.3 and Theorem~3.6]{Briceno21:jctb} with $J=V(H)$.

Our main result for minimisation 
(proved in \cref{sec:baker})
is an approximation scheme for instances $(\A,\C)$ where $\A$ comes from a Baker class and $\C$ is diagonalisable.
One should think of the functions $f_1$ and $f_2$ as polynomials depending on the size of the graph $\Gaifman(\A)$.

\newcommand{\StateMinSolBaker}{
Let $\cG$ be an $(f_1,f_2)$-efficiently Baker class.
Then, for any $\eps>0$ and any instance $(\A,\C)$ of general-valued CSP
where $\A$ is a $\QQ_{\geq 0}$-valued structured with $\Gaifman(\A) \in \cG$ and $\C$ is diagonalisable, we can find a solution of value
at most $(1+\eps) \minval(\A,\C)$ in time $f_1(\lvert\A\rvert) + f_2(c\lvert\A\rvert) \cdot c^{1/\eps}$
where $c$ depends on $\C$ and $\cG$ only.
}
\begin{theorem}\label{thm:minhomBaker}
\StateMinSolBaker%
\end{theorem}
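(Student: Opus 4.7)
\medskip

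\noindent\textbf{Proof plan.}
The approach is to combine Baker's shifting technique with a ``gluing'' operation derived from diagonalisability. We proceed by induction on the Baker depth of $\cG$, using the $(f_1,f_2)$-efficient certification to obtain, in time $f_1(|\A|)$, a witness of the decomposition (either a single vertex to delete or a BFS layering). In the base case $\Gaifman(\A)$ has bounded treewidth, so $(\A,\C)$ is solved exactly by standard tree-decomposition dynamic programming in time $f_2(|\A|)\cdot c$, for a constant $c$ depending on $\C$ and the treewidth bound. Vertex deletion is handled by recursing on $\A-v$ and optimizing over the (constant many) choices of $h(v)$.

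For the layering step, fix a parameter $k = k(\eps,\C) = \Theta(1/\eps)$. The Baker layering of $\Gaifman(\A)$ is $L_0, L_1, \ldots$; for each offset $r \in \{0,\ldots,k-1\}$, removing the ``cut vertices'' $C_r := \bigcup_{i \equiv r \,(\mathrm{mod}\, k)} L_i$ decomposes the instance into slabs of at most $k$ consecutive layers, each of strictly smaller Baker depth. We recurse on each slab to obtain partial solutions. The heart of the proof is extending these to a global feasible assignment without blowing up the cost.

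As a warmup (the $\MinSol$ case), we simply assign the top element $\top$ to every vertex of $C_r$: the crisp constraints are satisfied by monotonicity and the extra unary cost is at most $u^\C(\top)\cdot|C_r|$. Averaging over $r$, the set $|C_r|$ contributes $1/k$ of the total, so $k$ chosen appropriately yields a $(1+\eps)$-approximation. In the diagonalisable case no top element is available, and instead we exploit the dismantling of $\C^2$ to its diagonal $\Delta_\C$. This dismantling provides a bounded-length folding sequence that, read coordinate-wise, reconciles two homomorphisms differing only on a bounded neighborhood. We therefore enlarge each slab by $D_\C$ overlap layers on each side (where $D_\C$ is the dismantling depth), recurse to get partial solutions on the enlarged slabs, and merge them along the overlaps by applying the folds. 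Each fold modifies only a bounded neighborhood of the cut, so the reconciliation cost should be bounded by $O_\C(1)\cdot\val(h^*|_{\text{overlap}})$ for an optimal $h^*$; averaging over $r$ then bounds this by $O(1/k)\cdot\minval(\A,\C)$, giving the desired ratio.

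The principal obstacle is making this reconciliation precise: translating the abstract dismantling property (equivalently, finite duality, or connectivity of the hom-complex) into an algorithmic merging operation with a quantitative cost guarantee. Specifically, we need that if two partial homomorphisms agree outside a small region near the cut, there is a single global homomorphism whose cost exceeds the better of the two by at most a constant (depending on $\C$) times the optimum restricted to that region. Establishing this quantitative version of dismantlability, uniformly for all diagonalisable $\C$, is where the main new technical work lies; the remaining bookkeeping -- choosing $k$ to balance the $1/k$ overhead against the recursion depth, and propagating constants through the induction -- is a routine variation on Dvo\v{r}\'ak's original Baker argument.
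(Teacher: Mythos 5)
Your high-level skeleton (Baker-game recursion, branching on deleted vertices, shifted layerings, merging via diagonalisability) matches the paper's, but the proposal has a genuine gap exactly where you flag it: the ``quantitative version of dismantlability'' that you say ``is where the main new technical work lies'' is the heart of the theorem, and you do not supply it. The paper resolves it as follows: by \cref{thm:diag}, diagonalisability yields a path $\pi_1=\psi_1,\dots,\psi_\ell=\pi_2$ of \emph{adjacent idempotent} homomorphisms $\C^2\to\C$, and adjacency is already a quantitative statement (\cref{eq:adjM}): there is a constant $M$ such that any tuple whose coordinates are chosen from $\{\psi_s(x_i),\psi_{s+1}(x_i)\}$ has value at most $M\cdot f^{\C^2}(\vx,\vx')$. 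The merge is then explicit: blocks overlap in $\ell$ layers, and on overlap layer $s$ one sets $h'(x)=\psi_s\bigl(h_j(x),h_{j+1}(x)\bigr)$; constraints inside the overlap cost at most $M\bigl(\fC(h_j(\vx))+\fC(h_{j+1}(\vx))\bigr)$, and this is pre-paid by amplifying the overlap weights by $M$ (the structure $\A^+$) before solving each block, so the loss is charged to the optimum's mass on the overlap, which a shift argument makes at most a $1/k$ fraction. Note also that the reconciliation cost is \emph{not} bounded against $\val(h^*|_{\mathrm{overlap}})$ directly, as you propose, but against the computed solutions $h_j,h_{j+1}$ on the overlap, and only then compared to $h^*$ via optimality of $h_j$ on the amplified block; and idempotence of the $\psi_s$ is needed so that the merge preserves the partial assignment $\rho$ fixed on deleted (apex) vertices --- an interaction your plan does not address.

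Two further points. Your $\MinSol$ warmup is unsound as a multiplicative approximation: assigning a top element to all cut vertices incurs unary cost $u^\C(\top)\cdot|C_r|$ and possibly positive finite binary cost at the cut, and neither is bounded by $\eps\cdot\minval(\A,\C)$ in general (the optimum can even be $0$ while $u^\C(\top)>0$; the $\MinSol$ monotonicity conditions constrain only non-unary symbols). The paper instead handles $\MinSol$ by proving it is diagonalisable with a path of length $3$ (\cref{lem:minsol-dis}) and invoking the general argument. Finally, the error accumulation over the recursion needs care: the paper lets both the number of shifts $k$ and the interval bound $r(t)$ grow with the depth $t$ (so that the per-level loss is $e^{\eps/2^{t+1}}$ and the product telescopes to $e^{\eps}$); your ``routine bookkeeping'' remark glosses over this, but it is a standard, fixable part of Dvo\v{r}\'ak-style arguments, unlike the merging step above.
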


Here the constant $c$ depends polynomially on $\lvert \C \rvert$ and exponentially on the maximum ratio between certain finite positive values of $\C$.
Since every class of graphs that excludes a minor is
$(\BigO{n^2},\BigO{n})$-efficiently Baker~\cite[Theorem~2.1]{Dvorak20:soda}, \cref{thm:minhomBaker} in fact gives an EPTAS on such classes for any fixed diagonalisable structure $\C$.

Intuitively, diagonalisability allows to interpolate between any two homomorphisms, and we show this gives a natural way to combine partial solutions in the way needed in Baker's technique (generalising the simple combination used for Vertex Cover: taking the set-theoretic sum of solutions).
This proof (Theorem~\ref{thm:minhom}), which is an entirely new connection between diagonalisability and approximation, is our main contribution.

\subsection*{Maximum Solution}
For maximisation, we extend the results of~\cite{rwz21:soda},
which restricted the right-hand side $\C$ to be $\QQ_{\geq 0}$-valued.
We additionally allow $-\infty$ values,
but the set of tuples $\vy \in C^n$ with $f^\C(\vy) = -\infty$
is restricted to be monotone in the following very weak sense.
There is an element $c_\bot \in C$ such that
whenever $\vy$ is feasible ($f^\C(\vy) \neq -\infty$)
and $\vy'$ is a tuple obtained from $\vy$ by replacing some of its elements with $c_\bot$, 
then $\vy'$ is still feasible ($f^\C(\vy') \neq -\infty$).

We call structures $\C$ satisfying this condition $\MaxSol$ structures
and we define $\MaxSol_\cG$ to be the general-valued CSP restricted to instances
$(\A,\C)$ where  $\A$ is a $\QQ_{\geq 0}$-valued structured with  $\Gaifman(\A) \in \cG$ and $\C$ is a \MaxSol structure.

For example, Weighted Maximum Independent Set is equivalent to the $\MaxSol$ case where 
$\C$ is the structure with domain $\{0,1\}$,
with a 2-ary function $f^\C(1,1)=-\infty$, $f^\C(0,0)=f^\C(1,0)=f^\C(0,1)=0$,
and a unary function $u^\C(0) = 0$, $u^\C(1) = 1$ (so $c_\bot=0$).

Our main result for maximisation 
(proved in \cref{sec:maxsol}) is the following.

\begin{theorem}\label{thm:main-maxsol}
  Let $\cG$ be a class of graphs that is fractionally-treewidth-fragile.
  Then $\MaxSol_\cG$ admits a PTAS\@.
  
  More precisely, for all $\eps>0$, there is an algorithm that given $(\A,\C)$,
  outputs a value between $\maxval(\A,\C)$ and $(1+\eps) \cdot \maxval(\A,\C)$
  in time ${(|\A| + |\C|)}^{k(\eps)}$, where $k(\eps)$ is a function depending on $\cG$ only.\footnote{
    If $\cG$ is fractionally-$\tw$-fragile with rate $k(\eps)$ (Definition~\ref{def:fragile}),
    then the exponent in the running time is $\BigO{k(\Omega(\eps))}$.
    See~\cite{DvorakS20} for a fragility rates of various graph classes.
  }
\end{theorem}

The algorithm in \cref{thm:main-maxsol} does nothing more than solve a
$\Theta(k(\eps))$-th level of the Sherali-Adams linear programming relaxation.
This allows the algorithm to be oblivious to the graph structure, i.e.\ we do not assume that the fractional-treewidth-fragility of $\cG$ can be efficiently certified.
Thus the left-hand side restriction on Gaifman graphs is the most general for which a PTAS is known;
as discussed earlier, it includes excluded-minor classes and more. 
In fact similarly to~\cite{rwz21:soda}, we conjecture that $\MaxSol_\cG$ does not admit a PTAS for any $\cG$ that is not fractionally-treewidth-fragile.
Since $\MaxSol_\cG$ is strictly more general (by allowing negative infinite values), this conjecture might be easier to prove than the one in~\cite{rwz21:soda}.

On the other hand, this approach does not give an EPTAS even when $\C$ is fixed (i.e.\ the exponent of $|\A|$ increases with $\eps$), and it does not construct an assignment --- it only approximates the optimum value.
In contrast, given a class of graphs $\cG$ for which fractional-treewidth-fragility can be efficiently certified (which includes essentially all known examples),
it is straightforward to construct solutions to $\MaxSol_\cG$ of value at least $(1-\eps) \cdot \maxval(\A,\C)$ in time $\lvert\A\rvert \cdot {\lvert\C\rvert}^{k(\eps)}$.

Our main contribution in proving \cref{thm:main-maxsol} is finding the right analogues of the definitions from~\cite{rwz21:soda} -- a notion of ``closeness'' of structures, a dual notion that certifies this closeness with concrete mappings (a distribution of ``partial homomorphisms'', see Section~\ref{sec:maxsol}), and the proof of their equivalence (Lemma~\ref{lem:duality}).
In particular, while the name ``partial homomorphism'' may sound deceptively simple, we found that pin-pointing their definition (in the context of MaxSol) proved to be a surprisingly intricate balancing act.

We complement \cref{thm:main-maxsol} with simple constructions which show that it is impossible to extend other results of~\cite{rwz21:soda} from the setting of purely optimisation Max-CSPs to the setting of general-valued CSPs, which include crisp constraints.
In~\cite{rwz21:soda} the notion of \emph{pliability} is defined (for Max-CSPs), which is a left-hand side restriction that takes the whole structure~$\A$ into account, not only its Gaifman graph, as done in this introduction so far;
this allowed the authors of~\cite{rwz21:soda} to show that the same framework applies not only to sparse, fractionally-treewidth-fragile instances of Max-CSPS, but also to dense structures.
We define an analogous notion of \emph{strong pliability} and show in \cref{lem:tract}, similarly
to~\cite{rwz21:soda}, the existence of a PTAS (for general-valued CSPs) under the strong pliability
assumption on the left-hand side structure, which takes the whole structure
$\A$ into account, not only its Gaifman graph. (Thus this is a more general
tractability result than \cref{thm:main-maxsol}.) 
However, in \cref{sec:dense} 
we show that even the simplest class of dense structures, namely the class of
$\{0,1\}$-valued cliques, does not satisfy strong pliability.
In fact, it is easy to show (cf.~\cref{sec:dense}) that the $\MaxSol$ problem is hard to approximate even when the left-hand side structures are restricted to cliques.

\paragraph*{Paper organisation}
\cref{sec:prelims} introduces basic notations and defines the studied computational problems.
The main result for minimisation, \cref{thm:minhomBaker}, is technical and proved 
in \cref{sec:baker}.
In \cref{sec:minsol}, we present the main ideas in the special case of
planar structures. The main result for maximisation, \cref{thm:main-maxsol}, is proved
in \cref{sec:maxsol}. Some of the proofs are deferred 
to \cref{sec:overcast-sa,sec:SA-tw,sec:dense}.

\section{Preliminaries}\label{sec:prelims}
For an integer $k$, we denote by $[k]$ the set $\{1,\ldots,k\}$. For a tuple
$\vx$, we denote by $x_i$ its $i$-th coordinate and by $\toset{\vx}$ the set of
elements appearing in $\vx$. For two tuples $\vx$ and $\vy$ of length $n$, we
write $(\vx,\vy)$ as a shorthand for $((x_1,y_1),(x_2,y_2),\dots,(x_n,y_n))$.
For a tuple $\vx$ of length $n$ and a map $h$, we denote by $h(\vx)$ the
coordinate-wise application of $h$; i.e., $h(\vx)=(h(x_1),\ldots,h(x_n))$.

\paragraph*{General-valued CSPs}
A \emph{signature} is a finite set $\sigma$ of (function) symbols such as $f$, each with
a specified arity $\ar(f)$. 
For a set of values $\Omega \subseteq\QQ \cup \{-\infty,+\infty\}$,
an \emph{$\Omega$-valued structure} $\A$ over a signature $\sigma$ (or
$\sigma$-structure, for short) is a finite domain $A$ together with
a function $f^{\A}: A^{\ar(f)} \to \Omega$ for each symbol $f \in
\sigma$. 
We denote by $A,B,C,\dots$ the domains of structures $\A,\B,\C,\dots$.

We define $\tup(\A)$ to be the set of all pairs $(f,\tuple{x})$ such that $f \in
\sigma$ and $\tuple{x} \in A^{\ar(f)}$;
 and $\tup_{>0}(\A)$ to be the set of all pairs $(f,\tuple{x})\in\tup(\A)$ with $f^{\A}(\tuple{x})>0$.
 
We assume a straightforward \emph{table encoding} of structures: the interpretation $f^{\A}$ of a symbol $f$ in a structure $\A$ is encoded as a collection of triples $\{ (f, \tuple{x}, f^{\A}(\tuple{x})) \mid (f,\tuple{x}) \in \tup(\A)\}$. Thus, the size of a $\sigma$-structure $\A$ is roughly
\[
  |\A|=|\sigma|+|A|+\sum_{(f,\tuple{x})\in\tup(\A)} \log{|\sigma|}+\ar(f)\log{|A|}+|enc(f^{\A}(\tuple{x}))|
\]
where $enc(\cdot)$ denotes a reasonable encoding for elements of $\QQ$. 
 
We consider the following computational problem.
 
\begin{definition}
   An instance of the \emph{general-valued CSP} (VCSP) consists of an ordered pair of $\sigma$-structures $(\A,\C)$.
   For a mapping $h\colon A \to C$, we define the value of $h$ to be
  \begin{align*}
 	\val(h) = \sum_{(f,\vx) \in \tup(\A)} \fA(\vx) \fC(h(\vx)).
  \end{align*}
  The goal is to find the minimum or maximum value over all possible mappings
  $h\colon A\to C$,
  denoted $\minval(\A,\C)$ or $\maxval(\A,\C)$, respectively.
\end{definition}
 
On the left-hand side we will only use $\Qpos$-valued structures, with letters $\A,\B$;
on the right-hand side we will only use $\Qposinf$ or $\Qinf$-valued structures, respectively, for minimisation and maximisation, with letters $\C,\D$.

For $\lambda\geq 0$ we write $\lambda\A$ for the \emph{rescaled} $\sigma$-structure with domain $A$ and $f^{\lambda\A}(\tuple{x}) := \lambda f^{\A}(\tuple{x})$, for $(f,\tuple{x}) \in \tup(\A)$.
For a $\sigma$-structure $\A$ and subset of the domain $X \subseteq A$, we
define $\A[X]$ to be the restriction of $\A$ to $X$. That is, $\A[X]$ is a
$\sigma$-structure over the domain $X$, and $f^{\A[X]}(\vx) = \fA(\vx)$ for each
$f \in \sigma$ and $\vx \in X^{\ar(f)}$.

Following the influential work on decision CSPs by Grohe, Schwentick, and
Segoufin~\cite{GroheSS01}, and Grohe~\cite{Grohe07:jacm}, we will
focus on fragments of the VCSP parametrised by the class of
left-hand side structures (or their underlying class of
graphs). Given a $\sigma$-structure $\A$, the \emph{Gaifman graph} (or
\emph{primal graph}), denoted by $\Gaifman(\A)$, is the graph whose vertex set
is the domain $A$, and whose edges are the pairs $\{u,v\}$ for which there is a
tuple $\tuple{x}$ and a symbol $f\in \sigma$ such that $u,v$ appear in
$\tuple{x}$ and $f^{\A}(\tuple{x})>0$. 

For a graph parameter $\p$ and a structure $\A$, we define
$\p(\A)\coloneqq\p(\Gaifman(\A))$ to be the parameter of the Gaifman graph of
$\A$. In particular, the \emph{treewidth} of $\A$ is defined as
$\tw(\A)\coloneqq\tw(\Gaifman(\A))$. (We will only use treewidth and excluded
minors as black-boxes and thus will not need their definitions. The reader is referred to Diestel's
textbook for details~\cite{Diestel10:graph}.)

\paragraph*{Relational structures}
A \emph{relational $\sigma$-structure} $\C$ includes for each symbol
$f\in\sigma$ a relation $\fC \subseteq C^{\ar(f)}$. We will view relational
structures as $\set{0,\infty}$-valued structures by associating each function
$\fC:C^{\ar(f)} \to \set{0,\infty}$ to the relation given by the zero-valued
tuples $\set{\vx \mid \fC(\vx)=0}$.
A \emph{homomorphism} from a relational $\sigma$-structure~$\C$ to a relational
$\sigma$-structure~$\D$ is a map $\psi:C\to D$ that satisfies, for every
$f\in\sigma$ and every $\vx\in C^{\ar(f)}$, $f^{\D}(\psi(\vx))\leq f^{\C}(\vx)$.

For an $n$-ary function $f$, we denote by $\Feas(f)$ and
$\Opt(f)$ the $n$-ary relations defined by $\Feas(f)=\{\vx\mid f(\vx)<\infty\}$
and $\Opt(f)=\{\vx\mid f(\vx)=0\}$, respectively.
Let $\C$ be a $\sigma$-structure. The relational $\sigma$-structure $\Feas(\C)$ contains, for each $f\in\sigma$, the relation $\Feas(\fC)$; similarly, the relational
$\sigma$-structure $\Opt(\C)$ contains, for every $f\in\sigma$, the relation
$\Opt(\fC)$.

Our results will be concerned with two particular types of right-hand side structures.

\subsection*{Maximum Solution}

For the following definition, recall the example of Weighted Maximum
Independent Set from Section~\ref{sec:intro}. One should think of the element
$c$ in this context as not selecting a vertex in an independent set; the partial
order on $C^n$ then says that a subset of an independent set is also independent.

\begin{definition}[{$\poleq[c]$}]
For an element $c$ of a set $C$, we denote by $\poleq[c]$ the partial ordering on $C$
defined by $c \poleq[c] x$ and $x\poleq[c] x$ for all $x\in C$.
This induces a partial ordering on $C^n$ coordinate-wise:
we write $\vx\poleq[c]\vy$ for $\vx,\vy \in C^n$ if we can obtain $\vx$ from~$\vy$ by changing some (possibly none or all) of its coordinates to $c$.
\end{definition}

\begin{definition}[\MaxSol]\label{def:maxsol}
  Let $\sigma$ be a finite signature.
  A  $\sigma$-structure $\C$ is called a \emph{\MaxSol structure}
  if it is $(\Qinf)$-valued and
  there is an element $c_\bot \in C$ such
  that for all $f\in\sigma$, the following holds:
      whenever $f^{\C}(\vy) \geq 0$, we have $f^{\C}(\vx) \geq 0$, for all $\vx \poleq[c_\bot] \vy$ in $C^{\ar(f)}$.
  Equivalently, if a tuple $\vy$ has non-negative value (not $-\infty$),
  then changing some of its coordinates to $c_\bot$ still gives a non-negative
  value.
  To avoid clutter, we write $\poleq$ in place of $\poleq[c_\bot]$,
  with the choice of $c_\bot \in C$ implicit.

  We denote by $\MaxSol_\cG$ the restriction of the VCSP to instances $(\A,\C)$
  where $\A$ is a $\QQ_{\geq 0}$-valued structure with $\Gaifman(\A) \in \cG$
  and $\C$ is a \MaxSol structure.
\end{definition}

Observe that every $\Qpos$-valued structure is a \MaxSol structure;
thus $\MaxSol_\cG$ is more general than the restriction to $\Qpos$-valued
right-hand side structures, which is the problem considered in~\cite{rwz21:soda}.

\begin{example}
As explained in the introduction, \MaxSol structures can capture problems such as Weighted Maximum Independent Set.
Another example is finding a 3-colourable induced subgraph with the maximum number of edges:
take $\C$ with domain $C = \{R,G,B,\bot\}$ (representing red, green, blue, as well as a fourth element corresponding to vertices not selected into the induced subgraph) and a single symbol of arity two with values $f^\C(R,R)=f^\C(G,G)=f^\C(B,B)=-\infty$, $f^\C(x,y)=1$ for $x\neq y \in \{R,G,B\}$ and $f^\C(x,y)=0$ otherwise.
This extends to maximising the number of vertices (by introducing a unary relation), to weighted versions (by giving weights to vertices and edges of the left-hand-side structure $\A$),  and to finding a maximum induced substructure satisfying an arbitrary CSP.
\end{example}

\begin{remark}
  The ``downward monotone \StrictCSP'' from~\cite{Kumar11:soda} corresponds
  to \cref{def:maxsol} with some extra conditions. Firstly, there is a special
  unary symbol $u \in \sigma$ such that $u^{\C}$ is $\Qpos$ valued and all other symbols $f \in \sigma$ are $\{0,-\infty\}$-valued (hence they express ``strict'' constraints). Secondly, there is a total order on $C$, and for each symbol $f \in \sigma$ other than $u$, $f^\C$ is anti-monotone; in other words, lowering some coordinates of a tuple in $C^{\ar(f)}$ can not change its value from~$0$~to~$-\infty$. (Hence the minimum element plays the role of the bottom label $c_{\bot}\in C$.)
\end{remark}

\subsection*{Minimum Solution}

\begin{definition}[\MinSol]\label{def:minsol}
Let $\sigma$ be a finite signature.
A $\sigma$-structure $\C$ is called \MinSol if
it is $(\Qposinf)$-valued and
there is a total order $\leq_\top$ on $C$ such that:
for all $f\in \sigma$ with $\ar(f) > 1$ and 
all tuples $\vx, \vy \in C^n$ with $\vx \leq_\top \vy$ (coordinate-wise) we have:
\begin{itemize}
	\item $f^\C(\vx) < \infty$ implies $f^\C(\vy) < \infty$, and
	\item $f^\C(\vx) = 0$ implies $f^\C(\vy) = 0$.
\end{itemize}
  We denote by 
  $\MinSol_\cG$ the restriction of the VCSP to instances $(\A,\C)$
  where $\A$ is a $\QQ_{\geq 0}$-valued structure with $\Gaifman(\A) \in \cG$ and $\C$ is a \MinSol structure.
\end{definition}

\begin{remark}
The ``upward monotone \StrictCSP'' from~\cite{Kumar11:soda} corresponds to \cref{def:minsol} with the extra conditions that there is only one unary symbol $u$, $u^{\C}$ is monotone and injective, and all other cost functions $f^{\C}$ are $\{0,\infty\}$-valued (hence they express ``strict'' constraints).
\end{remark}

\begin{remark}
We observe that \emph{some} structure (such as a total order) on
the domain of a right-hand side \MinSol structure is needed: We show how
to encode 3-Colouring of planar graphs, which does not admit a
  PTAS (assuming P$\neq$NP).

Let $G$ be a planar graph. Let $\A$ be a structure with domain $V(G)$ over
the signature $\sigma = \set{u,f}$ of arities 1 and 2, respectively. Let
  $\uA(x)=1$ for all $x \in V(G)$, and $\fA(x,y) = 1$ if $\{x,y\} \in E(G)$ and $0$
otherwise. Let $\C$ be a right-hand side structure with domain
$C=\set{R,G,B,c_\top}$. Here we think of $R,G,B$ as three colours, and
$c_\top$ as a fourth extra colour we want to avoid using. We allow a
monochromatic $c_\top$ edge. Let $\uC(x) = 1$ for $x=c_\top$ and $0$
otherwise; $\fC(R,R)=\fC(G,G)=\fC(B,B)=\infty$, and $0$ for other pairs of
values (including $(c_\top, c_\top)$). If $G$ is 3-colourable then 
$\minval(\A, \C) = 0$. Otherwise, $\minval(\A, \C) \geq 1$.
Note that $\fC$ respects the partial order $\poleq[c_\top]$, but it does
not respect any total order on $C$. 
\end{remark}

\section{Minimisation on planar structures}\label{sec:minsol}
\subsection{Diagonalisability}\label{sec:dism}

Brice{\~n}o, Bulatov, Dalmau, and Larose defined the concepts of product
structure, dismantlability, adjacency, and
link graph for relational structures~\cite{Briceno21:jctb}. In this section, we
will extend these concepts to valued structures in a natural way. In particular, our
definitions (for structures) coincide with the definitions
in~\cite{Briceno21:jctb} (for relational structures) when viewed as $\set{0,\infty}$-valued structures.

Informally, we shall consider any two positive finite values to be basically equivalent, because we will be able to bound differences between them by a constant factor; so we shall consider an increase in value significant only if it increases from zero to positive or from finite to infinite.
For a structure $\C$, we say an element $a \in C$ is \emph{dominated} by an element $b \in C$ if we can always replace $a$ with $b$: for any mapping $h \colon A \to C$ (from any structure $\A$), assignments to $a$ can be changed to assignments to $b$ without increasing $\val(h)$ significantly (from zero to positive or from finite to infinite).
A structure $\C$ is \emph{diagonalisable} if in the product $\C \times \C$ (defined below), one can remove dominated elements one by one until only the diagonal $\{(c,c) \mid c \in C\}$ is left.
We will later see how this allows to ``blend in'' two different mappings $h_1, h_2$ from $\A$ to $\C$ by considering them together as a mapping to $\C^2$.

We now proceed with formal definitions.
Given two (valued) $\sigma$-structures $\C$ and $\D$, we call
$\psi:C \to D$ a \emph{homomorphism} from $\C$ to $\D$ if $\psi$ is a
homomorphism from $\Feas(\C)$ to $\Feas(\D)$ and from $\Opt(\C)$ to $\Opt(\D)$ (in other words, finite-valued tuples map to finite-valued tuples and zero-valued tuples map to zero-valued tuples).
It will be more convenient to consider both the $\Feas(\C)$ and $\Opt(\C)$
simultaneously. Thus with every structure $\C$ we will associate a relational
structure $\rel\C$, defined as follows. 

\begin{definition}\label{def:relval}
Let $\sigma$ be a valued signature. For any $f\in\sigma$, we denote by $f_1$ and $f_2$ two new relational symbols of the same arity as $f$.
Let $\C$ be a $\sigma$-structure and let $\sigma'=\bigcup_{f \in
\sigma}\set{f_1,f_2}$.
Define the
relational $\sigma'$-structure $\rel\C$ over the domain $C$ as follows: for each $f \in \sigma$, let $f_1^{\rel\C} = \Feas(f^\C) = \{\vx\mid f^\C(\vx)<\infty\}$ and $f_2^{\rel\C}  = \Opt(f^\C)=\{\vx\mid f^\C(\vx)=0\}$.
\end{definition}

We can now define the concepts of interest for structures $\C$ via the already
existing concepts for relational structures $\rel\C$
from~\cite{Briceno21:jctb}.
We use the following observation.

\begin{observation}
	For $x,y \in \QQ_{\geq 0} \cup \{\infty\}$, there exists $M > 0$ such that $y \leq M \cdot x$ if and only if:
	\begin{itemize}
	\item if $x < \infty$, then $y < \infty$, and
	\item if $x = 0$, then $y = 0$.
	\end{itemize}
\end{observation}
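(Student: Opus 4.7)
The plan is to prove this by elementary case analysis on the value of $x$, using only the natural conventions $M \cdot 0 = 0$ and $M \cdot \infty = \infty$ for $M > 0$. The observation is a qualitative reformulation of the quantitative inequality $y \leq M \cdot x$, whose purpose (in the next section) is to justify reducing the definition of a homomorphism of valued structures to the two relational structures $\Feas(\C)$ and $\Opt(\C)$ packaged in $\rel{\C}$.

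For the forward direction, I would fix $M > 0$ such that $y \leq M \cdot x$ and verify the two bullets in turn. If $x < \infty$, then $M \cdot x < \infty$ by the conventions, and so $y \leq M \cdot x < \infty$. If $x = 0$, then $M \cdot x = 0$, and since $y \in \QQ_{\geq 0} \cup \set{\infty}$ in particular $y \geq 0$, we conclude $y = 0$.

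For the backward direction, assume both bullets hold and split on the value of $x$. If $x = \infty$, then $M \cdot x = \infty$ for any $M > 0$, so $y \leq M \cdot x$ holds trivially (with e.g.\ $M = 1$). If $x = 0$, the second bullet yields $y = 0$, so $y \leq M \cdot x = 0$ for any $M > 0$. Finally, if $0 < x < \infty$, the first bullet gives $y < \infty$, i.e.\ $y \in \QQ_{\geq 0}$, and then choosing $M \coloneqq \max(1,\, y/x)$ makes $y \leq M \cdot x$ hold.

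There is no real obstacle: each case is one line, and the only point deserving a word of care is the convention $M \cdot 0 = 0$ and $M \cdot \infty = \infty$ (for $M > 0$), which is what makes both directions work uniformly and which should be stated explicitly at the start of the proof.
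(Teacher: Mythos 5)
Your case analysis is correct, and it is exactly the routine argument the paper leaves implicit (the Observation is stated without proof as a triviality). The only further point worth noting is the one the paper makes right after: in applications $M$ is later taken uniformly over all tuples of a finite structure, which is fine since one can take the maximum of finitely many such $M$'s.
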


Given $\sigma$-structures $\C$ and $\D$, we say that $\psi:C \to D$ is a \emph{homomorphism}
if $\psi$ is a homomorphism from $\rel\C$ to $\rel\D$. Equivalently, $\psi$ is
a homomorphism if there exists $M>0$ such that for all $(f,\vx)\in\tup(\C)$,
\[ \fD(\psi(\vx)) \leq M \cdot \fC(\vx). \]
Here we can use a uniform bound $M$ because we only work with finite structures;
it will be convenient to use this equivalent definition to keep track of the bound $M$.

Given $\sigma$-structures $\C$ and $\D$ we define the \emph{product structure} $\C\times\D$ as a $\sigma$-structure with domain $C \times D$ and for each $f\in\sigma$,
\[ f^{\C\times\D}((\vx,\vy)) =  \fC(\vx) + \fD(\vy). \]
Let $\pi_1(x,y)=x$ and $\pi_2(x,y)=y$ be the projections to the first and second
coordinate, respectively. Note that $\pi_1,\pi_2$ are homomorphisms from $\C^2$ to $\C$ for any $\C$.
See Figure~\ref{fig:product}.

\begin{figure}[h!]
\centering
\begin{tikzpicture}
\node[v] (x) at (-1,0) {};	
\node[v] (y) at (0,0) {};	
\node[v] (z) at (1,0) {};	
\draw[->] (y) --node[midway,above]{$0$} (x);
\draw[->] (y) --node[midway,above]{$3$} (z);
\draw[->] (z) to[looseness=20,in=60,out=120] node[midway,above]{$1$} (z);
\node at (0, -1) {$\mathbb{C}$};

\begin{scope}[shift={(4,3)}]
\node[v] (x1) at (1,0) {};	
\node[v] (y1) at (2,0) {};	
\node[v] (z1) at (3,0) {};	
\draw[->] (y1) --node[midway,above]{$0$} (x1);
\draw[->] (y1) --node[midway,above]{$3$} (z1);
\draw[->] (z1) to[looseness=20,in=60,out=120] node[midway,above]{$1$} (z1);

\node[v] (x2) at (0, -1) {};	
\node[v] (y2) at (0, -2) {};	
\node[v] (z2) at (0, -3) {};	
\draw[->] (y2) --node[midway,left]{$0$} (x2);
\draw[->] (y2) --node[midway,left]{$3$} (z2);
\draw[->] (z2) to[looseness=20,in=150,out=-150] node[midway,left]{$1$} (z2);

\node[v] (xx) at (1,-1) {};
\node[v,label={70:$a_1$}] (yx) at (2,-1) {};
\node[v,label={70:$a_2$}] (zx) at (3,-1) {};

\node[v] (xy) at (1,-2) {};
\node[v] (yy) at (2,-2) {};
\node[v,label={0:$a_3$}] (zy) at (3,-2) {};

\node[v] (xz) at (1,-3) {};
\node[v] (yz) at (2,-3) {};
\node[v] (zz) at (3,-3) {};

\draw[->] (yy) --node[midway,above]{$0$} (xx);
\draw[->] (yy) --node[midway,above]{$3$} (xz);
\draw[->] (yy) --node[midway,above]{$3$} (zx);
\draw[->] (yy) --node[midway,above]{$6$} (zz);

\draw[->] (yz) --node[midway,below]{$1$} (xz);
\draw[->] (yz) --node[midway,below]{$4$} (zz);

\draw[->] (zy) --node[midway,right]{$1$} (zx);
\draw[->] (zy) --node[midway,right]{$4$} (zz);

\draw[->] (zz)  to[looseness=20,in=-15,out=-60] node[midway,right]{$2$} (zz);

\node at (0, 0) {$\times$};
\node at (-0.1, -4) {$\mathbb{C}$};
\node at (2, -4) {$\mathbb{C}^2$};
\end{scope}
\end{tikzpicture}
\caption{Left: a digraph structure $\mathbb{C}$ (the signature has a single symbol of arity 2) with three vertices and three arcs (tuples) with finite values – the remaining arcs have infinite values and are not drawn.
Right: the product $\mathbb{C}\times \mathbb{C}$ (with three vertices $a_1,a_2,a_3$ distinguished for later).}
\label{fig:product}
\end{figure}

We say that $a \in C$ is \emph{dominated} by $b \in C$
if there is an $M > 0$ such that for all $(f,\vx)\in\tup(\C)$ with $x_i=a$, we have
\[ \fC(x_1,\dots,x_{i-1},b,x_{i+1},\dots) \leq M \cdot \fC(\vx). \]
We say that $a\in C$ is dominated in $\C$ if $a$ is dominated by $b \neq a$ for some $b\in C$.
A sequence of $\sigma$-structures $\C_0,\dots,\C_\ell$ is a \emph{dismantling
sequence} if there exists $a_i \in C_i$
such that $a_i$ is dominated in $\C_i$, and $\C_{i+1}$ is the substructure of
$\C_i$ induced by $C_i\setminus\set{a_i}$,
for $i\in\set{0,\dots,\ell-1}$.
In this case, we say that $\C_0$ dismantles to $\C_\ell$.
A structure $\C$ is \emph{diagonalisable} if $\C^2$ dismantles to the substructure induced by its diagonal  $\Delta(C^2)=\set{(c,c)  \mid c \in C}$.

\begin{example}
Consider $\C^2$ in Figure~\ref{fig:product}. Let $f$ be the unique symbol (of arity two) in the signature. Let $a_1,a_2,a_3$ be the vertices as drawn and let $b$ be the vertex of $\C^2$ with a loop of value 2. Then, for example, the value of the arc from $a_3$ to $b$ is 4, or more formally, $f^{\C}(a_3,b) = 4$. For the vertex $a_1$, all incident arcs have value $\infty$ (formally, $f^{\C}(a_1,x) = f^{\C}(x,a_1) = \infty$ for all $x \in V(\C^2)$), so it is dominated by every other vertex. 
The vertex $a_2$ is dominated by $b$ (with $M=4$).
After removing $a_2$, the vertex $a_3$ is dominated by $b$ as well (this is false before removing $a_2$, because $f^\C(a_3,a_2) =1$ while $f^\C(b,a_2) = \infty$, so we cannot guarantee $f^\C(b,x) \leq M \cdot f^\C(a_3,x)$).
Thus $\C^2, \C^2\setminus\{a_1\}, \C^2\setminus\{a_1,a_2\},\C^2\setminus\{a_1,a_2,a_3\}$ is a dismantling sequence.
Symmetrical vertices can be similarly dominated, hence $\C^2$ dismantles to $\Delta(C^2)$, meaning $\C$ is diagonalisable.
On the other hand, $\C$ is not a MinSol structure (there is no way to order the two peripheral vertices).
We refer to~\cite{BrightwellW00} for more examples of dismantlable graphs.
\end{example}

Homomorphisms $\psi,\phi$ from $\C$ to $\D$ are \emph{adjacent} if there exists $M>0$ such that for all $(f,\vx)\in\tup(\C)$ and $\vy \in D^{\ar(f)}$ with $y_i\in\set{\psi(x_i),\phi(x_i)}$, we have
\begin{equation}\label{eq:adjM}
	\fD(\vy) \leq M \cdot \fC(\vx).
\end{equation}
Thus $a$ is dominated by $b$ in $\C$ if and only if the function $s \colon C \to C\setminus\set{a}$ that maps $a$ to $b$ and everything else identically is a homomorphism from $\C$ to $\C$,
 and $s$ is adjacent to the identity homomorphism.
(This is stronger than just $s$ being a homomorphism, since $f^\C(a,a,a)=0$ implies not only $f^\C(b,b,b) = 0$, but also $f^\C(a,a,b)=0$, for example).
Note that adjacency is a symmetric but  not a transitive property.

Finally, for $\sigma$-structures $\C$ and $\D$, we define the \emph{link graph} $L(\C,\D)$ to be the simple graph whose vertices are the homomorphisms from $\C$ to $\D$, with edges between adjacent homomorphisms.

The following theorem was proved in~\cite[Theorem~3.6]{Briceno21:jctb} for
relational structures but the result easily extends to structures.

\begin{theorem}\label{thm:diag}
  Let $\C$ be a $\sigma$-structure. Then, the following are equivalent.
	\begin{itemize}
		\item $\C$ is diagonalisable;
		\item 	$\pi_1$ and $\pi_2$ are connected in $L(\C^2,\C)$ by a path of
      adjacent idempotent homomorphisms.\\
		(We say a function $\psi: C^2 \to C$ is idempotent if $\psi(c,c)=c$ for all $c \in C$.)
	\end{itemize}
\end{theorem}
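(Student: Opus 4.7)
The plan is to reduce~\cref{thm:diag} to its relational counterpart \cite[Theorem~3.6]{BricenoBDL19} by checking that each primitive notion appearing in the statement is preserved, in both directions, under the translation $\C \mapsto \rel{\C}$ of~\cref{def:relval}. Once this dictionary is in place, applying the relational theorem to $\rel{\C}$ yields the valued version with no further combinatorics.

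First I would build the dictionary for the basic notions. Using the observation displayed just before the definition of a homomorphism, a map $\psi \colon C \to D$ is a valued homomorphism $\C \to \D$ iff it is a relational homomorphism $\rel{\C} \to \rel{\D}$: the uniform multiplicative bound $\fD(\psi(\vx)) \leq M \cdot \fC(\vx)$ captures exactly the joint preservation of $\Feas$ and $\Opt$, and a uniform $M$ exists by finiteness. The same observation applied inside the dominance inequality shows that $a \in C$ is dominated by $b$ in $\C$ iff $a$ is dominated by $b$ in $\rel{\C}$; hence a dismantling sequence of $\C$ is literally a dismantling sequence of $\rel{\C}$. Applying the observation once more, this time inside~\eqref{eq:adjM}, one sees that two homomorphisms are adjacent in $L(\C,\D)$ iff they are adjacent in $L(\rel{\C},\rel{\D})$.

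Next I would check that products and diagonals commute with the translation. Since $f^{\C^2}(\vx,\vy) = \fC(\vx) + \fC(\vy)$ is finite (resp.\ zero) exactly when both summands are finite (resp.\ zero), the relational structures $\rel{\C^2}$ and $(\rel{\C})^2$ coincide on the common domain $C^2$, and under this identification the diagonal $\Delta(C^2)$ corresponds to the diagonal of $(\rel{\C})^2$. The projections $\pi_1,\pi_2 \colon C^2 \to C$ and the condition of being idempotent are purely set-theoretic, hence unchanged. Combining these facts with the previous paragraph, the link graph $L(\C^2,\C)$ equals $L((\rel{\C})^2,\rel{\C})$, and $\C$ is diagonalisable iff $\rel{\C}$ is, so the theorem for $\C$ reduces term-by-term to \cite[Theorem~3.6]{BricenoBDL19} applied to $\rel{\C}$.

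The only subtle point, and the main obstacle, is the very first step: the valued definitions are phrased through a single multiplicative constant $M$, whereas the relational definitions use two separate preservation conditions (for $\Feas$ and for $\Opt$). The preceding observation is exactly what equates the two formulations, and the passage to a uniform $M$ depends on the structures being finite; without that assumption the bound need not exist and the reduction would break down.
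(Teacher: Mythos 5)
Your proposal is correct and follows essentially the same route as the paper: both set up the dictionary (valued homomorphisms, domination/dismantling, adjacency, $\rel{\C^2}=\rel{\C}^2$, diagonal, idempotency) and then invoke \cite[Theorem~3.6]{BricenoBDL19} applied to $\mathbb{H}=\C$ and $J=\Delta(C)$, with the uniform bound $M$ justified by finiteness exactly as in the paper. The only cosmetic difference is that the paper phrases the final step via $J$-connectedness in the sense of \cite{BricenoBDL19} rather than assuming the relational theorem is stated directly in terms of idempotent homomorphisms, which is bookkeeping of the external reference rather than a different argument.
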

\begin{proof}
	This follows from the fact that $\rel{\C^2}=\rel\C^2$
	and that our definitions are the same as those of~\cite[Theorem~3.6]{Briceno21:jctb} applied to $\mathbb H=\C$ and $J=\Delta(C^2)$.
	Specifically a function $\phi: C \to C$ is a homomorphism from $\C$ to $\C$
	if and only if it is a homomorphism from $\rel\C$ to $\rel\C$. 
	Similarly, $a$ is dominated by $b$ in $\C^2$ if and only if
	$a$ is dominated by $b$ in $\rel{\C^2}=\rel\C^2$.
	Thus $\C$ is diagonalisable if and only if $\rel{\C}^2$ dismantles to its full diagonal (not just any subset of it).
	Further $\phi,\psi: C \to C$ are adjacent homomorphisms	from $\C$ to $\C$
	if and only if they are adjacent homomorphisms from $\rel\C$ to $\rel\C$.	
	Finally, $\pi_1, \pi_2$ are connected by a path of adjacent idempotent homomorphisms
	if and only if they are $J$-connected by any homomorphism in $L(\rel{\C}^2, \C)$
	in the sense of~\cite{Briceno21:jctb}.
\end{proof}

We now show that diagonalisability is more general than the $\MinSol$ condition.

\begin{lemma}\label{lem:minsol-dis}
	Let $\C$ be a \MinSol structure. Then $\C$ is diagonalisable.
	Moreover, there is a path on 3 vertices between $\pi_1$ and $\pi_2$ in $L(\C^2,\C)$.
\end{lemma}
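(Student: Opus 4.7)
The plan is to invoke \cref{thm:diag} by exhibiting an explicit idempotent homomorphism $\psi \colon C^2 \to C$ that is adjacent to both projections $\pi_1$ and $\pi_2$; this will immediately give the desired path of length $2$ (i.e.\ on $3$ vertices) between $\pi_1$ and $\pi_2$ in $L(\C^2,\C)$, hence diagonalisability. The only natural candidate, given the total order $\leq_\top$ provided by the $\MinSol$ definition, is the coordinate-wise maximum $\psi(a,b) \coloneqq \max_{\leq_\top}(a,b)$. Idempotency is immediate since $\max(c,c)=c$.

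Next I would verify that $\psi$ is a homomorphism $\C^2 \to \C$. Given $(f,\vx) \in \tup(\C^2)$, write $\vx = (\vec a, \vec b)$, so $f^{\C^2}(\vx) = f^\C(\vec a)+f^\C(\vec b)$ and $\psi(\vx)$ is the coordinate-wise $\max$ of $\vec a$ and $\vec b$. When $\ar(f)>1$, the $\MinSol$ conditions apply to $\vec a \leq_\top \psi(\vx)$: if $f^\C(\vec a)=0$ then $f^\C(\psi(\vx))=0$, and if $f^\C(\vec a)<\infty$ then $f^\C(\psi(\vx))<\infty$, so by finiteness of $\C$ some uniform $M$ bounds $f^\C(\psi(\vx))$ by $M\cdot f^{\C^2}(\vx)$. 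For unary $u$, the $\MinSol$ definition imposes nothing, but the key observation rescues us: $\psi(a,b)$ is literally one of $a$ or $b$, so $u^\C(\psi(a,b)) \in \{u^\C(a),u^\C(b)\}$ and hence is bounded by $u^\C(a)+u^\C(b)$, so $M=1$ suffices on unary symbols.

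For adjacency of $\psi$ with $\pi_1$, consider any tuple $\vec z$ whose $i$-th coordinate is either $\pi_1(x_i)=a_i$ or $\psi(x_i)=\max(a_i,b_i)$. In either case $z_i \geq_\top a_i$, so $\vec a \leq_\top \vec z$ coordinate-wise, and the exact same argument as above (using $\MinSol$ on non-unary symbols and the ``$\max$ is one of the inputs'' trick for unary symbols) produces a uniform $M$ satisfying \eqref{eq:adjM}. Adjacency of $\psi$ with $\pi_2$ is symmetric, using $\vec b \leq_\top \vec z$.

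The only potentially delicate point is the unary case, since the $\MinSol$ definition explicitly excludes $\ar(f)=1$; this is exactly where the equivalent formulation of homomorphism via a multiplicative bound $M$ (from the observation preceding \cref{def:relval}) pays off, because the pointwise inequality $u^\C(z) \leq u^\C(a)+u^\C(b)$ holds trivially once we notice $z \in \{a,b,\max(a,b)\}\subseteq\{a,b\}$. Once these verifications are in place, $\pi_1$, $\psi$, $\pi_2$ are idempotent homomorphisms with $\pi_1 \sim \psi \sim \pi_2$ in $L(\C^2,\C)$, so \cref{thm:diag} yields diagonalisability of $\C$ and gives the claimed $3$-vertex path.
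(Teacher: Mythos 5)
Your proposal is correct: the map $\psi(a,b)=\max_{\leq_\top}(a,b)$ is exactly the map $\mu$ the paper uses, and your verifications (the $\MinSol$ monotonicity conditions plus the Observation giving a uniform multiplicative bound $M$ for non-unary symbols, and the ``$\max$ is one of the two inputs'' trick with non-negativity for unary symbols, where \cref{def:minsol} imposes nothing) are sound. The route differs in how diagonalisability itself is obtained. The paper proves it directly from the definition: it shows every off-diagonal element $(x,y)$ of $C^2$ is dominated by the diagonal element $(\max(x,y),\max(x,y))$, so $\C^2$ dismantles to its diagonal in any order of off-diagonal elements; it then separately remarks that $\pi_1,\mu,\pi_2$ is a path in $L(\C^2,\C)$. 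You instead verify only the path of adjacent idempotent homomorphisms $\pi_1\sim\psi\sim\pi_2$ and invoke the reverse direction of the equivalence in \cref{thm:diag} to conclude diagonalisability. Both are valid; your version is a little shorter because a single adjacency computation does all the work, but it leans on the imported characterisation of~\cite{BricenoBDL19} (as extended in \cref{thm:diag}), whereas the paper's dismantling argument is self-contained and exhibits the dismantling sequence explicitly. Note also that adjacency of $\psi$ to $\pi_1$ already implies $\psi$ is a homomorphism (take $\vy=\psi(\vx)$ in \cref{eq:adjM}), so your separate homomorphism check, while harmless, is subsumed by the adjacency verification.
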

\begin{proof}
	Define $\phi: C^2 \to \Delta(C^2)$ by $\phi(x,y)=(\max(x,y), \max(x,y))$,
	where $\max$ is with respect to the total order on $C$.
	We claim for each $(x,y) \in C^2$, $a \coloneqq (x,y)$ is dominated by $b \coloneqq \phi(x,y)$.
	Indeed, for each $(f,(\vx,\vy))\in\tup(\C^2)$ with $(x_i,y_i)=a$ and $n \coloneqq \ar(f) > 1$, we have
	\begin{align*}
	     &f^{\C^2}\left((x_1,y_1),\dots,b,\dots,(x_n,y_n)\right) \\
	   ={} &f^{\C}(x_1,\dots,\max(x_i,y_i),\dots,x_n) + f^{\C}(y_1,\dots,\max(x_i,y_i),\dots,y_n) \\
	   \leq{} & M \cdot f^{\C}(x_1,\dots,x_i,\dots,x_n) + M \cdot f^{\C}(y_1,\dots,y_i,\dots,y_n) \\
	   ={} &M \cdot f^{\C^2}\left((\vx,\vy)\right),
	\end{align*}
	for some $M>0$, where the inequality follows from the assumption that $\C$ is a $\MinSol$ structure.	
	For $f \in \sigma$ with $\ar(f) = 1$, we have that $a$ is dominated by $b$ because with $M \geq2$ we always have
	\[f^{\C^2}(b) = f^{\C}(\max(x,y)) + f^{\C}(\max(x,y)) \leq M \cdot (\fC(x) + \fC(y)) = M \cdot f^{\C^2}(a).\]

	Therefore, we can dismantle the non-diagonal elements $(x,y)$ in any order 
	to obtain a dismantling sequence from $\C^2$ to the substructure induced by $\Delta(C^2)$.

	Let $\mu: C^2 \to C$ be defined by $\mu(x,y)=\max(x,y)$, where $\max$ is with respect to the total order defined on $C$.
	Then similarly as above, one can check $\pi_1, \mu, \pi_2$ is a path in $L(\C^2,\C)$.
\end{proof}

We remark that~\cite{Briceno21:jctb} show many other equivalent formulations,
including a property known as \emph{finite duality}.
They also discuss how finite duality allows to efficiently solve many problems such as homomorphism extensions.
However, in our setting it is $\rel\C$ rather than $\C$ that is restricted, so such a property would not take finite, positive values of $\C$ into account.

Instead, our approach is based on Baker's technique:
we partition graph into breadth-first-search layers and
use the fact that the problem can be solved exactly on a subgraph induced by a few consecutive layers.
To merge such solutions into one, we use a small number of overlapping layers
and use the path between projections $\pi_1,\pi_2$ given by \cref{thm:diag} to ``blend in'' two solutions.
By increasing the number of exactly solved, non-overlapping layers, we can reduce any loss due to differences between finite, positive values.

\subsection{PTAS}
Baker's approach relies on the following structural property of planar graphs,
which is e.g.\ a direct consequence of~\cite[Theorem 83]{bodlaender98:tw}.
\begin{lemma}\label{lem:planar-tw}
Let $G$ be a planar graph and $v_0 \in V(G)$ be an arbitrary vertex.
Let $L_i$ be the set of vertices at distance exactly $i$ from $v_0$ (i.e.\ the $i$th layer of a BFS from $v_0$). Then, the subgraph induced by any $t$ consecutive layers $G[L_i \cup L_{i+1} \cup \dots \cup L_{i+t+1}]$ has treewidth at most $3t$.
\end{lemma}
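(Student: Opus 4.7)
The plan is to reduce the statement to the classical Baker-style bound that a planar graph of BFS radius $r$ (from some fixed root) has treewidth at most $3r$, which is the form used and tabulated in \cite[Theorem~83]{bodlaender98:tw}. Once this reduction is in place, the lemma follows by a short counting of BFS distances in an auxiliary graph.

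Concretely, I would set $H := G[L_i \cup L_{i+1} \cup \cdots \cup L_{i+t+1}]$ and construct an auxiliary graph $H^+$ by adjoining a single fresh vertex $r$ adjacent to every vertex of $L_i$. The key step is that $H^+$ is planar: starting from a fixed planar embedding of $G$, take any BFS spanning tree rooted at $v_0$ and contract every edge of this tree incident to a vertex of $L_0 \cup \cdots \cup L_{i-1}$. Contracting a connected subgraph of a planar graph preserves planarity, and since by definition of the BFS layering every vertex of $L_i$ has at least one neighbour in $L_{i-1}$, the single vertex obtained from the contraction is adjacent to all of $L_i$. Taking the induced subgraph on this contracted vertex together with $L_i \cup \cdots \cup L_{i+t+1}$ yields $H^+$ (possibly after suppressing multi-edges, which does not affect treewidth). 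In $H^+$ the BFS from $r$ reaches $L_{i+j}$ in exactly $j+1$ steps, so $H^+$ has BFS radius $t+2$ from $r$. The classical bound then gives $\tw(H^+) = O(t)$, and since $H$ is obtained from $H^+$ by deleting the single vertex $r$ one has $\tw(H) \leq \tw(H^+)$.

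Extracting the stated constant $3t$ rather than a generic $O(t)$ relies on the sharp form of \cite[Theorem~83]{bodlaender98:tw}, which is phrased directly in terms of the number of consecutive BFS layers and thereby absorbs the off-by-one bookkeeping introduced by my reduction. The only real obstacle is the planarity step for $H^+$: one must verify carefully that the BFS-tree contraction does not introduce crossings and that identifying all lower-layer vertices with $r$ does not leave behind a non-planar configuration, but both points are standard consequences of fixing a planar embedding and contracting connected subgraphs. Everything else is a direct citation combined with an elementary BFS-distance count, so I would expect the complete write-up to occupy no more than a short paragraph beyond what is sketched above.
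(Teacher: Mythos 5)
Your route is essentially the paper's: the paper gives no argument of its own and simply invokes \cite[Theorem 83]{bodlaender98:tw}, and your write-up is the standard reduction (contract the inner layers to an apex root, then apply the classical bounded-radius/outerplanarity bound for planar graphs) that makes that citation explicit. Two small points need fixing, though neither threatens the approach. First, the contraction set you specify over-contracts: a BFS-tree edge joining $L_{i-1}$ to $L_i$ is ``incident to a vertex of $L_0\cup\dots\cup L_{i-1}$'', so contracting \emph{all} such edges swallows every vertex of $L_i$ into the new root as well, and what remains is not your $H^+$. The intended (and correct) construction is to contract only the tree edges with \emph{both} endpoints in $L_0\cup\dots\cup L_{i-1}$ --- equivalently, contract the connected subgraph $G[L_0\cup\dots\cup L_{i-1}]$ to a single vertex $r$; planarity is preserved, $r$ is adjacent to all of $L_i$ (each such vertex has a neighbour in $L_{i-1}$), and $r$ has no neighbour in deeper layers since edges cannot skip a layer. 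Second, the constant: your $H^+$ has radius $t+2$ from $r$, so the radius-type bound yields $3t+O(1)$ rather than literally $3t$; the ``sharp form'' of \cite[Theorem 83]{bodlaender98:tw} is the $k$-outerplanarity bound $\tw \leq 3k-1$, and it does not silently absorb this bookkeeping. The paper's own statement is equally loose (it says ``$t$ consecutive layers'' while listing the $t+2$ layers $L_i,\dots,L_{i+t+1}$), and only the $O(t)$ bound is used in Theorem~3.6, so this slack is harmless --- but you should either track the additive constant honestly or argue directly that the induced subgraph on the chosen layers is $k$-outerplanar for the appropriate $k$.
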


\begin{theorem}\label{thm:minhom}
Let $\cP$ be the class of planar graphs.
Then, for any $\eps>0$ and any VCSP instance $(\A,\C)$ with
$\Gaifman(\A)\in\cP$ and $\C$ diagonalisable, we can find a solution of value
at most $(1+\eps) \minval(\A,\C)$ in time $\lvert\A\rvert \cdot c^{1/\eps}$
where $c$ depends on $\C$ only.
\end{theorem}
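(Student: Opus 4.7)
The plan is to combine Baker's shifting technique with a diagonalisation-based merging of partial solutions. Fix an arbitrary vertex $v_0\in A$ and the induced BFS layering $L_0,L_1,\ldots$; by \cref{lem:planar-tw}, any $T$ consecutive layers induce a subgraph of treewidth at most $3T$, so standard treewidth dynamic programming solves the VCSP on such a subgraph in time $\lvert C\rvert^{O(T)}\cdot\lvert\A\rvert$. Apply \cref{thm:diag} to extract a path $\pi_1=\phi_0,\phi_1,\ldots,\phi_k=\pi_2$ of idempotent homomorphisms $\C^2\to\C$ with consecutive ones adjacent, and let $M$ be a uniform adjacency constant, depending on $\C$ only.

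Choose $T:=\Theta(Mk/\eps)$. For each shift $r\in\{0,1,\ldots,T-k-2\}$, partition $A$ into overlapping blocks $B_j^r$ of $T$ consecutive BFS layers, with adjacent blocks sharing exactly $k+1$ layers. On each block I would solve the \emph{$M$-weighted} VCSP, in which every constraint whose scope lies inside some overlap has its $f^\A$-weight scaled by $M$, obtaining a block optimum $h_j^r\colon B_j^r\to C$ in time $\lvert C\rvert^{O(T)}\cdot\lvert\A\rvert$. Define the global assignment $H^r$ as $h_j^r$ on the interior of $B_j^r$, and as the pointwise application $\phi_\ell(h_j^r,h_{j+1}^r)$ on the $\ell$-th layer of the overlap between $B_j^r$ and $B_{j+1}^r$ (numbering $\ell=0,\ldots,k$ away from $B_j^r$'s interior). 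Because $\phi_0=\pi_1$ and $\phi_k=\pi_2$, this definition is consistent at both interior/overlap interfaces. The algorithm outputs the best $H^r$ across all shifts.

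For feasibility and value the key observation is that Gaifman edges span at most two consecutive BFS layers; hence every constraint of $\A$ uses either a single $\phi_\ell$ or two adjacent ones, and the adjacency property yields $f^\C(H^r(\vx))\le M\cdot(f^\C(h_j^r(\vx))+f^\C(h_{j+1}^r(\vx)))$ for every $(f,\vx)\in\tup(\A)$. This immediately gives feasibility of $H^r$ (whenever $h^*$ is feasible each $h_j^r$ is too, so $H^r$ inherits finiteness). Summing the inequality over all constraints and invoking weighted optimality of each $h_j^r$ against $\restr{h^*}{B_j^r}$, I obtain $\val(H^r)\le\val(h^*)+(2M-1)\cdot\Omega_r$, where $\Omega_r$ denotes the total $f^\A\cdot f^\C(h^*(\cdot))$-cost on constraints lying in some overlap of shift~$r$. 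Since every constraint is in an overlap for at most $k+1$ of the $T-k-1$ shifts, averaging shows the best shift $r^*$ achieves $\val(H^{r^*})\le\bigl(1+(2M-1)(k+1)/(T-k-1)\bigr)\val(h^*)\le(1+\eps)\val(h^*)$ by the choice of~$T$. The total running time is $T\cdot\lvert C\rvert^{O(T)}\cdot\lvert\A\rvert=\lvert\A\rvert\cdot c^{1/\eps}$ for a suitable $c$ depending on~$\C$ only.

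The main obstacle is the adjacency constant $M$, which in general exceeds $1$ and would naively inflate the approximation ratio to roughly $M$ rather than $1+\eps$ if we merely solved the unweighted VCSP on each block. The fix is the $M$-reweighting of overlap constraints in each block subproblem: it forces $h_j^r$ to pay in advance for the multiplicative blow-up of the diagonalisation merge, so that the $M$-loss ultimately multiplies only the cost of $h^*$ on overlap constraints --- an arbitrarily small fraction of $\val(h^*)$ by the choice of~$T$.
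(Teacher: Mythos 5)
Your proposal is correct and follows essentially the same route as the paper's proof: Baker's shifting over BFS layers, exact treewidth dynamic programming on blocks whose overlap constraints are amplified by the adjacency constant $M$, and merging the block optima along the path of adjacent homomorphisms from $\pi_1$ to $\pi_2$ given by \cref{thm:diag}, with the best shift chosen so that the optimum's cost on overlaps is an $O(\eps/M)$-fraction of $\val(h^*)$. The only cosmetic differences are that the paper groups BFS layers into super-layers of width $\ell$ (the path length) and selects the shift via disjointness of the candidate overlaps rather than your averaging count, and that idempotency of the merging homomorphisms, which you invoke, is not actually needed for the planar case (only for the Baker-class generalisation).
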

\begin{proof}
	Let $(\A,\C)$ be a VCSP instance as per the theorem.
	Generally, for any left-hand side structure $\B$, we will write $\val_{\B}(h)$ for the value of an assignment $h:B \to C$ with respect to the instance $(\B,\C)$, and write $\val(\cdot)$ for $\val_\A(\cdot)$ by default.
	
  	By \cref{thm:diag} there is a sequence of adjacent
    homomorphisms $\psi_1, \dots, \psi_\ell$ from $\C^2$ to $\C$ such that
    $\psi_1=\pi_1$ and $\psi_\ell=\pi_2$.
	Let $M \geq 1$ be sufficiently large such that~\cref{eq:adjM} holds for all
  adjacent homomorphisms $\psi_i$ and $\psi_{i+1}$, $i\in\set{1,\dots,\ell-1}$.
	Let $k \coloneqq \lceil \frac{2M}\eps \rceil$.

\begin{figure}[b!]
\begin{tikzpicture}[framed]
\begin{scope}[xscale=0.9]
\node[anchor=east] at (-1.3,0) {\small{dist. from $v_0$}};
\node[anchor=east] at (-1.3,-0.6) {\small{intervals of $\ell$ layers}};
\node (a0) at (0,0) {\footnotesize{$0,$}};
\node (a1) at (1,0) {\footnotesize{$1,$}};
\node (a2) at (2,0) {\footnotesize{$2,$}};
\node (a3) at (3,0) {$\vphantom{1}\dots,$};
\node (a4) at (4,0) {\footnotesize{$\ell,$}};
\node (a6) at (5,0) {$\vphantom{1}\ldots,$};
\node (a8) at (6.5,0) {\footnotesize{$n\ell+1,\quad$}};
\node (a9) at (7.5,0) {\footnotesize{$n\ell+2,$}};
\node (a10) at (8.5,0) {$\dots$};
\node (a11) at (9.5,0) {\footnotesize{$n\ell+\ell,$}};
\node (a12) at (10.5,0) {$\dots$};
\draw[decorate, decoration ={brace,mirror,raise=1pt}] ($(a0.south west)-(0.6,0)$) -- (a0.south east)
            node[midway, below=2pt] {$L_{-1}$};
\draw[decorate, decoration ={brace,mirror,raise=1pt}] (a1.south west) -- (a4.south east)
            node[midway, below=2pt] {$L_0$};            
\draw[decorate, decoration ={brace,mirror,raise=1pt}] (a8.south west) -- (a11.south east)
            node[midway, below=2pt] {$L_n$};                        
\end{scope}            
            
\begin{scope}[shift={(2,-2.7)}, xscale=0.6]
\node[anchor=east] at (-3.3,0.7) {\small{blocks}};
\node[anchor=east] at (-3.3,0) {\small{interval of $\ell$ layers}};
\node[anchor=east] at (-3.3,-1) {\small{overlap}};

\node (aa1) at (-1,0) {\footnotesize{$L_{-1}$}};
\node (a0) at (0,0) {\footnotesize{$\vphantom{L_{-1}}L_0$}};
\node (a1) at (1,0) {\footnotesize{$\vphantom{L_{-1}}L_1$}};
\node (a2) at (2,0) {\footnotesize{$\vphantom{L_{-1}}\dots$}};
\node (ak) at (3,0) {\footnotesize{$\vphantom{L_{-1}}L_k$}};
\node (akp) at (4,0) {\footnotesize{$\vphantom{L_{-1}}L_{k\hspace*{-0.5pt}+\hspace*{-0.7pt}1}$}};
\node (a5) at (5,0) {\footnotesize{$\vphantom{L_{-1}}\dots$}};
\node (a2k) at (6,0) {\footnotesize{$\vphantom{L_{-1}}L_{2k}$}};
\node (a7) at (7,0) {\footnotesize{$\vphantom{L_{-1}}\dots$}};
\node (a3k) at (8,0) {\footnotesize{$\vphantom{L_{-1}}L_{3k}$}};
\node (a9) at (9,0) {\footnotesize{$\vphantom{L_{-1}}\dots$}};
\draw[decorate, decoration ={brace,raise=1pt,amplitude=5pt}] ($(aa1.north west)-(0.7,0)$) -- ($(a0.north east)-(0.1,0)$)
            node[midway, above=2pt] {\footnotesize{$B^0_{-1}$}};
\draw[decorate, decoration ={brace,raise=2.5pt,amplitude=5pt}]  ($(a0.north west)+(0.1,0)$) -- ($(ak.north east)-(0.2,0)$)
            node[midway, above=3pt] {\footnotesize{$B^0_{0}$}};            
\draw[decorate, decoration ={brace,raise=1pt,amplitude=5pt}]  ($(ak.north west)+(0.15,0)$) -- ($(a2k.north east)-(0.2,0)$)
            node[midway, above=2pt] {\footnotesize{$B^0_{1}$}};                        
\draw[decorate, decoration ={brace,raise=2.5pt,amplitude=5pt}]  ($(a2k.north west)+(0.2,0)$) -- ($(a3k.north east)-(0.2,0)$)
            node[midway, above=3pt] {\footnotesize{$B^0_{2}$}};                   
            
\node (o) at (3,-1) {\footnotesize{$O^0$}};
\draw (o.north) to[looseness=0.2,out=150,in=-90] (a0);
\draw (o.north) to[looseness=0.5,out=120,in=-90] (ak);
\draw (o.north) to[looseness=0.2,out=70,in=-90] (a2k);
\draw (o.north) to[looseness=0.2,out=40,in=-90] (a3k);
\end{scope}            
            
\end{tikzpicture}
\caption{Illustration of the sets $L_n$, $B^i_j$ and $O^i$ for $i=0$ (other $i$ look the same, just shifted).}
\label{fig:overlap}
\end{figure}		

	Let $\A$ be a $\QQ_{\geq 0}$-valued structure and let $G=\Gaifman(\A) \in \cP$ be its Gaifman graph.
	Fix an arbitrary vertex $v_0 \in A$ in $\Gaifman(\A)$.
	For $n \in \ZZ$, let $L_n \subseteq A$ be the set of vertices whose distance from $v_0$ is in $\{n \ell + 1, \dots, n \ell + \ell\}$.
	So $L_n$ are intervals of $\ell$ layers, which partition the vertex set $A$.
	See Figure~\ref{fig:overlap}.
	For each $j\in\ZZ$ and $i \in [k]$ let
	\[
		B_j^i \coloneqq L_{jk-i} \cup \dots \cup L_{jk-i+k},
	\]
	so that $B_j^i$ is a block of $(k+1) \cdot \ell$ consecutive layers.
	Iterating through the indices $j$ gives consecutive blocks that overlap on $\ell$ layers;
	the index $i$ shifts which layers are in the overlap.
	That is,
	\[
		B_j^i \cap B_{j+1}^i = L_{(j+1)k-i}.
	\]
	Define the overlaps $O^i = \bigcup_{j} B_j^i \cap B_{j+1}^i$ for $i \in [k]$.
	We note that the $O^1,\dots,O^k$ are disjoint.

	Consider an optimal solution $h^*: A \to C$ for the VCSP instance $(\A,\C)$.
	As the $O^i$ are all disjoint, there exists $\istar\in[k]$ with%
	\[ \val_{\A[O^\istar]}(\restr{h^*}{O^\istar}) \leq \frac{1}{k} \val(h^*) \leq \frac\eps{2M} \val(h^*). \]
	We henceforth write $B_j=B_j^\istar$ and $O=O^\istar$.
	Note that, just as in Baker's original approach, the choice of $\istar$ is not available to the algorithm, as we do not know $h^*$.
	However, as the number of choices for $i\in[k]$ is linear in $1/\eps$, we can proceed with each possible $i$, construct the solution $h'$ as discussed below and output the one with the lowest value $\val(h')$.
		
	Let $\A^+$ be a $\sigma$-structure with domain $A$ defined by
	\[
		f^{\A^+}(\vx) = \begin{cases}
			M \cdot \fA(\vx) &\text{if $\toset{\vx} \subseteq O$} \\
			\fA(\vx) &\text{otherwise},
		\end{cases}
	\]
	so that tuples which lie within $O$ are amplified by a factor of $M$.
	
	For each $j$, the Gaifman graph $\Gaifman(\A^+[B_j])$ has treewidth at most $\BigO{(k+1)\ell} = \BigO{M \ell /\eps}$ by \cref{lem:planar-tw}.
  Thus for each $j$, we can find a tree decomposition~\cite{Bodlaender93:linear-tw} and compute an optimal solution $h_j$ to $(\A^+[B_j],\C)$
	in total time $\lvert\A\rvert \cdot {\lvert\C\rvert}^{\BigO{M \ell/\eps}}$
  via standard dynamic programming~\cite{reed2003algorithmic}. 
	Then by optimality of $h_j$,
	\[ \val_{\A^+[B_j]}(h_j) \leq \val_{\A^+[B_j]}(\restr{h^*}{B_j}). \]
	Therefore, summing over all $j$,
	we count the contribution of every constraint once,
	except for constraints whose scope is contained in $O$
	(and thus in exactly two sets $B_j$),
	which are counted $2M$ times in total:
	\begin{equation}
	\begin{split}\label{eq:sum_val_j}
		\sum_{j} \val_{\A^+[B_j]}(h_j)
		&\ \leq\ \sum_{j} \val_{\A^+[B_j]}(\restr{h^*}{B_j}) \\
		&\ =\ \val_\A(h^*) + (2M-1) \cdot \val_{\A[O]}(\restr{h^*}{O})
		 \ \leq\ (1+\eps) \val(h^*).
	\end{split}
	\end{equation}

	Observe that for each $x \in A$, either $x\not\in O$ and there is a unique~$j$ for which $x \in
  B_j$, or $x \in O$ and there is a unique $j$ for which $x \in B_j \cap B_{j+1}$.
  In the latter case, $x \in L_{(j+1)k-\istar}$ and we let $s \in [\ell]$ denote the unique~$s$ for which $x$
  is at distance exactly $((j+1)k-\istar)\ell + s$ from $v_0$.
	Let $h': A \to C$ be defined as follows
	\[
		h'(x) = \begin{cases}
			h_j(x) &\qquad \text{if $x \in B_j$ for a unique $j$} \\
			\psi_s\big(h_j(x),h_{j+1}(x)\big) &\qquad \text{if $x \in B_j \cap B_{j+1}$ and $d(x,v_0)=((j+1)k-\istar)\ell + s$.}
		\end{cases}
	\]
	We claim that $h'$ is a solution to $(\A,\C)$ with $\val(h') \leq (1+\eps)\minval(\A,\C)$.
	Let $(f,\vx) \in \tup_{>0}(\A)$.
	Note that by definition of the Gaifman graph $\Gaifman(\A)$, all $x_i$ are adjacent to each other, 
	so $\toset\vx$ is contained in one or two consecutive layers.
	Consider the following two cases.
	\begin{enumerate}
		\item If $\toset\vx \not\subseteq O$, then there is a unique $j$ such that
      $\toset{\vx} \subseteq B_j$, and so $h'(x_i)=h_j(x_i)$ for each $i$,
      because either:
      $x_i \not\in O$ and so $h'(x_i)=h_j(x_i)$ by definition of $h'$,
      or $x_i$ is in the last layer of $B_{j-1} \cap B_j$ and      
      $h'(x) = \psi_\ell\big(h_{j-1}(x),h_j(x)\big) = \pi_2\big(h_{j-1}(x),h_j(x)\big) = h_j(x)$,
      or analogously $x_i$ is in the first layer of $B_j \cap B_{j+1}$ and $\psi_1=\pi_1$.
		Thus
		\begin{equation}\label{eq:hcase1}
			\fC(h'(\vx))=\fC(h_j(\vx)).
		\end{equation}
		\item Else, if $\toset\vx \subseteq O$, then there is a unique $j$ such that
		$\toset{\vx} \subseteq B_j \cap B_{j+1} = L_{(j+1)k-\istar}$.
		Since $\toset\vx$ is contained in two consecutive layers, there is some $s$ such that
		all vertices in $\toset{\vx}$ are at distance $((j+1)k-\istar)\ell+s$ or $((j+1)k-\istar)\ell+s+1$ from $v_0$.
		Thus
		\begin{equation*}
			h(x_i) \in \left\{\psi_s    \big(h_j(x_i),h_{j+1}(x_i)\big),\, 
			                  \psi_{s+1}\big(h_j(x_i),h_{j+1}(x_i)\big)\right\}
		\end{equation*}
		for each $x_i$. Finally, as $\psi_s$ and $\psi_{s+1}$ are adjacent
		\begin{equation}\label{eq:hcase2}
			\fC(h'(\vx)) \leq M \cdot f^{\C^2}\big(h_j(\vx),h_{j+1}(\vx)\big) = M \cdot \left(\fC\big(h_j(\vx))+\fC(h_{j+1}(\vx)\big)\right).
		\end{equation}
	\end{enumerate}
	Thus, by~\cref{eq:hcase1,eq:hcase2,eq:sum_val_j},
	\begin{align*}
		\val_{\A}(h') \leq \sum_{j} \val_{\A^+[B_j]}(h_j) \leq (1+\eps)\val(h^*) = (1+\eps)\minval(\A,\C),
	\end{align*}
	and so $h'$ is the solution we seek.
\end{proof}

\begin{remark}\label{rmk:planar}
	In \cref{thm:minhom} it would be sufficient to require that $\C^2$ dismantles to any substructure of its diagonal, as opposed to its full diagonal (as in the definition of diagonalisability).
	By~\cite[Theorem~3.6]{Briceno21:jctb} (extended as in \cref{thm:diag}) this is equivalent to saying that $\C$ dismantles to a substructure $\bbI$ such that $\bbI$ is diagonalisable.
	
	In this case, $\pi_1$ and $\pi_2$ are still connected in $L(\C^2,\C)$, but the homomorphisms in the path connecting them will not be necessarily idempotent.
	However, the above proof (for the case of planar graphs) did not rely on this property.
	This is in contrast with \cref{thm:minhomBaker} (for Baker classes) where we actually use the fact that the homomorphisms are idempotent.
\end{remark}

Since a $\MinSol$ structure $\C$ is diagonalisable by \cref{lem:minsol-dis},
we have the following corollary.

\begin{corollary}
Let $\cP$ be the class of planar graphs.
Given any $\eps>0$ and instance $(\A,\C)$ of $\MinSol_{\cP}$,
we can find a solution of value at most $(1+\eps)\minval(\A,\C)$ in time $\lvert\A\rvert \cdot c^{1/\eps}$, where $c$ depends on $\C$ only.
\end{corollary}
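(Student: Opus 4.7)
The plan is to derive this corollary as an immediate consequence of the two preceding results, Theorem~\ref{thm:minhom} and Lemma~\ref{lem:minsol-dis}. First I would unpack the hypothesis: an instance $(\A,\C)$ of $\MinSol_{\cP}$ is by definition one in which $\A$ is a $\QQ_{\geq 0}$-valued structure with $\Gaifman(\A) \in \cP$ and $\C$ is a $\MinSol$ structure. Lemma~\ref{lem:minsol-dis} states outright that every such $\C$ is diagonalisable, so the hypothesis ``$\C$ diagonalisable'' needed by Theorem~\ref{thm:minhom} is automatically satisfied.

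The second and only remaining step is to apply Theorem~\ref{thm:minhom} verbatim to $(\A,\C)$ and the given $\eps$. This yields an assignment $h'\colon A \to C$ with $\val(h') \leq (1+\eps)\minval(\A,\C)$, produced in time $\lvert\A\rvert \cdot c^{1/\eps}$, and the theorem explicitly states that $c$ depends on $\C$ only. Tracking where $c$ comes from in the proof of Theorem~\ref{thm:minhom}, it is built from the length $\ell$ of a path of adjacent homomorphisms from $\pi_1$ to $\pi_2$ in $L(\C^2,\C)$ together with the adjacency constant $M$ of~\eqref{eq:adjM}; the ``moreover'' clause of Lemma~\ref{lem:minsol-dis} in fact supplies such a path of length $\ell = 3$ via the map $\mu(x,y) = \max(x,y)$ coming from the total order $\leq_\top$ on $C$. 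Both $\ell$ and $M$ are thus functions of $\C$ alone, matching the claimed form of the bound.

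There is essentially no obstacle here: all the work is in Theorem~\ref{thm:minhom} and Lemma~\ref{lem:minsol-dis}, and the corollary is a mechanical specialisation. The only sanity check I would make is that no parameter in the run-time or in $c$ secretly depends on $\Gaifman(\A)$ beyond its membership in $\cP$, which is ruled out by the explicit wording of Theorem~\ref{thm:minhom}.
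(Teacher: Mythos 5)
Your proposal is correct and matches the paper's own argument: the corollary is obtained exactly by combining Lemma~\ref{lem:minsol-dis} (every \MinSol structure is diagonalisable, with a path of length $3$ between $\pi_1$ and $\pi_2$) with Theorem~\ref{thm:minhom}, and your tracking of the constants $\ell$ and $M$ agrees with the paper's remark following the corollary. Nothing further is needed.
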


We remark the proof yields $c^{1/\eps} = {\lvert\C\rvert}^{\BigO{M\ell/\eps}}$,
and for $\MinSol$ structures \cref{lem:minsol-dis} yields $\ell=3$;
hence when the bound $M$ is a constant (e.g.\ for $\{0,1,\infty\}$-valued $\MinSol$ structures)
the dependency on $\C$ is simply ${\lvert\C\rvert}^{\BigO{1/\eps}}$.

\section{Minimisation on Baker classes}\label{sec:baker}
\subsection{Definition of Baker classes}
A \emph{layering} of a graph $G$ is a function $\lambda \colon V(G) \to \ZZ$ such that $|\lambda(u) - \lambda(v)| \leq 1$ for adjacent vertices $u,v$ in $G$.
That is, vertices of $G$ are partitioned into \emph{layers} $\lambda^{-1}(i)$ for $i \in \ZZ$ and edges only go within one layer or between two consecutive layers.

Baker's technique~\cite{Baker94:jacm} relies on a layering of planar graphs such
that the subgraph induced by $\lambda^{-1}(I)$, for any interval $I$ (a set of a
few consecutive integers), has bounded treewidth (the bound depending only on
$|I|$), as formally stated in \cref{lem:planar-tw}.
As one might imagine, this can be iterated: it would suffice that the subgraph induced by $\lambda^{-1}(I)$ itself has such a ``bounded treewidth layering''.
Consider now the class of graphs obtained from planar graphs by adding a single
vertex, adjacent to all the others; then a layering can only have three
non-empty layers;\footnote{There can be a non-empty layer before the universal
vertex, a layer containing the universal vertex, and a layer after.}
nevertheless, an algorithm can easily circumvent this by guessing the assignment to the single new vertex (i.e.\ iterating through all possibilities).

Dvo\v{r}\'{a}k~\cite{Dvorak20:soda} defined a \emph{Baker class} as any class of graphs that can be dealt with in the above ways.
Informally, a class of graphs is Baker if any graph in the class can be reduced to an empty graph by a bounded number of operations: either removing a single vertex, or selecting some layering $\lambda$ and continuing separately with every interval in that layering (each subgraph induced by $\lambda^{-1}(I)$ for intervals $I$ of at most some size).
It turns out the notion of treewidth is not necessary here, as graphs of bounded treewidth also form a Baker class.

Before we state the definition formally, let us make a few remarks.
The idea of iteratively going through ``some layering'' and then into ``every interval'' is conveniently phrased as a strategy winning a game in a \emph{bounded} number of rounds (this will be particularly useful when we will want to state an assumption that the choices, including layerings, can be constructed efficiently).
Dvo\v{r}\'{a}k's definition considers graphs with a total ordering of their vertex set. Roughly speaking, this is to restrict the definition to ``monotone'' strategies, where the single vertices to be deleted are decided upfront --- this restriction won't be important for us, but we will state it as in~\cite{Dvorak20:soda} (let us also remark that layerings are not restricted by the ordering).
Finally, the definition is made a bit complicated by the fact that the number of consecutive layers we may need to include in an interval may depend on how deep we go (how many iterations of the game are done). 
This dependency is formalised as a function $r$ below: for a fixed problem and approximation ratio the reader should think of some arbitrarily quickly increasing function $r \colon \NN \to \NN$.
We now proceed with the formal definition.\looseness=-1

\begin{definition}
For a graph $G$ and a function $r \colon \NN \to \NN$, the \emph{Baker game} on $(G,r)$ is defined as follows, for two players I and II\@.
Player I starts by selecting a total ordering of $V(G)$.
A state of the game is a pair $(G',t)$ where $G'$ is an induced subgraph of $G$ (with its ordering inherited from $G$) and $t$ is an integer describing how many rounds have passed.
The initial state is $(G,0)$ and Player I wins in $t$ rounds if the state $(\emptyset,t)$ is reached, for any $t$.
Otherwise, in state $(G',t)$, Player I chooses one of the following actions:
\begin{itemize}
\item delete the first vertex $v$ of $G'$, according to the ordering; Player II then takes no action and the game continues in state $(G'-v, t+1)$;
\item select a layering $\lambda$ of $G'$; Player II then selects an interval $I$ of at most $r(t)$ (and no more than $|V(G)|$) consecutive integers and the game continues in state $(G'[\lambda^{-1}(I)], t+1)$.
\end{itemize}
\end{definition}

\noindent
We say a function $f \colon \NN \to \NN$ is \emph{sub-additive} if $f(x) + f(y) \leq f(x+y)$.

\begin{definition}
  For sub-additive\footnote{Sub-additivity is satisfied by any reasonable
  time-complexity bound function and is used implicitly in~\cite{Dvorak20:soda}.} functions $f_1,f_2 \colon \NN \to \NN$, we say
	a class of graphs $\cG$ is \emph{($f_1,f_2$)-efficiently Baker}
	if for every function $r \colon \NN \to \NN$ there exists an integer~$t_{\max}$
	and an algorithm such that:
	for all $G \in \cG$, the algorithm wins as Player I in the Baker game on $(G,r)$ in at most $t_{\max}$ rounds,
	using time $f_1(\lvert G\rvert)$ to compute the initial ordering
	and using time $f_2(\lvert G\rvert)$ to determine the action at each state of the game.
\end{definition}

As discussed in the introduction, efficiently Baker classes generalise
excluded-minor classes.

\begin{theorem}[Dvo\v{r}\'{a}k~{\cite[Theorem 2.1]{Dvorak20:soda}}]
	Let $\cG$ be a class of graphs that excludes a minor.
	Then $\cG$ is $(\BigO{n^2},\BigO{n})$-efficiently Baker.
\end{theorem}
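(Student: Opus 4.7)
The plan is to exhibit an explicit winning strategy for Player~I in the Baker game on $(G,r)$ for $G$ in the excluded-minor class $\cG$, bounded by $t_{\max} = t(h,r)$ rounds, and to verify the $\BigO{n^2}$ initial-ordering and $\BigO{n}$ per-round time bounds. The strategy proceeds in two stages: first reduce to a bounded-treewidth subgraph via one BFS-layering move (preceded by a bounded number of apex-vertex deletions), then execute a separate bounded-treewidth sub-strategy.

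\textbf{Initial ordering and reduction to bounded treewidth.} Compute a BFS ordering rooted at an arbitrary vertex $v_0 \in V(G)$ in time $\BigO{n+m}$, which is $\BigO{n^2}$ in general and in fact $\BigO{n}$ on any class excluding a minor by Mader's edge bound; place an $\BigO{1}$-sized set of apex-like vertices (identified below) first in the ordering. Store also the BFS layering $\lambda_0(v) = d_G(v_0, v)$. The key structural ingredient is that, for $G$ excluding $K_h$ as a minor, there is a set $S \subseteq V(G)$ of size $\leq a(h)$ such that every subgraph of $G-S$ induced by $r$ consecutive BFS layers has treewidth $\leq w(h,r)$. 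This is provable without the Graph Minor Structure Theorem by combining Eppstein's bounded-local-treewidth theorem for apex-minor-free classes with a partition-style result of DeVos--Ding--Oporowski--Sanders--Reed--Seymour--Vertigan to extract the apex-like vertices in $S$. Player~I's top-level play is then: delete the $a(h)$ vertices of $S$ using individual vertex-deletion rounds, then select the layering $\lambda_0$; after Player~II's choice of an interval of $\leq r(t)$ consecutive layers, the new state has treewidth $\leq w(h,r(t))$, reducing to the bounded-treewidth sub-game.

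\textbf{Bounded-treewidth sub-game.} We establish as a separate lemma that every class of graphs of treewidth $\leq w$ is efficiently Baker. Compute a balanced tree decomposition of width $\BigO{w}$ and depth $\BigO{\log n}$ in $\BigO{n}$ time via Bodlaender's algorithm. The naive centroid recursion---delete the top bag ($\leq w+1$ vertex-deletion rounds) and then apply a layering placing each remaining component at an integer spaced $> r(t)$ apart, forcing Player~II's interval to select a single component---wins in $\BigO{w \log n}$ rounds. To compress this into $\BigO{w}$ rounds independently of $n$, we encode multiple centroid levels into a single layering whose digits are spaced by growing powers of $r(t)+1$, so that one layering-plus-interval action selects a single leaf bag of $\leq w+1$ vertices, which Player~I clears with a few further deletions.

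\textbf{Efficiency and main obstacle.} Each per-round task---locating balanced separators, maintaining tree decompositions, and constructing layerings---runs in $\BigO{n}$ time via standard dynamic programming on tree decompositions of bounded width. The main obstacle is the bounded-treewidth lemma: the definition demands $t_{\max}$ independent of $n$, but a layering must respect the Lipschitz condition $|\lambda(u)-\lambda(v)| \leq 1$ for edges, which prevents a direct ``product'' encoding of a vertex's centroid-tree address. The construction must carefully interleave centroid-tree coordinates with BFS-based distances so that adjacent vertices---which necessarily share a bag at every centroid level down to their deepest common ancestor---end up at layer values differing by at most one, while still letting Player~II's single interval-pick isolate one leaf bag.
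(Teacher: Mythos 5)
First, a point of reference: the paper does not prove this statement at all --- it is imported verbatim from Dvo\v{r}\'ak's SODA'20 paper and used as a black box --- so the only question is whether your sketch would stand as an independent proof. It would not, because your ``key structural ingredient'' is false. For a general $K_h$-minor-free graph $G$ there need not exist a bounded-size \emph{global} set $S$ such that every subgraph of $G-S$ induced by $r$ consecutive BFS layers has treewidth bounded in terms of $h$ and $r$. Concretely, take a root vertex $r_0$ joined to $N$ apex vertices $a_1,\dots,a_N$, where each $a_i$ is additionally joined to all vertices of its own $m\times m$ grid: this graph excludes $K_7$ as a minor (any clique minor must live in a single apex-plus-grid copy together with $r_0$, i.e.\ in a planar graph plus two apices), yet the two consecutive BFS layers at distance $1$ and $2$ from $r_0$ induce all $N$ copies of apex-plus-grid, each of treewidth about $m$; deleting any set $S$ of size bounded in $h$ leaves almost all copies intact. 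This is precisely why the classical Baker-style PTASes for $H$-minor-free classes (Grohe, Demaine--Hajiaghayi) invoke the Graph Minor Structure Theorem with apex vertices handled \emph{locally per bag}, and why Dvo\v{r}\'ak's strategy is genuinely recursive, interleaving layering-restriction moves and vertex deletions at many levels of the game rather than performing one global apex deletion followed by a single layering into a bounded-treewidth regime. The suggested derivation from Eppstein's theorem (which applies only to apex-minor-free classes) together with the DeVos et al.\ partition theorem (which yields a partition into two bounded-treewidth parts, not a bounded apex set, and whose known proofs themselves rest on structure theory) does not repair this, and it also runs against the stated point of Dvo\v{r}\'ak's result, namely avoiding structure-theoretic machinery.

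Second, even your bounded-treewidth sub-lemma is left unproved. You correctly observe that the centroid recursion gives roughly $w\log n$ rounds, violating the requirement that $t_{\max}$ be independent of $n$, and you correctly observe that the Lipschitz condition $|\lambda(u)-\lambda(v)|\le 1$ obstructs the ``digits spaced by powers of $r(t)+1$'' encoding --- but you then leave the resolution as a declared obstacle rather than a construction, so the central difficulty of that stage is not addressed. Note also that $t_{\max}$ is allowed to depend arbitrarily on $r$ (only not on $|V(G)|$), which gives more room than your $\mathcal{O}(w)$-rounds target suggests: for instance, a layering that is constant on each connected component, with values spaced more than $r(t)$ apart, forces Player II's interval to isolate a single component, reducing to the connected case; the real crux is then to bound the recursion depth independently of $n$, and some argument for that (e.g.\ an induction exploiting the bounded number of layers obtained after the first interval pick) must actually be carried out. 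As written, both stages of the proposal contain gaps, and the first one is fatal.
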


\subsection{PTAS}
The proof for Baker classes largely follows the proof for planar graphs.
To main difference is that in order to handle exceptional vertices (also known as ``apex'' vertices) we will need to guess and fix a partial assignment $\rho$ on them.
When ``blending in'' two assignments $h_j(x)$ and $h_{j+1}(x)$ to some variable $x$ into a single assignment $\psi_s\big(h_j(x),h_{j+1}(x)\big)$,
we will need to preserve the partial assignment~$\rho$, if it is defined on $x$.
This is why we need the homomorphisms $\psi_s : \C^2 \to \C$ to be idempotent, so that $\psi_s(\rho(x),\rho(x)) = \rho(x)$.

\begin{theorem*}[\Cref{thm:minhomBaker} restated]
\StateMinSolBaker%
\end{theorem*}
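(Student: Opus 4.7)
The plan is to promote the proof of~\cref{thm:minhom} from one Baker step on a planar graph to a recursive execution along a winning Player~I strategy in the Baker game on $\Gaifman(\A)$. First, fix once and for all a path $\psi_1 = \pi_1,\dots,\psi_\ell = \pi_2$ of adjacent \emph{idempotent} homomorphisms $\C^2 \to \C$ provided by~\cref{thm:diag}, together with a uniform constant $M \geq 1$ witnessing~\cref{eq:adjM} for every consecutive pair. Apply the $(f_1,f_2)$-efficiently Baker assumption with a function $r \colon \NN \to \NN$ chosen large enough that the $(k+1)\ell$-layer blocks needed by a single Baker step fit inside the Baker intervals at every depth; this yields a Player~I strategy winning in at most $t_{\max}$ rounds at a cost of $f_1(\lvert\A\rvert)$ for the initial ordering and $f_2(\lvert G'\rvert)$ per visited state $(G',t)$. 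Then set $\delta \coloneqq \eps/(2t_{\max})$ and $k \coloneqq \lceil 2M/\delta\rceil$.

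Define a recursive procedure $\Alg(\B, G', \rho, t)$ taking a (possibly rescaled) left-hand side structure $\B$ with $\Gaifman(\B) \subseteq G'$ and a partial assignment $\rho$ already fixed on $A \setminus V(G')$ by earlier \emph{delete}-moves. At state $(G',t)$ the procedure consults Player~I's move. If the move is a deletion of a vertex $v$, iterate over all $c \in C$, set $\rho' \coloneqq \rho \cup \{v \mapsto c\}$, recursively call $\Alg(\B, G'-v, \rho', t+1)$, and return the extension with lowest value against $(\A,\C)$. If the move is a layering $\lambda$, mimic~\cref{thm:minhom} verbatim: for each of $k$ shifts $i \in [k]$, form blocks $B_j^i$ of $(k+1)\ell$ consecutive $\lambda$-layers with $\ell$-layer overlaps, build the amplified structure $\B^+$ (scaling by $M$ on tuples confined to the overlaps), recursively set $h_j \leftarrow \Alg(\B^+[B_j^i], G'[B_j^i], \rho, t+1)$, blend consecutive $h_j, h_{j+1}$ in their overlap using the sequence $\psi_s$ indexed by the layer of each vertex as in~\cref{eq:hcase2}, and pick the best shift. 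The averaging-over-shifts argument of~\cref{thm:minhom} then delivers a per-step $(1+\delta)$-approximation relative to the inductive optimum on $(\B,\C)$.

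The only new point over~\cref{thm:minhom} is the handling of overlap vertices $x$ that are already fixed by the enclosing partial assignment $\rho$. For such $x$, the two recursive subcalls on $B_j$ and $B_{j+1}$ inherit the same $\rho$ and are forced to return $h_j(x) = h_{j+1}(x) = \rho(x)$; for the blend $\psi_s(h_j(x), h_{j+1}(x))$ to remain consistent with $\rho$ (and hence with the deletion-guesses made at outer levels that the algorithm is jointly optimising over), we need $\psi_s(\rho(x), \rho(x)) = \rho(x)$. Idempotency of $\psi_s$, which~\cref{thm:diag} supplies but the weaker variant discussed in~\cref{rmk:planar} would not, is exactly this guarantee. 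Overlap vertices not fixed by $\rho$ behave as in the planar proof: the two subcalls may deep-guess them to different values, but adjacency of $\psi_s$ and $\psi_{s+1}$ absorbs the mismatch within the amplification factor $M$.

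Composing the per-step $(1+\delta)$-factor over at most $t_{\max}$ levels gives a final $(1+\delta)^{t_{\max}} \leq 1+\eps$ guarantee. For the running time, the initial ordering is computed once in $f_1(\lvert\A\rvert)$. At each recursion node the strategy look-up costs $f_2$ of the current subgraph size, and the node either branches into $\lvert C\rvert$ children smaller by one (at a \emph{delete} move) or, per \emph{layering} move, into $\BigO{k}$ shifts spawning blocks whose vertex sets sum to $\BigO{\lvert G'\rvert}$ (each vertex lying in at most two consecutive blocks per shift). Using sub-additivity of $f_2$ to telescope the look-up costs across disjoint subproblems per level, and the exponential-in-$t_{\max}$ leaf count to bound the overall tree size, one obtains a total running time of the form $f_1(\lvert\A\rvert) + f_2(c\lvert\A\rvert) \cdot c^{1/\eps}$ with $c$ absorbing $\lvert C\rvert, \ell, M, t_{\max}$ --- quantities depending only on $\C$ and $\cG$. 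I expect the main technical obstacle to be the circular dependence between the growth of $r$ and the resulting $t_{\max}$: $r$ must be chosen large enough to leave $(k+1)\ell$ layers at every deeper level of recursion, yet $k$ (and hence $r$) is defined in terms of $t_{\max}$, which in turn depends on $r$ via the efficient-Baker assumption; the parametric formulation of the Baker game via the function $r$ is precisely what lets this fixed point be resolved within a class-dependent constant.
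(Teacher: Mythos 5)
Your overall architecture is the same as the paper's: recurse along Player~I's strategy in the Baker game, branch over all $|C|$ values at a delete-move, run a shifted-blocks Baker step at a layering-move with constraints inside the overlap amplified by $M$, and blend neighbouring solutions with the path $\psi_1=\pi_1,\dots,\psi_\ell=\pi_2$, using idempotency exactly where you say it is needed, namely so that $\psi_s(\rho(x),\rho(x))=\rho(x)$ for vertices already fixed by the partial assignment $\rho$. That part matches the paper.

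However, the step you yourself flag as ``the main technical obstacle'' is a genuine gap, and your proposed parameters do not resolve it. You set $\delta=\eps/(2t_{\max})$ and $k=\lceil 2M/\delta\rceil$, so the interval length $(k+1)\ell$ that $r$ must guarantee depends on $t_{\max}$; but in the definition of an efficiently Baker class, $t_{\max}$ is only provided \emph{after} the function $r$ is fixed, and nothing bounds how $t_{\max}$ grows as $r$ grows (larger $r$ gives Player~II more choices, so $t_{\max}$ can increase). Hence the ``fixed point'' you appeal to need not exist, and the uniform per-level budget $\eps/(2t_{\max})$ cannot be chosen in advance. The paper's proof avoids this by making the error budget depend only on the current depth $t$, not on $t_{\max}$: at round $t$ it uses $k=(2M-1)\lceil 1/\eps\rceil 2^t$ shifts and proves inductively an approximation factor $e^{\eps/2^t}$, so that the function $r(t)=2M\ell\lceil 1/\eps\rceil 2^t$ is fully determined by $\eps$, $M$, $\ell$ before the Baker property is invoked, and the accumulated loss telescopes as $\prod_t(1+\eps/2^{t+1})\le e^{\eps\sum_t 2^{-(t+1)}}$ independently of how large $t_{\max}$ turns out to be (this geometric schedule is also where the $2^{t_{\max}}$ factors in the running-time bound come from). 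Without this or an equivalent device, your induction cannot even be set up. A second, more minor omission: since $\dom(\rho)$ is included in every block's substructure, constraints with scope inside $\dom(\rho)$ would be counted once per block; the paper handles this by measuring progress with $\val^\rho$ (the value minus the contribution of constraints entirely within $\dom(\rho)$), and by amplifying tuples inside $\lambda^{-1}(O)\cup\dom(\rho)$, so that every remaining constraint is counted once, or $2M$ times if it lies in the overlap. Your sketch should make this bookkeeping explicit, but that is routine compared to the parameter-scheduling issue.
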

\begin{proof}
	For a structure $\A$ and an induced subgraph $G'$ of $\Gaifman(\A)$, we write $\A[G']$ as a shorthand for the induced substructure $\A[V(G')]$ of $\A$.

	Since $\C$ is diagonalisable, there is a sequence $\psi_1,\dots,\psi_\ell$ of
  adjacent idempotent homomorphisms $\C \to \C$ from $\psi_1=\pi_1$ to $\psi_\ell=\pi_2$.
	Let $M$ be sufficiently large such that~\cref{eq:adjM} holds for all adjacent homomorphisms $\psi_i$ and $\psi_{i+1}$.
	For $\eps > 0$, define $r(t) \coloneqq 2M\ell \cdot \lceil\frac{1}{\eps}\rceil \cdot 2^t$.
	Let $\Alg$ be the algorithm certifying that $\cG$ is efficiently Baker and let $t_{\max}$ be the integer that is guaranteed to exists for $r$.
	Our algorithm starts by using $\Alg$ to compute the vertex ordering of $\Gaifman(\A)$.
	
	We then proceed with a recursive procedure.
	The input of the procedure consists of an instance $\A'$, a partial assignment
  $\rho$ from at most $t_{\max}$ elements $\dom(\rho)$ in $\A'$ to $\C$, and a
  state $(G',t)$ of the Baker game where $G'$ is equal to the Gaifman graph of
  $\A' \setminus \dom(\rho)$.
	We describe the procedure and claim inductively that it computes a solution of value at most $e^{\eps/2^t}$ times the optimum (among all total assignments that agree with $\rho$ on $\dom(\rho)$).
	Moreover, we claim the procedure finishes in time at most ${(2 |C| \cdot M \cdot \lceil \frac{1}{\eps} \rceil \cdot 2^{t_{\max}})}^{t_{\max}} \cdot f_2(\lvert G'\rvert \cdot 2^{t_{\max}})$.
	Starting the recursive procedure with the instance $\A$, the state $(\Gaifman(\A),0)$, and empty $\rho$, this will conclude the proof.\looseness=-1
	
	Let $\val(\rho)$ be the value of constraints fully contained in $\dom(\rho)$:
	\[\val(\rho) \coloneqq \sum_{\substack{(f,\vx) \in \tup(\A) \\ \toset{\vx}\subseteq \dom(\rho)}} \fA(\vx) \fC(\rho(\vx)).\]	
	Since all the assignment we consider will agree with $\rho$, $\val(\rho)$ will be a common part of all of them.
	We will not include it in the approximation we inductively claim to get;
	that is, we claim the recursive procedure will output an assignment $h$ from $\A'$ to $\C$ such that
	\[\val^\rho(h) \leq e^{\eps/2^t} \cdot \val^\rho(h^*),\]
	where $\val^\rho(h)$ is a shorthand notation for $\val(h) - \val(\rho)$ and $h^*$ is an optimal solution, among solutions that agree with $\rho$ on $\dom(\rho)$.

	In state $(\emptyset, t)$ for any $t$, our algorithm outputs the assignment
  $\rho$, which is a total assignment since $\A' \setminus \dom(\rho)$ is empty.
	It is trivially optimal (among assignments that agree with $\rho$).
	
	In state $(G',t)$, if $\Alg$ selects to delete the minimum vertex $v$ of $G'$, we consider each possible assignment to $v$ in $C$, and recursively call the procedure with the state $(G'-v,t+1)$, the instance $\A'$, and the assignment $\rho$ extended to $v$.
	We output the best solution found this way: since we consider all possible assignments, we get a solution as close to optimum as guaranteed inductively for $t+1$.
	In the final running time, the branching over all assignments to $v$ in $C$ will contribute at most a factor of $|C|^{t_{\max}}$.
	
	If $\Alg$ selects a layering $\lambda$ of $G'$, we consider each possible shift index $i \in [k]$, where $k \coloneqq (2M-1) \cdot \lceil \frac{1}{\eps} \rceil \cdot 2^t$.
	This branching will contribute to the final running time a factor of at most ${(2M \cdot \lceil \frac{1}{\eps} \rceil \cdot 2^{t_{\max}})}^{t_{\max}}$.
	For $n \in \ZZ$, let
		\[L_n \coloneqq \{n\ell + 1, \dots, n \ell + \ell\},\]
	so $L_n$ are intervals of length $\ell$ which partition $\ZZ$;
		\[B^i_j \coloneqq L_{jk-i} \cup \dots \cup L_{jk-i+k},\]
	so $B^i_j$ consists of $k+1$ such intervals, of which the last one overlaps with $B^i_{j+1}$;
		\[O^i \coloneqq \bigcup_{j} B^i_j \cap B^i_{j+1} = \bigcup_j L_{(j+1)k-i},\]
	so $O^i$ are disjoint for different $i \in [k]$.
	Note that we chose $r(t)$ to satisfy $|B_j| = (k+1) \ell \leq r(t)$.	

	For a subset of integers $I$ (such as $B^i_j$ or $O^i$) we henceforth abuse notation and write $\A'[I]$ as a shorthand for $\A'[\lambda^{-1}(I) \cup \dom(\rho)]$ and $\restr{h}{I}$ as a shorthand for $\restr{h}{\lambda^{-1}(I) \cup \dom(\rho)}$.
	
	Let $h^*$ be an optimal solution to $(\A',\C)$ that agrees with $\rho$.
	Since $O^i$ are disjoint for different $i \in [k]$, so are the constraints of $\A'[O^i]$ (i.e.\ those contained in $\lambda^{-1}(O^i) \cup \dom(\rho)$), except for those fully contained in $\dom(\rho)$ (and accounted for in $\val(\rho)$).
	This implies there exists a shift index $\istar \in [k]$ such that
	\[ \val^\rho_{\A'[O^\istar]}(\restr{h^*}{O^\istar})\ \leq\ \frac{1}{k} \val^\rho_{\A'}(h^*). \]
	We henceforth only consider the recursion branch where this holds and skip the superscript~$\istar$, and write $B_j=B_j^\istar$ and $O=O^\istar$.
	
	Let $\A^+$ be a $\sigma$-structure with the same domain as $\A'$, defined by
	\begin{equation*}
		f^{\A^+}(\vx) = \begin{cases}
			M \cdot f^{\A'}(\vx) &\text{if $\toset{\vx} \subseteq \lambda^{-1}(O) \cup \dom(\rho)$} \\
			f^{\A'}(\vx) &\text{otherwise},
		\end{cases}
	\end{equation*}
	so that tuples which lie within $\lambda^{-1}(O) \cup \dom(\rho)$ are amplified by a factor of $M$.
	
	For each $j$, we recurse into state $(G'[\lambda^{-1}(B_j)], t+1)$ computing solutions $h_j$ to $(\A^+[B_j],\C)$ which agree with $\rho$ and which by inductive assumption are almost optimal:
	\[ \val^\rho_{\A^+[B_j]}(h_j)
	   \leq e^{\eps/2^{t+1}} \cdot 
	   \val^\rho_{\A^+[B_j]}(\restr{h^*}{B_j}). \]
	Therefore, summing over $j$ we get (by observing that every constraint is either fully contained in $\dom(\rho)$, or contained in $\A^+[B_j]$ for exactly one $j$, or contained in $\A^+[O]$)
	\begin{align*}
		\sum_{j} \val^\rho_{\A^+[B_j]}(h_j)
		&\leq e^{\eps/2^{t+1}} \cdot \sum_{j} \val^\rho_{\A^+[B_j]}(\restr{h^*}{B_j}) \\		
		&= e^{\eps/2^{t+1}} \cdot \left(\val^\rho_{\A'}(h^*) + (2M-1) \cdot \val^\rho_{\A'[O]}(\restr{h^*}{O}) \right)\\
		&\leq e^{\eps/2^{t+1}} \cdot (1+\frac{2M-1}{k}) \val^\rho_{\A'}(h^*) \\
		&\leq e^{\eps/2^t} \cdot \val^\rho_{\A'}(h^*).
	\end{align*}	
	(the last inequality holds because we chose $k$ to satisfy $1+\frac{2M-1}{k} \leq 1+\frac{\eps}{2^{t+1}} \leq e^{\eps/2^{t+1}}$).
	
	\pagebreak[3]	

	Observe that for each $x \in A'$
	either $\lambda(x) \not \in O$ there is a unique $j$ for which $\lambda(x) \in  B_j$,
	or $\lambda(x) \in O$ and there is a unique $j$ for which $\lambda(x) \in B_j \cap B_{j+1} = L_{(j+1)k-\istar}$.
	In the latter case we let $s(x)$ denote the unique $s \in \{1,\dots,\ell\}$ for which $\lambda(x) = ((j+1)k-\istar) \ell+s$.
	Let $h'\colon A' \to C$ be defined as follows
	\begin{equation*}
		h'(x) \coloneqq \begin{cases}
			\rho(x) &\qquad \text{if $x \in \dom(\rho)$} \\
			h_j(x) &\qquad \text{if $\lambda(x) \in B_j$ for a unique $j$} \\
			\psi_{s(x)}(h_j(x),h_{j+1}(x)) &\qquad \text{if $\lambda(x) \in B_j \cap B_{j+1}$.}
		\end{cases}
	\end{equation*}	

	We claim that $h'$ is a solution to $(\A',\C)$ satisfying $\val^\rho(h') \leq \sum_{j} \val^\rho_{\A^+[B_j]}(h_j) $.
	This will imply
	\[\val^\rho_{\A'}(h') \leq \sum_{j} \val^\rho_{\A^+[B_j]}(h_j) \leq e^{\eps/2^t}\val(h^*),\]
	concluding that $h'$ it is the solution we seek.
	
	Let $(f,\vx) \in \tup_{>0}(\A')$.
	Note that since $\lambda$ is a layering of $G' = \Gaifman(\A') \setminus \dom(\rho)$, there are two consecutive levels which contain all $x_i \in \toset{\vx} \setminus \dom(\rho)$.
	Consider the three cases.
	\begin{enumerate}
		\item If $\toset\vx \subseteq \dom(\rho)$, then for all $j$,
			\[ \fC(h'(\vx)) = \fC(\rho(\vx)) = \fC(h_j(\vx)). \] 
		\item Otherwise, if $\toset\vx \not\subseteq \lambda^{-1}(O) \cup \dom(\rho)$, then there is a unique $j$ such that
      $\toset{\vx} \subseteq \lambda^{-1}(B_j) \cup \dom(\rho)$, and so $h'(x_i)=h_j(x_i)$ for each $i$ (as some $x_i$ might be in the first or the last layer of an overlap, but for those layers we have $\psi_1=\pi_1$ and $\psi_\ell=\pi_2$).
		Thus
		\begin{equation*}
			\fC(h'(\vx))=\fC(h_j(\vx)).
		\end{equation*}
		\item Else, if $\toset\vx \subseteq \lambda^{-1}(O) \cup \dom(\rho)$ (but $\toset\vx \not\subseteq \dom(\rho)$), then there is a unique $j$ such that
      \[\toset{\vx} \subseteq \lambda^{-1}(B_j \cap B_{j+1}) \cup \dom(\rho)\] and there is some $s$ such that
      \[\toset{\vx} \setminus \dom(\rho) \subseteq \lambda^{-1}\left(\big\{((j+1)k-\istar) \ell+s, ((j+1)k-\istar) \ell+s+1\big\}\right).\]
		Thus
		\begin{equation*}
			h'(x_i) \in \set{\psi_s(h_j(x_i),h_{j+1}(x_i)),\, \psi_{s+1}(h_j(x_i),h_{j+1}(x_i))}
		\end{equation*}
		for each $x_i \in \toset{\vx} \setminus \dom(\rho)$.
		Moreover, since $\psi_s$ is idempotent, we can also write for $x_i \in \dom(\rho)$ that 
		\begin{equation*}
			h'(x_i) = \rho(x_i) = \psi_s(\rho(x_i), \rho(x_i)) = \psi_s(h_j(x_i),h_{j+1}(x_i)).
		\end{equation*}

		Therefore, as $\psi_t$ and $\psi_{t+1}$ are adjacent, we have by definition of adjacency that
		\begin{equation*}
			\fC(h'(\vx)) \leq M \cdot f^{\C^2}(h_j(\vx),h_{j+1}(\vx)) = M \cdot (\fC(h_j(\vx))+\fC(h_{j+1}(\vx))).
		\end{equation*}
	\end{enumerate}	
	This concludes the proof that $\val^\rho_{\A'}(h') \leq \sum_{j} \val^\rho_{\A^+[B_j]}(h_j)$ and hence $h'$ is the solution we seek.
	
	To check the running time, observe that every vertex of $G'$ is contained in $\lambda^{-1}(B_j)$ for at most two $j$.
	Hence the total size of graphs $G'[\lambda^{-1}(B_j)]$ is at most $2\lvert G'\rvert$.
	Since $f_2$ is sub-additive, the total time required to consider those graphs in this recursive call and all sub-calls contributes a factor of at most $f_2(2^{t_{\max}}\lvert G'\rvert)$. 
\end{proof}

\section{Maximisation}\label{sec:maxsol}
To present our algorithm for maximisation, we first define what it means for two left-hand side structures $\A,\B$ to be ``close'', in a sense relevant to approximately solving $\MaxSol$.
We then show that there is a dual view which allows to certify ``closeness'' by a fairly concrete mapping: a distribution of partial homomorphisms.
This is then used to show that values given by Sherali-Adams linear programming relaxations of $\MaxSol$ instances on $\A$ and on $\B$ are also close.
Since the level-$k$ Sherali-Adams relaxation solves the problem exactly on instances of treewidth $\BigO{k}$,
it gives a PTAS for classes of structures that are ``close'' to bounded treewidth, as formalised by the notion of ``strong pliability'' below.
The proofs are similar to those in~\cite{rwz21:soda}; the main new contribution is finding a suitable ``dual'' definition (a distribution of partial homomorphisms) that makes the proofs work in the $\MaxSol$ setting. 
We remark we were unable to find an analogue for the $\MinSol$ setting.

\subsection{Pliability}
\begin{definition}\label{def:overcast}
	For two left-hand side $\sigma$-structures $\A, \B$,
	we say $\A$ \emph{strongly overcasts} $\B$, denoted $\A \overcasts \B$,
	if for all \MaxSol $\sigma$-structures $\C$, $\maxval(\A, \C) \geq \maxval(\B, \C)$.
\end{definition}

In contrast,~\cite{rwz21:soda} defined (weak) overcasting in terms of $\Qpos$-valued structures $\C$ only,
instead of the wider class of \MaxSol structures.
The ``strong'' qualifier is only to avoid confusion with~\cite{rwz21:soda}:
we will not consider weak overcasts in this paper, nor analogous weak variants of the definitions given below.

\begin{definition}\label{def:dopt}
	The \emph{strong opt-distance} between two left-hand side $\sigma$-structures $\A$ and $\B$ is defined as
	\begin{equation*}
		\dopt(\A, \B) \coloneqq \inf \set{\eps \mid \A \overcasts e^{-\eps} \B \text{ and } \B \overcasts e^{-\eps} \A}.
	\end{equation*}
\end{definition}
\begin{observation}
	Using the fact that $\maxval(\lambda\A,\C)=\lambda \maxval(\A,\C)$,
	it is an easy exercise to see that $\dopt(\A, \B) = \infty$
	if exactly one of $\maxval(\A,\C),\maxval(\B,\C)$ is $-\infty$,
	or exactly one of them is $0$, for some \MaxSol $\sigma$-structure $\C$;
	otherwise
	\begin{align*}
		\dopt(\A, \B) = \sup \set{\eps \mid \A \not\overcasts e^{-\eps} \B \text{ or } \B \not\overcasts e^{-\eps} \A} =
		\sup_{\C} \abs{\ln \maxval(\A,\C) - \ln \maxval(\B,\C)}.
	\end{align*}
	where the latter supremum is over all \MaxSol $\sigma$-structures $\C$
	such that neither is $-\infty$ nor $0$.
	It follows that $\dopt$ is symmetric and satisfies the triangle inequality.
\end{observation}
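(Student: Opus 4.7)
The plan is to unfold \cref{def:dopt} using the rescaling identity $\maxval(\lambda\A,\C)=\lambda\maxval(\A,\C)$, which converts the overcasting relation $\A \overcasts e^{-\eps}\B$ into the elementary inequality $\maxval(\A,\C) \geq e^{-\eps}\maxval(\B,\C)$ for every \MaxSol structure $\C$. The first step is the remark that as $\eps$ grows the factor $e^{-\eps}$ shrinks, so this inequality becomes easier to satisfy; hence the set
\[ S \coloneqq \set{\eps \geq 0 \mid \A \overcasts e^{-\eps}\B \text{ and } \B \overcasts e^{-\eps}\A} \]
is upward closed in $[0,\infty)$, and $\dopt(\A,\B)=\inf S = \sup\bc{[0,\infty)\setminus S}$. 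This already gives the first of the two claimed equalities.

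Next I would isolate the infinite cases. If some \MaxSol structure $\C$ witnesses $\maxval(\A,\C)=-\infty$ but $\maxval(\B,\C) > -\infty$, then $\A \overcasts e^{-\eps}\B$ becomes $-\infty \geq e^{-\eps}\maxval(\B,\C)$, which is impossible for any finite $\eps$ once $\maxval(\B,\C)>0$; an entirely analogous obstruction arises in the $0$-case, where $\maxval(\A,\C)=0$ and $\maxval(\B,\C)>0$ force $0 \geq e^{-\eps}\maxval(\B,\C)>0$. In either situation $S=\emptyset$ and so $\dopt(\A,\B)=\infty$, in line with the observation. In the complementary situation, for every \MaxSol $\C$ the pair $\bc{\maxval(\A,\C),\maxval(\B,\C)}$ is either $(-\infty,-\infty)$, $(0,0)$, or in $\QQ_{>0}\times\QQ_{>0}$; the first two subcases satisfy both overcasting inequalities trivially and contribute nothing to $S$, and in the third subcase the joint condition rewrites as $\abs{\ln\maxval(\A,\C)-\ln\maxval(\B,\C)} \leq \eps$. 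Taking the supremum over these $\C$ yields the closed-form expression.

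Symmetry then follows immediately from the manifestly symmetric roles of $\A$ and $\B$ in \cref{def:dopt}. For the triangle inequality, given three structures $\A,\B,\B'$ and any $\eps_1 > \dopt(\A,\B)$, $\eps_2 > \dopt(\B,\B')$, I would chain
\[ \maxval(\A,\C) \;\geq\; e^{-\eps_1}\maxval(\B,\C) \;\geq\; e^{-\eps_1-\eps_2}\maxval(\B',\C) \]
for every \MaxSol $\C$, together with the mirror chain in the other direction, to conclude $\dopt(\A,\B') \leq \eps_1+\eps_2$; taking infima over $\eps_1,\eps_2$ closes the argument. The only bookkeeping point is the case $\maxval(\B,\C)=-\infty$, where the middle inequality carries no information on its own; but the reverse overcasting $\B \overcasts e^{-\eps_1}\A$ then forces $\maxval(\A,\C)=-\infty$ as well (and symmetrically for $\B'$), so the chain harmlessly degenerates to $-\infty \geq -\infty$. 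I do not anticipate any obstacle beyond this routine case-splitting, since the whole observation amounts to a direct unpacking of the definitions.
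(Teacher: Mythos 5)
Correct: the paper offers no proof (it asserts this as an easy exercise), and your argument — unfolding the definition of $\dopt$ via $\maxval(\lambda\A,\C)=\lambda\maxval(\A,\C)$, observing upward-closure of the feasible set of $\eps$, splitting into the $(-\infty,\cdot)$, $(0,\cdot)$ and both-positive cases, and chaining overcasts for the triangle inequality — is exactly the intended reasoning and is sound. One micro-nit: in the ``exactly one is $-\infty$'' case you state the obstruction only when $\maxval(\B,\C)>0$, but the subcase $\maxval(\B,\C)=0$ is handled by the very same inequality, since $-\infty\geq e^{-\eps}\cdot 0=0$ also fails for every finite $\eps$.
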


The only graph parameter $\p$ we consider in this paper will be treewidth, $\tw$.
Just as in~\cite{rwz21:soda}, one can prove that treedepth, or the Hadwiger number, give rise to equivalent definitions.

\begin{definition}
	For a graph parameter $\p$, a class of $\QQ_{\geq 0}$-valued structures $\cA$ is \emph{strongly $\p$-pliable} (with \emph{rate} $k(\eps)$) if for all $\eps > 0$ there exists $k=k(\eps)$ such that for every $\sigma$-structure $\A \in \cA$ there exists a $\QQ_{\geq 0}$-valued $\sigma$-structure $\B$ with $\p(\B) \leq k$ and $\dopt(\A, \B) \leq \eps$.
\end{definition}

\subsection{Duality}

\begin{definition}[partial functions and homomorphisms]\label{def:phom-monotone}
	For a partial function $g \colon A \to B$ and a tuple $\vx \in A^n$,
	$g(\vx)$ is defined as $(g(x_1),\dots,g(x_n)) \in B^n$ if all coordinates are defined, and is undefined otherwise.
	For $\vy \in B^n$, we define $g^{-1}(\vy) \coloneqq \{ \vx \in A^n \mid g(\vx) \text{ is defined and equal to } \vy\}$.
	
	For left-hand side $\sigma$-structures $\A,\B$, 
	a \emph{partial homomorphism} from $\A$ to $\B$ is a partial function $g \colon A \to B$ such that:
  for any positive tuple $(f,\vx) \in \tup_{>0}(\A)$, there is a positive tuple $(f,\vy) \in \tup_{>0}(\B)$ such that $y_i = g(x_i)$ whenever $g(x_i)$ is defined (and $y_i$ is arbitrary otherwise --- in particular $y_i \neq y_j$ is allowed even if $x_i=x_j$).
	We denote the set of partial homomorphisms from $\A$ to $\B$ by $\phom(\A,\B)$.
\end{definition}

\begin{remark}
	Partial homomorphisms can also be understood as follows.
	For a left-hand side $\sigma$-structure~$\B$,
	let $\B^+$ be the left-hand side $\sigma$-structure with domain $B \cup \{\star\}$, where~$\star$~is a new element,
	where the value for $f \in \sigma$ of arity $n$ and an input $\vx \in {(B \cup \{\star\})}^n$ is defined as
	\[f^{\B^+}(\vx) \coloneqq \max_{\substack{\vy \in B^n \\ \vx \poleq[\star] \vy}} f^{\B}(\vy).\]
	In particular $f^{\B^+}(\vx) = f^{\B}(\vx)$ for $\vx \in B^n$.
	Let $\Pos(\A)$ be the relational $\sigma$-structure consisting of positive tuples of $\A$. 
	Then a partial homomorphism $g$ from $\A$ to $\B$ is the same as a homomorphism from $\Pos(\A)$ to $\Pos(\B^+)$ (undefined assignments are the same as assignments to $\star$).
\end{remark}

\begin{lemma}\label{lem:duality}
	Let $\A, \B$ be left-hand side $\sigma$-structures.
	Then, the following are equivalent:
	\begin{itemize}
		\item $\A$ strongly overcasts $\B$, i.e.\ for all \MaxSol $\sigma$-structures $\C$, $\maxval(\A, \C) \geq \maxval(\B, \C)$;
		\item there is a distribution of partial homomorphisms $\omega \colon
      \phom(\A, \B) \to \Qpos$ $(\sum_g \omega(g) = 1)$ such that
		\[
			\Eg \fA(g^{-1}(\vy)) \geq \fB(\vy)
			\quad \quad \text{ for all } (f,\vy) \in \tup(\B).
		\]
	\end{itemize}
	(Here $\fA(g^{-1}(\vy))$ is a shorthand for $\sum\fA(\vx)$ over all $\vx \in g^{-1}(\vy)$, i.e.~all $\vx \in A^{\ar(f)}$ such that $g(\vx)$ is defined and equal to $\vy$.)
\end{lemma}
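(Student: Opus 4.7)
The plan is to prove the two implications separately: the easy direction by a probabilistic rounding, and the converse by LP duality with a gadget construction. For the direction from distribution to overcasting, fix a \MaxSol structure $\C$, assume $\maxval(\B,\C) > -\infty$, and let $h^* \colon B \to C$ be optimal. For each $g \in \phom(\A,\B)$ define $h_g \colon A \to C$ by $h_g(x) \coloneqq h^*(g(x))$ where $g$ is defined and $h_g(x) \coloneqq c_\bot$ otherwise. For every $(f,\vx) \in \tup_{>0}(\A)$, the partial-homomorphism condition supplies a positive $(f,\vy) \in \tup_{>0}(\B)$ agreeing with $g$ on the defined coordinates, whence $h_g(\vx) \poleq[c_\bot] h^*(\vy)$; the \MaxSol property combined with $\fC(h^*(\vy)) \geq 0$ then forces $\fC(h_g(\vx)) \geq 0$, so $h_g$ is feasible. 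Splitting the sum defining $\val(h_g)$ according to whether $g(\vx)$ is fully defined and discarding the remaining non-negative contributions, one obtains $\val(h_g) \geq \sum_{(f,\vy) \in \tup(\B)} \fC(h^*(\vy)) \fA(g^{-1}(\vy))$. Taking $\Eg$ and invoking the hypothesis (using $\fC(h^*(\vy)) \geq 0$ on positive $\vy$, while for non-positive $\vy$ the partial-hom property forces $\fA(g^{-1}(\vy)) = 0$) yields $\Eg \val(h_g) \geq \maxval(\B,\C)$, so some particular $h_g$ witnesses $\maxval(\A,\C) \geq \maxval(\B,\C)$.

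For the converse I would argue by contrapositive using LP duality. Since $\phom(\A,\B)$ is finite, the vectors $v_g \coloneqq (\fA(g^{-1}(\vy)))_{(f,\vy) \in \tup(\B)}$ form a finite subset of $\mathbb{R}^{\tup(\B)}$, and non-existence of $\omega$ amounts to the vector $(\fB(\vy))_{(f,\vy)}$ not being coordinate-wise dominated by any convex combination of the $v_g$'s --- equivalently, to its lying outside the closed convex set $K - \mathbb{R}^{\tup(\B)}_{\geq 0}$, where $K = \mathrm{conv}\{v_g : g \in \phom(\A,\B)\}$. The strict separating hyperplane theorem then supplies coefficients $c_{f,\vy} \geq 0$ (non-negativity forced by the orthant component of the separated set) and a threshold $\gamma$ with $\sum_{(f,\vy)} c_{f,\vy} \fA(g^{-1}(\vy)) \leq \gamma < \sum_{(f,\vy)} c_{f,\vy} \fB(\vy)$ for every $g \in \phom(\A,\B)$.

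From this hyperplane I would manufacture a \MaxSol structure $\C$ witnessing $\maxval(\A,\C) < \maxval(\B,\C)$, contradicting $\A \overcasts \B$. Put $C \coloneqq B \sqcup \{c_\bot\}$ with $c_\bot$ a fresh element, and for each $f \in \sigma$ set $\fC(\vy) \coloneqq c_{f,\vy}$ on $\vy \in \tup_{>0}(\B)$; $\fC(\vy) \coloneqq 0$ on non-positive tuples obtainable from some positive $\vz \in \tup_{>0}(\B)$ by replacing coordinates with $c_\bot$; and $\fC(\vy) \coloneqq -\infty$ otherwise. The feasible set $\{\vy : \fC(\vy) \neq -\infty\}$ is downward closed under $\poleq[c_\bot]$ by transitivity of the partial order, so $\C$ is \MaxSol. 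The identity on $B$ is feasible and gives $\val(\id_B) = \sum c_{f,\vy} \fB(\vy)$. Conversely, any feasible $h \colon A \to C$ induces a partial function $g_h$ (defined exactly where $h \neq c_\bot$) which is automatically a partial homomorphism $\A \to \B$, because each positive tuple of $\A$ must map into the feasible set of $\fC$; tuples with at least one $c_\bot$ coordinate contribute $\fC = 0$, so $\val(h) = \sum c_{f,\vy} \fA(g_h^{-1}(\vy)) \leq \gamma < \val(\id_B) \leq \maxval(\B,\C)$. The main difficulty is engineering this gadget so that feasibility of $h \colon A \to C$ matches partial-homomorphism-ness bijectively; the fresh $c_\bot$ together with the \MaxSol monotonicity is exactly what makes this correspondence work, and the remainder is routine LP duality.
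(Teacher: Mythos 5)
Your proposal is correct and follows essentially the same route as the paper: the same rounding $h \circbot g$ with the bottom label $c_\bot$ for the easy direction, and for the converse the same duality-plus-gadget argument, with your structure $\C$ on $B \cup \{c_\bot\}$ (values $c_{f,\vy}$ on positive tuples of $\B$, $0$ on $c_\bot$-containing tuples with a positive extension, $-\infty$ otherwise) coinciding with the paper's construction. The only deviation is using the real separating-hyperplane theorem instead of the paper's rational Farkas variant; since \MaxSol structures must be $(\Qinf)$-valued, you should note that the coefficients can be taken rational (e.g.\ by the rational Farkas lemma, or by perturbing, as there are only finitely many strict inequalities over rational data).
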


We shall call a distribution $\omega$ from the second bullet point a \emph{strong overcast}.

\begin{proof}
	For one direction, suppose there is a distribution $\omega$ as in the second bullet and let $\C$ be a \MaxSol $\sigma$-structure with a bottom label $c_{\bot}$.
	Let $h \colon B \to C$ be a function achieving $\maxval(\B, \C)$.
	For $g \in  \phom(\A, \B)$, let $h \circbot g \colon A \to C$ denote the function which maps $a \in A$ to $h(g(a))$ if $g(a)$ is defined and to $c_{\bot}$ otherwise.
	
	Therefore,
	\[
		\maxval(\A, \C)
		\geq \Eg \val(h \circbot g)
		= \Eg \sum_{(f,\vx) \in \tup(\A)} \fA(\vx) \fC(h \circbot g(\vx)).
		\label{eq:D_left}\tag{$\star_L$}
	\]

	\noindent
	We claim the expression \cref{eq:D_left} is greater or equal to
	\[
		\maxval(\B, \C)
		= \sum_{(f,\vy) \in \tup(\B)} \fB(\vy) \fC(h(\vy)).
		\label{eq:D_right}\tag{$\star_R$}
	\]
	Indeed, suppose first that \cref{eq:D_left} is $-\infty$, or equivalently, some summand in \cref{eq:D_left} is negative.
  Then there exists $(f,\vx) \in \tup_{>0}(\A)$ and $g \in \supp(\omega)$ with $f^\C(h\circbot g(\vx)) = -\infty$.
	Since $g$ is a partial homomorphism, by definition there exists a positive tuple $\vy \in B^{\ar(f)}$ such that $y_i = g(x_i)$ whenever $g(x_i)$ is defined.
	That is, $f^\B(\vy) > 0$ and $h \circbot g(\vx) \poleq h(\vy)$.
	By the assumption that $\C$ is a \MaxSol structure, $f^\C(h(\vy)) = -\infty$.
  Since $f^\B(\vy)$ is positive, this gives a $-\infty$ summand in \cref{eq:D_right} and thus the inequality holds.
	
	Otherwise, we can assume that every summand in \cref{eq:D_left} is non-negative.
	In that case
	\begin{align*}
		\cref{eq:D_left} &= \Eg \sum_{(f,\vx) \in \tup(\A)} \fA(\vx) \fC(h \circbot g(\vx))\\
		&\geq \Eg \sum_{\substack{(f,\vx) \in \tup(\A)\\g(\vx)\text{ is defined}}} \fA(\vx) \fC(h \circbot g(\vx))\\
		&= \Eg \sum_{(f,\vy) \in \tup(\B)} \fA(g^{-1}(\vy)) \fC(h(\vy))\\		
		&= \sum_{(f,\vy) \in \tup(\B)} \fC(h(\vy)) \Eg \fA(g^{-1}(\vy))\\
		& \geq \sum_{(f,\vy) \in \tup(\B)} \fC(h(\vy)) \fB(\vy) = \cref{eq:D_right},
	\end{align*}
	where after the first inequality it is still true that all summands are non-negative,
	and hence the last inequality follows from the assumption about $\omega$.
	This concludes the proof that $\maxval(\A,\C) \geq \maxval(\B,\C)$.	
	
	\bigskip	
	
	For the converse, we will use the following variant of Farkas' Lemma~\cite[Lemma A.2]{rwz20:arxiv-v3}:
	\emph{
		\mbox{Let $A$ be an $m \times n$ rational matrix and $\bar{b} \in \QQ^m$.}
		Exactly one of the following holds:
		\begin{itemize}
			\item there are $x_i\in\QQ_{\geq 0}$ ($i=1,\dots,n$) such that $\sum_i x_i = 1$ and $\sum_i A_{i,j} x_i \geq b_j$ for $j=1,\dots,m$;
			\item there are $y_j\in\QQ_{\geq 0}$ ($j=1,\dots,m$) such that $\sum_j A_{i,j} y_j < \sum_j b_j y_j$ for $i=1,\dots,n$.
		\end{itemize}
	}

	Suppose there is no distribution $\omega$ as in the second bullet.
	This means there are no numbers $\omega(g) \in \Qpos$ (for $g \in \phom(\A, \B)$)
	such that $\sum_g \omega(g) = 1$ and

	\[ \sum_{g \in \phom(\A, \B)} \omega(g) \fA(g^{-1}(\vy)) \geq \fB(\vy) \quad \quad \text{ for all } (f,\vy) \in \tup(\B).\]

	Thus by Farkas' Lemma,
	there are numbers $c(f,\vy) \in \Qpos$ (for $(f,\vy) \in \tup(\B)$) such that
	\begin{align}\label{eq:farkas}
		\sum_{(f,\vy) \in \tup(\B)} \fA(g^{-1}(\vy))\,c(f,\vy) < \sum_{(f,\vy) \in \tup(\B)} \fB(\vy) c(f,\vy)
		\quad \quad \text{for all }g \in \phom(\A, \B).
	\end{align}

	Let $\C$ be the $\sigma$-structure with domain $B \cup \{c_\bot\}$, where $c_\bot$ is a new label, and with values
	defined as follows for $f \in \sigma$ of arity $n$ and $\vy \in C^n$:
	\begin{align*}
		\fC(\vy) \coloneqq
		\begin{cases}
		-\infty &  \text{if } \forall_{\vy' \in B^n \colon  \vy' \pogeq \vy}\ \fB(\vy') = 0\\			
		& \text{ (in particular if } \vy \in B^n \text{ and } f^\B(\vy)=0 \text{)}\\
		c(f,\vy) & \text{if } \vy \in B^n \text{ and } \fB(\vy) > 0\\
		0 & \text{otherwise; that is, if\  } c_\bot \in \vy \text{ and } \exists_{\vy' \in B^n \colon \vy' \pogeq \vy}\ \fB(\vy') > 0.
		\end{cases}
	\end{align*}

	We claim that $\C$ is a \MaxSol structure. By Definition~\ref{def:maxsol}, we need to show that for each $f\in\sigma$, 
	whenever $f^\C(\vy)\geq 0$, we have $f^\C(\vx)\geq$ for all $\vx \poleq \vy$.
	Let $f^\C(\vy)=c(f,\vy)$ (second case in the definition of $f^\C$) and $\vx\poleq\vy$. If $\vx\in B^n$ then $f^\C(f,\vx)=c(f,\vx)$ and the claim holds as $c(f,\vx)\geq 0$. If $c_\bot\in\vy$ then $f^\C(\vy)=0$ and the claim holds again. Finally, if $f^\C(\vy)=0$ from the third case in the definition of $f^\C$, then for any $\vx\poleq\vy$
	we have $f^\C(\vx)=0$ (from the third case).
	To show that $\maxval(\A, \C) < \maxval(\B, \C)$,
	we claim that for every function $g$ from $A$ to $C = B \cup \{c_\bot\}$, we have the following strict inequality:
\begin{align*}
		\val(g) &= \sum_{(f,\vx) \in \tup(\A)} \fA(\vx) \fC(g(\vx))
		\ =\ \sum_{(f,\vy) \in \tup(\C)} \fA(g^{-1}(\vy)) \fC(\vy) \\
		&< \sum_{(f,\vy) \in \tup(\B)} \fB(\vy) \fC(\vy)
		\ =\ \val(\iota)\ \leq\ \maxval(\B, \C),
	\end{align*}
	where $\iota$ denotes the inclusion function from $B$ to $C$.
	
	Indeed, suppose first that $g$, as a partial function from $A$ to $B$, is not a partial homomorphism.
  Then there is an $(f,\vx) \in \tup_{>0}(\A)$ such that for all $\vy \in B^{\ar(f)}$ with $\vy \pogeq g(\vx)$ we have  $\fB(\vy)=0$.
	Thus $\fC(g(\vx))=-\infty$ by definition.
	Thus the left-hand side of the inequality is $-\infty$, while the right-hand side is always non-negative.
	
	Otherwise, we have
	\begin{align*}
		\sum_{(f,\vy) \in \tup(\C)} \fA(g^{-1}(\vy)) \fC(\vy)
		\leq \sum_{(f,\vy) \in \tup(\B)} \fA(g^{-1}(\vy)) \fC(\vy)
		\leq \sum_{(f,\vy) \in \tup(\B)} \fA(g^{-1}(\vy)) c(f,\vy)\\
		< \sum_{(f,\vy) \in \tup(\B)} \fB(\vy) c(f,\vy) = \sum_{(f,\vy) \in \tup(\B)} \fB(\vy) \fC(\vy).
	\end{align*}
	The first inequality follows from the fact that for $(f,\vy) \in \tup(\C) \setminus \tup(\B)$ we have $\fC(\vy) \leq 0$.
	The second follows from $f^\C(\vy) \leq c(f,\vy)$.
	The third, strict inequality follows from~\cref{eq:farkas} since $g$ (as a partial function from $A$ to $B$) is a partial homomorphism.
	The final equality follows from the fact that if $f^\C(\vy) \neq c(f,\vy)$ for $(f,\vy) \in \tup(\B)$, then $f^\B(\vy)=0$.
	This concludes the proof that $\maxval(\A, \C) < \maxval(\B, \C)$.
\end{proof}

\subsection{PTAS}

We first define the Sherali-Adams LP hierarchy~\cite{sa90} for \MaxSol.
Let $(\A,\C)$ be an instance of \MaxSol over a signature $\sigma$ and let $k\geq\max_{f\in\sigma}\ar(f)$.
We write $\binom{A}{\leq k}$ for the set of subsets of $A$ with at most $k$ elements.
The \emph{Sherali-Adams relaxation of level $k$}~\cite{sa90} of $(\A,\C)$ is the
linear program given in \cref{fig:sa}, denoted by $\SA{k}(\A,\C)$, which has
one variable $\lambda(X,s)$ for each $X\in\binom{A}{\leq k}$ and each $s\colon X \to C$.
We denote by $\maxval_k(\A,\B)$ the optimum value of $\SA{k}(\A,\C)$, and define
$\maxval_k(\A,\B)=-\infty$ if $\SA{k}(\A,\C)$ is infeasible.

\begin{figure*}[t]
	\centering
	\fbox{\parbox{.98\textwidth}{
			\begin{align*}
			\max \sum_{(f,\vx) \in \tup{\A}, \; s\colon \toset{\vx} \to C}& \lambda(\toset{\vx},s) \fA(\vx) \fC(s(\vx)) \\
			\lambda(X,s) &= \sum_{r\colon Y\to C,\, r|_{X}=s} \lambda(Y,r) & \text{for $X \subseteq Y\in\textstyle\binom{A}{\leq k}$ and $s\colon X\to C$}  \\
			\sum_{s\colon X \to C} \lambda(X,s) &= 1   &\text{for $X\in\textstyle\binom{A}{\leq k}$} \\
			\lambda(\toset{\vx},s) &= 0   &\text{$\forall (f,\vx)\in\tup(\A)$ with $\fA(\vx)\fC(s(\vx))=-\infty$} \\
			\lambda(X,s) &\geq 0   &\text{for $X\in\textstyle\binom{A}{\leq k}$ and $s\colon X \to C$}
			\end{align*}
	}}
	\caption{$\SA{k}(\A,\C)$, the Sherali-Adams relaxation of level $k \geq \max_{f\in\sigma} \ar(f)$ of $(\A,\C)$.}\label{fig:sa}
\end{figure*}

\begin{observation}\label{obs:linearity}
	Let $(\A,\C)$ be an instance of \MaxSol, $k \geq \max_{f\in\sigma} \ar(f)$ and $\lambda \geq 0$.
	Then,
	$\maxval(\lambda \A,\C) = \lambda \maxval(\A, \C)$
	and
	$\maxval_k(\lambda \A,\C) = \lambda \maxval_k(\A, \C)$.
\end{observation}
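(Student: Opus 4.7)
The plan is to verify both identities by direct unfolding of definitions, exploiting the fact that the rescaling $\lambda \A$ multiplies every value $\fA(\vx)$ by the single nonnegative scalar $\lambda$ while leaving the signature, domain, and $\tup(\A)$ unchanged.

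First, for $\maxval(\lambda \A, \C) = \lambda \maxval(\A, \C)$, I would fix an arbitrary assignment $h \colon A \to C$ and expand the definition of $\val$: every summand $\fA(\vx) \fC(h(\vx))$ becomes $\lambda \fA(\vx) \fC(h(\vx))$, so that $\val_{\lambda \A}(h) = \lambda\, \val_\A(h)$ term by term. For $\lambda > 0$ this scales every assignment's value by the same positive constant, so the maximum over $h$ scales identically; for $\lambda = 0$ see the note below.

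Second, for $\maxval_k(\lambda \A, \C) = \lambda \maxval_k(\A, \C)$, I would compare the two linear programs $\SA{k}(\A,\C)$ and $\SA{k}(\lambda\A,\C)$ side by side. The consistency constraints, normalisation constraints, and nonnegativity constraints are stated purely in terms of $A$, $\C$, and $k$, and so are identical in both programs. The only constraint that a priori depends on $\A$ is the infeasibility cut $\lambda(\toset{\vx},s)=0$ whenever $\fA(\vx)\fC(s(\vx))=-\infty$, but for $\lambda > 0$ this is preserved, since $\lambda \fA(\vx) \fC(s(\vx)) = -\infty$ iff $\fA(\vx) \fC(s(\vx)) = -\infty$. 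The feasible regions therefore coincide, while the objective is multiplied by $\lambda$, so the optimum scales by $\lambda$ as well.

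The main subtlety, and hence the main (though very minor) obstacle, is the degenerate case $\lambda = 0$, where all weights of $\lambda\A$ vanish and the infeasibility cuts stop firing. I would handle this using the convention $0 \cdot (\pm\infty) = 0$ on summands with $\fA(\vx) = 0$: every assignment then has value $0$ in $(0 \cdot \A, \C)$ and the LP objective is identically $0$, so both identities reduce to $0 = 0$. I would state this convention explicitly at the start of the proof to keep the main calculation clean.
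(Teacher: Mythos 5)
Your proposal is correct and is exactly the justification the paper intends: the observation is stated without proof precisely because it follows by unfolding $\val_{\lambda\A}(h)=\lambda\,\val_\A(h)$ term by term and noting that the level-$k$ Sherali--Adams programs for $\A$ and $\lambda\A$ have the same feasible region (the infeasibility cuts being unaffected by scaling with $\lambda>0$) with the objective scaled by $\lambda$. Your explicit handling of the degenerate case $\lambda=0$ via the convention $0\cdot(\pm\infty)=0$ is a reasonable touch, consistent with how the paper implicitly treats zero-weight tuples, and the paper only ever invokes the observation with positive scaling factors anyway.
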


\begin{observation}\label{obs:SA-relax}
	Let $(\A,\C)$ be an instance of \MaxSol. Then, for any $k \geq \max_{f \in \sigma} \ar(f)$,
	$\maxval_k(\A, \C) \geq \maxval(\A, \C)$.
\end{observation}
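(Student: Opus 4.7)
The plan is to show that every integral feasible assignment lifts to a feasible point of the Sherali-Adams LP with the same objective value, which immediately gives $\maxval_k(\A,\C) \geq \maxval(\A,\C)$. The case $\maxval(\A,\C) = -\infty$ is vacuous (any value dominates $-\infty$, and we may adopt the convention that the LP value is at least $-\infty$), so I may assume there exists a feasible assignment $h \colon A \to C$ with $\val(h) = \maxval(\A,\C) \in \QQ$.

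Given such an $h$, I would define the lift $\lambda$ by
\[ \lambda(X,s) \coloneqq \begin{cases} 1 & \text{if } s = h|_X, \\ 0 & \text{otherwise,} \end{cases} \]
for every $X \in \binom{A}{\leq k}$ and $s \colon X \to C$. The three structural constraints of the LP in~\cref{fig:sa} are then straightforward to verify: the marginalization constraint for $X \subseteq Y$ holds because the unique $r \colon Y \to C$ with $r|_X = h|_X$ and $\lambda(Y,r) = 1$ is $r = h|_Y$; the simplex constraint $\sum_{s \colon X \to C} \lambda(X,s) = 1$ holds because exactly one $s$, namely $h|_X$, is given mass one; and non-negativity is immediate.

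The only slightly delicate point is the feasibility constraint forcing $\lambda(\toset{\vx},s) = 0$ whenever $\fA(\vx)\fC(s(\vx)) = -\infty$. For $s = h|_{\toset{\vx}}$, we have $s(\vx) = h(\vx)$, so I need $\fA(\vx)\fC(h(\vx)) \neq -\infty$ whenever $\lambda(\toset{\vx}, s) = 1$. This is exactly the feasibility of $h$: because $\val(h)$ is finite, for every $(f,\vx) \in \tup(\A)$ with $\fA(\vx) > 0$ we must have $\fC(h(\vx)) \neq -\infty$ (otherwise the sum defining $\val(h)$ would be $-\infty$). Hence all forbidden coordinates are indeed set to $0$.

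Finally, evaluating the LP objective on $\lambda$ gives
\[ \sum_{(f,\vx) \in \tup(\A),\, s \colon \toset{\vx} \to C} \lambda(\toset{\vx},s)\,\fA(\vx)\,\fC(s(\vx)) = \sum_{(f,\vx) \in \tup(\A)} \fA(\vx)\,\fC(h(\vx)) = \val(h) = \maxval(\A,\C), \]
since only the term with $s = h|_{\toset{\vx}}$ survives. The hardest aspect is really just bookkeeping — ensuring the $-\infty$ feasibility constraint is respected — and this is handled automatically by the finiteness of $\val(h)$. Because the constructed $\lambda$ is a feasible solution of $\SA{k}(\A,\C)$, its objective value lower-bounds the LP optimum, yielding $\maxval_k(\A,\C) \geq \maxval(\A,\C)$ as desired.
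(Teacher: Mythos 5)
Your proof is correct and takes essentially the same route as the paper: lift an optimal assignment $h$ to the indicator solution $\lambda(X,s)=\I{s=\restr{h}{X}}$, which the paper simply declares ``trivially feasible'' with value $\maxval(\A,\C)$. Your additional bookkeeping (the vacuous $-\infty$ case and checking the constraint $\lambda(\toset{\vx},s)=0$ when $\fA(\vx)\fC(s(\vx))=-\infty$ via finiteness of $\val(h)$) just makes explicit what the paper leaves implicit.
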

\begin{proof}
	Let $h \colon A \to C$ be an optimal solution to $(\A,\C)$.
	Consider the solution $\lambda(X,s)=\I{s=\restr h X}$ for $\SA{k}(\A,\C)$.
	It is trivially feasible and achieves the value $\maxval(\A,\C)$.
\end{proof}

The following easy result (proved in \cref{sec:SA-tw}) shows that an
appropriate level of the Sherali-Adams relaxation is exact for bounded treewidth.

\begin{proposition}\label{prop:SA-tw}
	Let $(\A,\C)$ be an instance of \MaxSol and $k \geq \tw(\A)$.
	Then,
	$\maxval_k(\A, \C) = \maxval(\A, \C)$.
\end{proposition}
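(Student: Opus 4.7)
The plan is to use a standard tree-decomposition rounding argument. By~\cref{obs:SA-relax} we already have $\maxval_k(\A,\C) \geq \maxval(\A,\C)$, so it suffices to show the reverse: given any feasible LP solution $\lambda$ of value $v$, construct an integral assignment $h\colon A\to C$ with $\val(h)\geq v$.

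First I would fix a tree decomposition $(T,\{B_t\}_{t\in V(T)})$ of $\Gaifman(\A)$ whose bags all lie in $\binom{A}{\leq k}$ (this is what the assumption $k\geq\tw(\A)$ provides, modulo the usual convention about bag sizes). Root $T$ at an arbitrary node $r$. For each bag $B_t$ the LP variables $\lambda(B_t,\cdot)$ form a probability distribution over maps $B_t\to C$, and the marginalisation constraints guarantee that for any tree-adjacent $t,t'$ the marginals of $\lambda(B_t,\cdot)$ and $\lambda(B_{t'},\cdot)$ on $B_t\cap B_{t'}$ coincide.

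Next I would build a random $h\colon A\to C$ by a top-down pass on $T$: sample $h|_{B_r}$ from $\lambda(B_r,\cdot)$, and for each child $t'$ of an already-processed node $t$, sample $h|_{B_{t'}\setminus B_t}$ from $\lambda(B_{t'},\cdot)$ conditioned on the already-fixed values on $B_t\cap B_{t'}$ (well-defined thanks to the consistency above, with the usual convention $0/0=0$ when the conditioning event has probability $0$). A routine induction on the tree shows that for every bag $B_t$ the joint distribution of $h|_{B_t}$ equals $\lambda(B_t,\cdot)$. By the tree decomposition property, for every $(f,\vx)\in\tup(\A)$ the set $\toset{\vx}$ is contained in a single bag, so
\[
  \EX\bigl[\fA(\vx)\fC(h(\vx))\bigr] \;=\; \fA(\vx) \sum_{s\colon \toset{\vx}\to C}\lambda(\toset{\vx},s)\,\fC(s(\vx)).
\]
Summing over $(f,\vx)\in\tup(\A)$ gives $\EX[\val(h)] = v$, so some realisation achieves $\val(h)\geq v$.

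The one subtlety is the $-\infty$ values: a priori $\val(h)$ could be $-\infty$ on some realisations, which would make the expectation meaningless. This is exactly handled by the LP constraint $\lambda(\toset{\vx},s)=0$ whenever $\fA(\vx)\fC(s(\vx))=-\infty$. Combined with the inductive claim that the marginal of $h|_{B_t}$ equals $\lambda(B_t,\cdot)$, this ensures that for every positive tuple $(f,\vx)$ the random value $\fC(h(\vx))$ is almost surely finite, and hence $\val(h)>-\infty$ almost surely. The main thing to get right is the interplay between the conditional sampling step and the infeasibility constraints: one must check that the subset-sum constraints propagate the ``infeasibility-avoidance'' from per-bag distributions to the joint distribution along the tree. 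Once that is verified, the expectation computation goes through and the proposition follows.
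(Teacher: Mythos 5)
Your rounding argument is correct in substance, but it is a genuinely different route from the paper's. The paper does not round the LP at all: it reduces the statement to Theorem~V.1 of~\cite{crz18:focs} (exactness of Sherali--Adams for \emph{minimisation} VCSPs of bounded treewidth with $\Qposinf$-valued right-hand sides), by replacing $\C$ with $\C'$ given by $f^{\C'}(\vx)=K-\fC(\vx)$ where $K$ is the largest finite value of $\C$; this turns maximisation with $-\infty$ values into minimisation with $+\infty$ values, and the normalisation constraints $\sum_s\lambda(X,s)=1$ make the two objectives differ by the constant $K\sum_{(f,\vx)\in\tup(\A)}\fA(\vx)$. Your conditional-sampling pass down a rooted tree decomposition is the standard direct proof: it is self-contained, it produces an actual assignment rather than just an equality of values, and your handling of the $-\infty$ issue is the right one --- every positive tuple has its scope forming a clique of $\Gaifman(\A)$, hence inside a single bag, so the inductive claim $h|_{B_t}\sim\lambda(B_t,\cdot)$ plus the marginalisation constraints give $h|_{\toset{\vx}}\sim\lambda(\toset{\vx},\cdot)$, and the constraint $\lambda(\toset{\vx},s)=0$ for infeasible $s$ makes $\fC(h(\vx))$ finite with probability one (all distributions here are finitely supported, so there is no measure-theoretic subtlety and linearity of expectation applies). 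What each approach buys: the paper's proof is three lines given the external theorem; yours avoids the value shift and the appeal to the minimisation framework entirely.

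One point you should not leave to ``the usual convention about bag sizes'': your construction needs every bag to lie in $\binom{A}{\leq k}$, i.e.\ a decomposition of width at most $k-1$, whereas $k\geq\tw(\A)$ only guarantees bags of size $k+1$, for which the level-$k$ relaxation has no variables (and pairwise-consistent marginals on $k$-sets need not extend to a joint distribution on a $(k{+}1)$-set). This is not purely cosmetic at the boundary: take $\A$ to be a triangle with unit binary weights and $\C$ the binary disequality structure on a two-element domain; then $\tw(\A)=2$ but $\maxval_2(\A,\C)=6>4=\maxval(\A,\C)$, so exactness genuinely requires the level to exceed the maximum bag size. The paper inherits its indexing from Theorem~V.1 of~\cite{crz18:focs}, and the one unit of slack is immaterial where the proposition is used (\cref{prop:tract}, where $k$ is just some function of $\eps$), but in your write-up you should state the hypothesis your argument actually uses, e.g.\ $k\geq\tw(\A)+1$, rather than waving at a convention.
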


\begin{definition}\label{def:overcasts_k}
	Let $\A$ and $\B$ be left-hand side $\sigma$-structures, and $k \geq \max_{f\in\sigma} \ar(f)$.
	We write $\A \overcasts_k \B$ if for all \MaxSol $\sigma$-structures $\C$ we have
	$\maxval_k(\A,\C) \geq \maxval_k(\B,\C)$.
\end{definition}

Using the dual characterisation of strong overcasts (\cref{lem:duality}), we can show
(and prove in \cref{sec:overcast-sa}) the following. 

\begin{proposition}\label{prop:overcast-sa}
	Let $\A$ and $\B$ be left-hand side $\sigma$-structures, and $k \geq \max_{f\in\sigma} \ar(f)$.
	If $\A \overcasts \B$, then $\A \overcasts_k \B$.
\end{proposition}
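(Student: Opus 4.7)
The plan is to apply the duality from~\cref{lem:duality}: the hypothesis $\A \overcasts \B$ yields a distribution $\omega$ on $\phom(\A,\B)$ with $\Eg \fA(g^{-1}(\vy)) \geq \fB(\vy)$ for every $(f, \vy) \in \tup(\B)$. Fix a \MaxSol{} structure $\C$ and an optimal Sherali--Adams solution $\mu$ for $\SA{k}(\B, \C)$ (if $\SA{k}(\B, \C)$ is infeasible, the claim is vacuous). I want to build a feasible solution $\lambda$ for $\SA{k}(\A, \C)$ of value at least $\maxval_k(\B, \C)$.

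For each $g \in \phom(\A, \B)$, I would set $\lambda_g(X, s) \coloneqq \mu(g(X \cap \dom g), r_s)$ when $s \colon X \to C$ is \emph{$g$-consistent}, meaning $s(x) = c_\bot$ for $x \in X \setminus \dom g$ and $s(x) = s(x')$ whenever $g(x) = g(x')$; here $r_s \colon g(X \cap \dom g) \to C$ is the map induced by $s$. Otherwise set $\lambda_g(X, s) = 0$, and then $\lambda \coloneqq \Eg \lambda_g$. Non-negativity, the normalisation constraint, and the marginalisation constraint between $X \subseteq X'$ in $\binom{A}{\leq k}$ all follow pointwise in $g$ from the corresponding properties of $\mu$, via the bijection between $g$-consistent maps $X \to C$ and total assignments $g(X \cap \dom g) \to C$.

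The key feasibility check is the infeasibility constraint: $\lambda_g(\toset{\vx}, s) = 0$ whenever $(f, \vx) \in \tup_{>0}(\A)$ and $\fC(s(\vx)) = -\infty$. The partial-homomorphism property of $g$ produces some $\vy \in B^{\ar(f)}$ with $\fB(\vy) > 0$ and $y_i = g(x_i)$ on $\dom g$; any extension $r \colon \toset{\vy} \to C$ of $r_s$ then satisfies $s(\vx) \poleq r(\vy)$ (equal on $\dom g$-coordinates, $c_\bot$ elsewhere), so the \MaxSol{} condition forces $\fC(r(\vy)) = -\infty$. Feasibility of $\mu$ therefore gives $\mu(\toset{\vy}, r) = 0$ for every such $r$, and marginalising down from $\toset{\vy}$ to $g(\toset{\vx} \cap \dom g) \subseteq \toset{\vy}$ yields $\mu(g(\toset{\vx} \cap \dom g), r_s) = 0 = \lambda_g(\toset{\vx}, s)$.

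For the value bound, expand the objective as $\Eg \sum_{(f, \vx), s} \lambda_g(\toset{\vx}, s) \fA(\vx) \fC(s(\vx))$; the infeasibility step just proved ensures every non-vanishing summand has $\fC(s(\vx)) \geq 0$, so dropping all terms with $\toset{\vx} \not\subseteq \dom g$ only decreases the sum. On the remaining terms $s(\vx) = r_s(g(\vx))$, and reindexing via $\vy \coloneqq g(\vx)$ and $r \coloneqq r_s$ rewrites the sum (for fixed $g$) as $\sum_{(f, \vy) \in \tup(\B)} \fA(g^{-1}(\vy)) \sum_r \mu(\toset{\vy}, r) \fC(r(\vy))$. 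For each $\vy$ the inner sum is non-negative whenever $\Eg \fA(g^{-1}(\vy)) > 0$, because then some $g \in \supp(\omega)$ already has $\fA(g^{-1}(\vy)) > 0$, which forces $\fB(\vy) > 0$ and hence triggers the infeasibility constraint of $\mu$; the bound $\Eg \fA(g^{-1}(\vy)) \geq \fB(\vy)$ therefore combines termwise to give the desired $\val(\lambda) \geq \maxval_k(\B, \C)$. I expect the main difficulty to be the bookkeeping around partially defined $g$: one never has an honest assignment on all of $\toset{\vx}$ when $g$ misses some coordinate, and the \MaxSol{} partial order $\poleq$ must be exploited both to transfer infeasibility from the $\vx$-side to the $\vy$-side and to drop the bottom-padded tuples in the opposite direction -- essentially the Sherali--Adams analogue of the $h \circbot g$ construction used in the first direction of~\cref{lem:duality}.
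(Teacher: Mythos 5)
Your proposal is correct and follows essentially the same route as the paper: push an optimal level-$k$ Sherali–Adams solution for $(\B,\C)$ through each partial homomorphism $g$ via the bottom-padded composition (the paper's $s \circbot g$), average over the overcast distribution from \cref{lem:duality}, and compare values termwise using $\Eg \fA(g^{-1}(\vy)) \geq \fB(\vy)$. The only difference is presentational — you verify the $-\infty$ zero-constraints explicitly as part of feasibility (using that a fully defined positive preimage forces $\fB(\vy)>0$), whereas the paper folds this into a case analysis in the value comparison — but the underlying argument is the same.
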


We are now ready to prove our main tractability result for maximisation
problems.

\begin{lemma}\label{lem:tract}
	Let $\A$ be a left-hand side $\sigma$-structure,
	$\eps \geq 0$ be small and $k\geq\max_{f\in\sigma}\ar(f)$.
	Suppose that there exists a left-hand side $\sigma$-structure $\B$
	such that $\dopt(\A, \B) \leq \eps$ and $\tw(\B)\leq k$.
	Then, for every right-hand side $\sigma$-structure $\C$, we have that
	\[ \maxval(\A,\C) \leq \maxval_k(\A,\C) \leq (1+\BigO{\eps}) \maxval(\A, \C). \]
\end{lemma}
\begin{proof}
	By definition of $\dopt$ we have that,\vspace*{-.3\baselineskip}
	\[ \A \overcastsL e^\eps \B \overcastsL e^{2\eps} \A, \vspace*{-.5\baselineskip}\]
	and so\vspace*{-.4\baselineskip}
	\[ \A \overcastsL_k e^\eps \B \overcastsL_k e^{2\eps} \A \]
	by \cref{prop:overcast-sa}.
	From \cref{obs:SA-relax,prop:SA-tw,obs:linearity} we obtain that,
  \[ \maxval(\A, \C) \leq \maxval_k(\A, \C) \leq e^{\eps} \maxval_k(\B, \C) =
  e^{\eps} \maxval(\B,\C) \leq e^{2\eps} \maxval(\A, \C).\]
	Finally, for $\eps$ small we have $e^{2\eps} = 1+\BigO{\eps}$, completing the proof.
\end{proof}
 
Since $\maxval_k(\A,\C)$ can be computed in time ${(\card{\A} \cdot
\card{\C})}^{\BigO{k}}$, we obtain that any strongly $\tw$-pliable class of
structures admits a PTAS.

\begin{corollary}\label{cor:tract}
  Let $\cA$ be a strongly $\tw$-pliable class of left-hand side structures. Then, the class of
  Max-Sol instances $(\A,\C)$ with $\A\in\cA$ admits a PTAS.
  
  Specifically, if $\cA$ is strongly $\tw$-pliable with rate $k(\eps)$,
  then given $\A\in\cA,\C,$ and $\eps>0$,
  we can output a value 
  between $\maxval(\A,\C)$ and $(1+\eps) \maxval(\A, \C)$
  in time ${(\card{\A} \cdot \card{\C})}^{\BigO{k(\Omega(\eps))}}$.
\end{corollary}

In the following subsection, we show that when we look at the class of Gaifman graphs only, the appropriate condition is fractional-treewidth-fragility.

\subsection{Fragility and pliability}

To give Dvo\v{r}\'ak's definition of fractional fragility~\cite{Dvorak16}
we first define $\eps$-thin distributions.

\begin{definition}
	Let $\mathcal{F}$ be a family of subsets of a set $V$ and $\eps > 0$.
	We say a distribution $\pi$ over $\mathcal{F}$ is \emph{$\eps$-thin} if $\Pr_{X \sim \pi} [v \in X] \leq \eps$ for all $v \in V$.
\end{definition}

We now give some intuition for the next definition. Consider the treewidth as a graph parameter. The idea of a \emph{modulator}, defined below, is to remove a subset $X$ of the vertices of a graph $G$ to obtain a bound on the treewidth of the new graph $G-X$. The fractional variant considers a distribution over modulators. 
An alternative view of fractional-$\tw$-fragility (obtained by LP duality~\cite{DvorakS20}) is that for any $\eps>0$ there is $k$ such that for any vertex weight function on $G$, a removal of a set vertices $X$ whose weight is an $\eps$-fraction of the total weight yields a graph $G-X$ of treewidth at most $k$.

\begin{definition}\label{def:fragile}
  For a graph parameter $\p$ and a number $k$, we define a
  \emph{$(\p\leq k)$-modulator} of a graph $G$ to be a set $X \subseteq
  V(G)$ such that $\p(G-X) \leq k$. A \emph{fractional $(\p\leq
  k)$-modulator} is a distribution $\pi$ of such modulators $X$.
  We say that a class of
  graphs $\cG$ is \emph{fractionally-$\p$-fragile} (with \emph{rate} $k(\eps)$)
  if for every $\eps>0$ there is a $k=k(\eps)$ such that every $G \in \cG$ has an $\eps$-thin fractional
  $(\p\leq k)$-modulator. 
\end{definition}

We need some more notation. We denote the disjoint union of graphs $G$ and $H$
by $G \uplus H$.
For $\sigma$ structures $\A_1,\dots,\A_k$, we define the $\sigma$-structure $\B =
\biguplus_{i=1}^k \A_i$ to be over the domain $B=\biguplus_{i=1}^k A_i$ and by
$\fB(\vx) = f^{\A_i}(\vx)$ whenever $(f,\vx) \in \tup(\A_i)$, and 0 otherwise.

While we are mostly interested in the following result with treewidth as the
graph parameter, we state it more generally since the proof is the same.

\begin{lemma}\label{lem:fragility-pliable}
	Let $\p$ be a monotone\footnote{$\p(H) \leq \p(G)$ for all graphs $G$ and
  subgraphs $H$ of $G$.} graph parameter such that $\p(G \uplus H) \leq
  \max(\p(G), \p(H))$ for all graphs $G$ and $H$ and $\p(G) \leq
  \p(G-v) + 1$ for all $v \in V(G)$.
	Let $\cA$ be a class of structures with bounded arity $r$ such that the class
  $\cG$ of their Gaifman graphs is fractionally-$\p$-fragile (with rate $k(\eps)$).
	Then $\cA$ is strongly $\p$-pliable (with rate $k'(\eps) = k(\Omega(\eps/r))+1$). 
\end{lemma}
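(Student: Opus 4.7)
The plan is to construct $\B$ as a weighted disjoint union of restrictions drawn from a fractional modulator. Given $\A \in \cA$ with $G = \Gaifman(\A) \in \cG$ and a target $\eps > 0$, I would first apply fractional-$\p$-fragility at thinness level $\eps' \coloneqq \eps/(2r)$ to obtain a $k = k(\eps')$ and an $\eps'$-thin fractional $(\p \leq k)$-modulator $\pi$ of $G$, and set
\[
\B \coloneqq \biguplus_{X \in \supp(\pi)} \pi(X) \cdot \A[A \setminus X].
\]
By monotonicity of $\p$, each summand satisfies $\p(\A[A \setminus X]) \leq \p(G - X) \leq k$; the disjoint-union hypothesis then yields $\p(\B) \leq k$.

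It remains to verify $\dopt(\A, \B) \leq \eps$, i.e.\ both strong-overcasting inequalities for every \MaxSol $\sigma$-structure $\C$. For $\B \overcasts (1 - \eps' r) \A$: given an optimal feasible $h \colon A \to C$ for $(\A, \C)$, copying $h$ into each $X$-component of $\B$ yields an $h' \colon B \to C$ that is feasible (since $\A[A \setminus X]$ is a substructure of $\A$ and, by the \MaxSol property of $\C$, every feasible contribution is non-negative); swapping the sums over $X$ and over tuples gives
\[
\val_\B(h') = \sum_{(f, \vx) \in \tup_{>0}(\A)} f^\A(\vx) f^\C(h(\vx)) \Pr_{X \sim \pi}[\vx \cap X = \emptyset] \geq (1 - \eps' r) \val_\A(h)
\]
by a union bound over the $\leq r$ coordinates of each tuple.

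For $\A \overcasts \B$ I would invoke Lemma~\ref{lem:duality}: the natural candidate distribution $\omega$ of partial homomorphisms places mass $\pi(X)$ on the partial function $g_X \colon A \to B$ that maps $a \in A \setminus X$ to its copy in the $X$-component and is undefined on $X$. The required expectation equality $\Eg f^\A(g^{-1}(\vy)) = f^\B(\vy)$ for each $(f, \vy) \in \tup(\B)$ follows directly from the disjoint-union structure, since every positive tuple of $\B$ sits in a unique $X$-component and has a unique preimage under the corresponding $g_X$.

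The hardest part of the plan is verifying that each $g_X$ is genuinely a partial homomorphism. For tuples of $\A$ fully contained in $A \setminus X$ this is immediate, but a positive tuple whose coordinates straddle the modulator requires a witness positive tuple of $\B$ matching it on the non-$X$ positions, and no such tuple need exist in $\A[A \setminus X]$. I would address this by augmenting $\B$ with additional small disjoint ``witness'' components, each a single tuple on fresh vertices, to supply the missing positive tuples; since each such gadget is a clique of size at most $r$, the disjoint-union hypothesis preserves $\p(\B) \leq k$ up to an $O(1)$ additive term depending only on $r$, and the added mass can be calibrated so that the expectation inequality still holds up to a factor $1 - O(\eps)$. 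Combining both directions gives $\dopt(\A, \B) \leq -\ln(1 - O(\eps' r)) = O(\eps' r) \leq \eps$, as required.
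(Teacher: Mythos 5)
Your plan diverges from the paper at the crucial point, and the divergence is fatal: you build $\B$ by \emph{deleting} the modulator set ($\B=\biguplus_X \pi(X)\cdot\A[A\setminus X]$), whereas the paper's proof \emph{contracts} $X$ to a single fresh vertex $\star_X$, defining $f^{\A_{/X}}(\vy)=\fA(g_X^{-1}(\vy))$ (weights summed over preimages) and $\B=\biguplus_X \pi(X)\cdot\A_{/X}$. With deletion, the direction $\A\overcasts\B$ is not merely hard to certify via Lemma~\ref{lem:duality} --- it is simply false in general. Concretely, let $\sigma=\{u,f\}$ with $u$ unary and $f$ binary, let $\A$ have domain $\{a_1,a_2\}$ with $u^{\A}(a_1)=1$, $f^{\A}(a_1,a_2)=1$ and all other weights $0$, and let $\C$ have domain $\{0,1,c_\bot\}$ with $u^{\C}(1)=1$, $u^{\C}(0)=u^{\C}(c_\bot)=0$, $f^{\C}(1,y)=-\infty$ for every $y$, and $f^{\C}=0$ otherwise; this is a \MaxSol structure, since the feasible binary tuples (those with first coordinate in $\{0,c_\bot\}$) are closed under replacing coordinates by $c_\bot$. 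Then $\maxval(\A,\C)=0$, because assigning $1$ to $a_1$ is infeasible no matter where $a_2$ goes. But if $\pi$ puts mass on some $X$ with $a_2\in X$, $a_1\notin X$ (e.g.\ the $\eps$-thin modulator deleting $a_2$ with probability $\eps$), the corresponding component $\pi(X)\cdot\A[\{a_1\}]$ of your $\B$ has value $\pi(X)>0$ by assigning $1$ to $a_1$, so $\maxval(\B,\C)>\maxval(\A,\C)$. Your proposed repair --- adding fresh disjoint ``witness'' components --- cannot help: disjoint components only add their own optima to $\maxval(\B,\C)$, so they can never cancel the excess value of the deletion component (and if an added component forces $\maxval(\B,\C)=-\infty$, the other direction $\B\overcasts(1-\BigO{\eps})\A$ breaks instead). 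The problem is not missing witnesses for partial homomorphisms; it is that deleting $X$ removes the crisp constraints straddling $X$, which genuinely increases the optimum.

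Contraction fixes exactly this: in $\A_{/X}$ the straddling constraints survive with their scopes partly identified to $\star_X$ (in the example above, $f^{\A_{/X}}(a_1,\star_X)=1$ still forbids assigning $1$ to $a_1$), the map $g_X$ sending $A\setminus X$ identically and $X$ to $\star_X$ is total and hence trivially a partial homomorphism satisfying $\Eg\fA(g^{-1}(\vy))=\fB(\vy)$, and the reverse deterministic partial map (undefined on the vertices $\star_X$) gives $\B\overcasts(1-r\eps)\A$ much as in your second computation, which is essentially correct. The price of contraction is one extra vertex per component, and that is precisely why the lemma assumes $\p(G)\le\p(G-v)+1$, yielding $\p(\Gaifman(\B))\le k+1$; the fact that your argument never uses this hypothesis is a telling sign that the construction is not the intended one.
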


The proof closely follows the proof of~\cite[Lemma 4.6]{rwz20:arxiv-v3}, where
the same result was shown for several particular monotone graph parameters.

\begin{proof}
	Given $\eps > 0$, $\A \in \cA$, let $\pi$ be a fractional $(\p \leq k)$-modulator such that for every $v \in V(G)$,
	\begin{equation}\label{eq:probX}
	\Pr_{X\sim\pi}[v \in X] \leq \eps.
	\end{equation}
	For each $X \subseteq V(G)=A$ in the support of $\pi$ ($\pi(X) > 0$), define $\A_{/X}$ to be the $\sigma$-structure obtained by contracting $X$ to a single vertex and summing values.
	That is, let $\{\star_X\}$ be a new element and define $g_X: A \to (A-X) \cup \set{\star_X}$ that maps $X$ to $\star_X$ and $A-X$ identically.
	Let $\A_{/X}$ be over the domain $(A-X) \cup \{\star_X\}$ and
		\[f^{\A_{/X}}(\vy) \coloneqq \fA(g_X^{-1}(\vy)) = \sum_{\vx \in g_X^{-1}(\vy)} f^\A(\vx)\]
	for each $f \in \sigma$ of arity $n$ and each $\vy \in {((A-X)\cup\{\star_X\})}^n$.
	
	Define $\B_X = \pi(X) \cdot \A_{/X}$, and let $\B = \biguplus \B_X$.
	By definition of $\pi$ and properties of $\p$,
	we have $\p(\Gaifman(\B_X)) \leq \p(\Gaifman(\A)-X) + 1 \leq k+1$, and so
  $\p(\Gaifman(\B)) \leq k+1$.

	View $g_X$ as a function to $B$ (instead of as function to $B_X \subseteq B$), so that $g_X \colon A \to B$ is the (total) function mapping $A-X$ identically to its copy in $B_X$ and mapping~$X$~to~$\star_X$.
	It is clear that $g_X \in \phom(\A, \B)$.
	Define the strong overcast $\omega \colon \A \to \B$ to take the value $g_X$ with probability $\pi(X)$.
	To check this is indeed a strong overcast, observe that for $(f,\vy) \in \tup_{>0}(\B)$,
	there is a unique $X$ such that $(f,\vy) \in \tup(\B_X)$,
	hence
	\[\Eg \fA(g^{-1}(\vy)) = \pi(X) \fA(g_X^{-1}(\vy)) = \fB(\vy).\]

	Define $g \colon B \to A$ to be the partial function mapping each element of $B_X - \{\star_X\}$ identically to $A$, leaving it undefined on $\star_X$.
	It is clear that $g \in \phom(\B, \A)$.
	Consider the overcast $\omega' \colon \B \to (1-r\eps)\A$ that is deterministically $g$.
	To check that $\omega'$ is indeed a strong overcast, let $(f,\vx) \in \tup(\A)$.
	Then $\vx$ is covered by copies in $\B_X$ for those $X$ that do not intersect $\vx$, hence
	\begin{align*}
	\fB(g^{-1}(\vx))
	&= \EX_{X \sim \pi} \brk{\I{X \cap \vx = \emptyset} \cdot \fA (\vx)} \\
	&= \fA(\vx) \Pr_{X \sim \pi} (X \cap \vx = \emptyset) \\
	&\geq \fA(\vx) \cdot (1-r\eps),
	\end{align*}
	where the final inequality follows by~\cref{eq:probX}, the union bound, and the fact that $\card{\vx} \leq r$.
	Hence by \cref{lem:duality} applied to $\omega$ and $\omega'$,
	\[ \A \overcasts \B \overcasts (1-r\eps) \A. \]
	By construction $\p(\Gaifman(\B)) \leq k+1$.
	Thus we have shown that assuming Gaifman graphs of structures in $\cA$ are fractionally-$\p$-fragile with rate $k(\eps)$,
	then for every $\eps$ and for every structure $\A \in \cA$
	there is a structure $\B$ with $\p(\Gaifman(\B)) \leq k(\eps)+1$ and $\dopt(\A, \B) \leq \BigO{r\eps}$.
	As $r$ is fixed, this implies that $\cA$ is strongly $\p$-pliable (with rate $k'(\eps) = k(\Omega(\eps/r))+1$).
\end{proof}

\begin{remark}
In the above lemma, the assumption that $\cA$ contains structures of bounded arity $r$ can be easily lifted, at least for $\p=\tw$.
This is because the maximum arity of a structure in $\cA$ is bounded by the size of the largest clique in a Gaifman graph of a structure in $\cA$.
Since we assume that the class of their Gaifman graphs is fractionally-$\tw$-fragile with rate $k(\eps)$, the largest clique has at most $2k(\frac{1}{2})+2$ vertices (otherwise any $(\tw\leq k(\frac{1}{2}))$-modulator needs to contain more than half of the clique's vertices, and there cannot be a $\frac{1}{2}$-thin  distribution of such modulators).
Thus without loss of generality we can assume $r \leq 2k(\frac{1}{2})+2$.
\end{remark}

\begin{proof}[Proof of \cref{thm:main-maxsol}]
  Let $\cG$ be a class of graphs that is fractionally-treewidth-fragile 
  and let $\cA$ be a class of structures with bounded arity with Gaifman graphs
  in $\cG$. Since treewidth satisfies the assumptions of \cref{lem:fragility-pliable}, 
  we have that $\cA$ is strongly $\tw$-pliable. By \cref{cor:tract}, 
  $\MaxSol_\cG$ admits a PTAS.
\end{proof}

If we only look at Gaifman graphs, one cannot use the presented approach to go beyond fractionally-treewidth-fragile classes. This is because~\cite[Lemma 6.1]{rwz20:arxiv-v3} together with the above \cref{lem:fragility-pliable} implies that for a class of graph $\cG$ and an integer $r$, if $\cA^{(r)}_\cG$ denotes the class of all $\QQ_{\geq 0}$-valued structures of arity at most $r$ and whose Gaifman graphs are in $\cG$, then $\cA^{(r)}_\cG$ is strongly $\tw$-pliable if and only if  $\cG$ fractionally-treewidth-fragile.
In \cref{sec:dense}, 
we give a simple example of a class of structures (not
parametrised by their Gaifman graphs) that
is strongly tw-pliable but not captured by fractional-treewidth-fragility.

{
\small
\bibliographystyle{plainurl}
\bibliography{mwz}
}

\appendix

\section{Proof of~\cref{prop:overcast-sa}}\label{sec:overcast-sa}
We closely follow the proof of~\cite[Proposition~5.3]{rwz20:arxiv-v3} but, given
we are in a more general setting, we have to be more careful.

\begin{proof}
	Let $\C$ be a \MaxSol $\sigma$-structure, and let $\omega$ be an overcast from $\A$ to $\B$.
	Recall that for a tuple $\vx$ we denote by $\toset{\vx}$ the set of elements appearing in $\vx$.
	For a partial function $g \colon A \to B$ and a subset $X \subseteq A$,
	$g(X)$ denotes the set $\{g(x) \mid x \in X\text{ and }g(x)\text{ is defined}\}$.
	For a function $s \colon g(X) \to C$, $s\circbot g$ denotes the function from $X$ to $C$ defined as $s(g(x))$ when $g(x)$ is defined and $c_\bot$ otherwise.

	Let $\lambda$ be an optimal solution to $\SA{k}(\B,\C)$.
	That is, for each subset $Y$ of $B$ of size at most~$k$, $\lambda$ describes a distribution of functions to $C$
	using probabilities $\lambda(Y, s) \in \QQ_{\geq 0}$ for $s \colon Y \to C$.
	For fixed $g \in \supp(\omega)$, we define a solution $\lambda_g$ to $\SA{k}(\A,\C)$ by sampling $s$ from this distribution and outputting $s \circbot g$.
	Formally, for $X \in \binom{A}{\leq k}$ and $r \colon X \to C$, we define
	\[ \lambda_g(X,r) \coloneqq \sum_{s \colon g(X) \to C} \I{r = s \circbot g} \cdot \lambda(g(X), s). \]
	Note that $\lambda_g$ is a feasible solution.
	Indeed, for $X \in \binom{A}{\leq k}$ the total probability is
	\[ \sum_{r \colon X \to C} \lambda_g(X,r) = \sum_{s \colon g(X) \to C} \left(\sum_{r \colon X \to C} \I{r = s \circbot g}\right) \cdot  \lambda(g(X), s) = \sum_{s \colon g(X) \to C} 1 \cdot \lambda(g(X), s) = 1;\]
	while for $Z\subseteq X\in \binom{A}{\leq k}$ and $r\colon Z \to C$,
	the marginal probability of obtaining $r$ is
	\begin{alignat*}{3}
	\sum_{\substack{t\colon X \to C\\t|_Z = r}}\lambda_g(X,t)
	&= \sum_{\substack{t\colon X \to C\\t|_Z = r}}\ \sum_{s \colon g(X) \to C} \I{t=s\circbot g} \cdot \lambda(g(X),s)   \\
	&= \sum_{s \colon g(X) \to C}\ \sum_{\substack{t\colon X \to C\\t|_Z = r}} \I{t=s\circbot g} \cdot \lambda(g(X),s)   \\
	&= \sum_{s \colon g(X) \to C} \I{r = (s\circbot g)|_Z} \cdot \lambda(g(X),s)   \\
	&=  \sum_{s \colon g(X) \to C} \I{r = (s|_{g(Z)}) \circbot g} \cdot \lambda(g(X),s)\\
	&= \sum_{s' \colon g(Z) \to C} \Big(\I{r = s' \circbot g} \sum_{\substack{s \colon g(X) \to C\\s|_{g(Z)} = s'}} \lambda(g(X),s)\Big)\\
	&= \sum_{s' \colon g(Z) \to C} \I{r = s' \circbot g} \cdot \lambda(g(Z),s')
	&= \lambda_g(Z,r).
	\end{alignat*}
	Therefore $\maxval_k(\A,\C)$ is at least the expected value of the solution $\lambda_g$ with $g$ sampled from $\omega$:
	\begin{align}
	\maxval_k(\A,\C) &\geq \EX_{g \sim \omega} \sum_{\substack{(f,\vx) \in \tup(\A)\\r\colon\toset{\vx} \to C}} \lambda_g(\toset{\vx},r) \fA(\vx) \fC(r(\vx)) \nonumber\\
	&=\EX_{g \sim \omega}  \sum_{\substack{(f,\vx) \in \tup(\A)\\s\colon g(\toset{\vx}) \to C}} \lambda(g(\toset{\vx}),s) \fA(\vx) \fC(s \circbot g(\vx)),  \label{eq:SA_left}\tag{$\ast_L$}
	\end{align}
	by definition of $\lambda_g$.
	We claim the expression \cref{eq:SA_left} is at least
	\[\maxval_k(\B,\C) = \sum_{\substack{(f,\vy) \in \tup(\B)\\s\colon\toset{\vy} \to C}} \lambda(\toset{\vy},s) \fB(\vy) \fC(s(\vy)). \label{eq:SA_right}\tag{$\ast_R$}\]

	Indeed, suppose first that some summand in \cref{eq:SA_left} is negative.
	Then there exist $g \in \supp(\omega)$, $(f,\vx) \in \tup(\A)$, and $s \colon g(\toset{\vx}) \to C$ such that
	$\lambda(g(\toset{\vx}), s), \fA(\vx) > 0$ and $\fC(s \circbot g(\vx)) = -\infty$.
	Since $g \in \phom(\A,\B)$, there is some $\vy \in A^{\ar(f)}$ with $\fB(\vy) > 0$ such that $g(x_i)$ equals $y_i$ whenever it is defined.
	In particular $g(\toset{\vx}) \subseteq \toset{\vy}$, hence $\lambda(\toset{\vy},s') > 0$ for some $s' \colon \toset{\vy} \to C$ such that $s'|_{g(\toset{\vx})} = s$.
	This implies $s'(\vy)  \pogeq s\circbot g(\vx)$, hence $\fC(s'(\vy)) = -\infty$ by the assumption that $\C$ is a $\MaxSol$ structure.
	Therefore \cref{eq:SA_right} has a summand $\lambda(\toset{\vy},s')\fB(\vy)\fC(s'(\vy)) = -\infty$,
	so the claimed inequality \cref{eq:SA_left} ${}\geq{}$ \cref{eq:SA_right} holds.

	Otherwise, we can assume that every summand in \cref{eq:SA_left} is non-negative.
	In that case

	\begin{align*}
	\cref{eq:SA_left} = & \Eg \sum_{(f,\vx) \in \tup(\A), \, s\colon g(\toset{\vx}) \to C} \lambda(g(\toset{\vx}),s) \fA(\vx) \fC(s \circbot g(\vx))  \\
    \geq& \Eg \sum_{(f,\vx) \in \tup(\A), \, s\colon g(\toset{\vx}) \to C} \lambda(g(\toset{\vx}),s) \fA(\vx) \fC(s \circbot g(\vx)) \cdot \I{g(\vx)\text{ is defined}} \\
	=& \Eg \sum_{(f,\vy) \in \tup(\B), \, s\colon\toset{\vy} \to C} \lambda(\toset{\vy},s) \fA(g^{-1}(\vy)) \fC(s(\vy))\\
	=& \sum_{(f,\vy) \in \tup(\B),\, s\colon\toset{\vy} \to C}
	\lambda(\toset{\vy},s) \Eg \brk{\fA(g^{-1}(\vy))} \fC(s(\vy))\\
	\geq& \sum_{(f,\vy) \in \tup(\B), \, s\colon\toset{\vy} \to C} \lambda(\toset{\vy},s) \fB(\vy) \fC(s(\vy)) = \cref{eq:SA_right},
	\end{align*}
	where after the first inequality it is still true that all summands are non-negative,
	and hence the last inequality follows from the fact that $\omega$ is an overcast.
	This concludes the proof that $\maxval_k(\A,\C) \geq \maxval_k(\B,\C)$.
\end{proof}

\section{Proof of~\cref{prop:SA-tw}}\label{sec:SA-tw}
\begin{proof}
	We reduce to~\cite[Theorem~5.4]{crz22:sicomp}, which shows that bounded
  treewidth implies exact solvability for minimisation of VCSPs
  with $\Qposinf$-valued right-hand side structures.
  We recast our relaxation into the framework of~\cite{crz22:sicomp}, which gives
  a more fine-grained relaxation. 
  The SA relaxation in that paper is found in Figure 2. 
	As we restrict to $k \geq \max_{f\in\sigma}\ar(f)$, in this case variables
  $\lambda(f,\vx,s)$ in~\cite[Figure~3]{crz22:sicomp} may be replaced by variables $\lambda(\toset{\vx},s)$ by equation (SA3) and the inclusion of the dummy function $\rho_k$ in $\sigma_k$.
	The linear programs are now equivalent, except in~\cite{crz22:sicomp} right-hand
  side $\sigma$-structures are \Qposinf-valued, and it is cast as a minimisation problem.

	Let $K \coloneqq \max_{(f,\vx) \in \tup(\C)} \fC(\vx) \in \Qpos$.
	Consider a new $\sigma$-structure $\C'$ defined by
	\[ f^{\C'}(\vx) = K - \fC(\vx), \]
	so that $\C'$ is \Qposinf-valued, and thus a valued $\sigma$-structure in the framework of~\cite{crz22:sicomp}.

	We have,
	\begin{align*}
		\max \sum_{(f,\vx) \in \tup{\A}, \; s\colon \toset{\vx} \to C} \lambda(\toset{\vx},s) \fA(\vx) \fC(s(\vx)) = \\
		K \sum_{(f,\vx) \in \tup{\A}} \fA(\vx)
		-\min \sum_{(f,\vx) \in \tup{\A}, \; s\colon \toset{\vx} \to C} \lambda(\toset{\vx},s) \fA(\vx) f^{\C'}(s(\vx)).
	\end{align*}

	The left term does not depend on the LP variables $\lambda(X,s)$.
	Thus $\SA{k}$ relaxation is exact by~\cite[Theorem~5.4]{crz22:sicomp}.
\end{proof}

\section{Max-CSP vs. \MaxSol on cliques}\label{sec:dense}
In~\cite{rwz21:soda}, the PTAS results for Max-CSPs, with non-negative rational-valued
right-hand side structures, apply to many classes of dense structures as well.
This is because the class of cliques (as a $\{0,1\}$-valued structures) and in fact any class of graphs with $\Omega(n^2)$ edges, was shown to be ``$\tw$-pliable''.
In our case, because we consider more general right-hand side structures,
the definition of ``overcasts'' and ``$\tw$-pliability'' changed to ``strong overcasts'' and ``strong $\tw$-pliability'' accordingly (simply by considering all $\MaxSol$ right-hand side structures in place of just $\QQ_{\geq 0}$-valued structures).
It turns out even the simplest class of dense structures, the class of cliques, is not strongly $\tw$-pliable.

\begin{proposition}\label{prop:clique-not-tw-pliable}
	Let $\sigma$ be the signature with one symbol $f$ of arity 2.
	Let $\mathcal{A} = \{\mathbb{K}_n \mid n \in \NN \}$ be the class of cliques, as $\sigma$-structures with $f^{\mathbb{K}_n}(x,y) = 1$ if $x\neq y$ and $0$ otherwise, for $x,y \in [n]$.
	Then, $\cA$ is not strongly tw-pliable.
\end{proposition}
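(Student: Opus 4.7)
The plan is to exploit the crispness available in \MaxSol structures to separate cliques (whose chromatic number is $n$) from bounded-treewidth structures (whose chromatic number is at most $k+1$). For each positive integer $K$ I would introduce the \MaxSol $\sigma$-structure $\C^{(K)}$ over domain $\{a_1,\dots,a_K,c_\bot\}$, with $f^{\C^{(K)}}(a_i,a_i) = -\infty$ for all $i$, $f^{\C^{(K)}}(a_i,a_j)=1$ for $i\neq j$, and $f^{\C^{(K)}}=0$ on any pair containing $c_\bot$. A routine check confirms this is a \MaxSol structure with bottom element $c_\bot$. The core of the argument uses $\C^{(k+1)}$ and $\C^{(n)}$ simultaneously to squeeze any candidate $\B$ into a contradiction.

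First I would compute both sides of the $\dopt$ inequality. Because every distinct pair of vertices in $\mathbb{K}_n$ is a positive edge, the crisp diagonal of $\C^{(K)}$ forces each $a_i$-class to contain at most one vertex, so at most $\min(K,n)$ vertices receive $a$-labels (the rest go to $c_\bot$), yielding $\maxval(\mathbb{K}_n, \C^{(K)}) = \min(K,n)(\min(K,n)-1)$. For a general $\sigma$-structure $\B$, let $Y := \{v \in B : f^\B(v,v) > 0\}$; the diagonal of $\C^{(K)}$ forces $h(Y) = \{c_\bot\}$, and each $a_i$-class must be independent in $\Gaifman(\B[B \setminus Y])$. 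Hence $\maxval(\B, \C^{(K)}) = W(\B) := \sum_{x \neq y,\, x,y \in B \setminus Y} f^\B(x,y)$ whenever $K \geq \chi(\Gaifman(\B[B \setminus Y]))$, with the maximum attained by any proper $K$-colouring of $\Gaifman(\B[B \setminus Y])$ combined with $c_\bot$ on $Y$.

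If $\tw(\B) \leq k$, then the induced subgraph $\Gaifman(\B[B \setminus Y])$ has chromatic number at most $k+1$, so for $n > k+1$ both $\maxval(\B,\C^{(k+1)})$ and $\maxval(\B,\C^{(n)})$ equal the same quantity $W(\B)$, while $\maxval(\mathbb{K}_n,\C^{(k+1)}) = (k+1)k$ and $\maxval(\mathbb{K}_n,\C^{(n)}) = n(n-1)$. If $\dopt(\mathbb{K}_n,\B) \leq \eps$, applying the definition to both structures squeezes $W(\B)$ from both sides,
\[
e^{-\eps} n(n-1) \;\leq\; W(\B) \;\leq\; e^{\eps}(k+1)k,
\]
so $n(n-1) \leq e^{2\eps}(k+1)k$. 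Fixing any $\eps > 0$ and any candidate $k$, this fails for $n$ sufficiently large, so no such $\B$ exists and $\cA$ is not strongly $\tw$-pliable. The main subtlety I expect is the careful treatment of self-loops in $\B$: the $-\infty$ on the diagonal of $\C^{(K)}$ forces them onto $c_\bot$, and it is precisely this that makes $W(\B)$ a clean combinatorial quantity on the non-self-loop induced subgraph, where the chromatic-number bound applies uniformly.
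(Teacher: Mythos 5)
Your proof is correct and takes essentially the same route as the paper's: both construct colouring-type \MaxSol structures with a bottom label $c_\bot$ and $-\infty$ on monochromatic pairs of colours, and exploit that the clique's optimum keeps growing with the number of available colours while any structure of treewidth at most $k$ saturates once the number of colours reaches its chromatic number, which is at most $k+1$, contradicting $\dopt \leq \eps$. The differences are cosmetic --- you assign value $0$ rather than $1$ to pairs containing $c_\bot$, compare levels $k+1$ and $n$ with $n$ large instead of the paper's $\C_k$ versus $\C_{2k}$ at $n=2k$ with $\eps = 1/10$, and treat positive diagonal values of $\B$ explicitly via the set $Y$ --- which if anything makes the value computations cleaner.
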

\begin{proof}
	Suppose that $\cA$ is strongly tw-pliable.
	Let $\eps=1/10$. There exists some $k$ such that for all $n$ there exists a structure $\B$ with $\tw(\B) \leq k-1$ and
	$\dopt(\bbK_n, \B) \leq \eps$.
	
	Let $n \coloneqq 2k$ and $\B$ be such that $\tw(\B) \leq k$ and $\dopt(\bbK_n, \B) \leq \eps$.
	It is an easy exercise that for any graph $G$, $\chi(G) \leq \tw(G)+1$, and therefore
  $\chi(\Gaifman(\B)) \leq k$, where $\chi(G)$ denotes the chromatic number of
  $G$.
	
	Consider the following class of \MaxSol structures: for each $i$, let $\C_i$ be a $\sigma$-structure over the domain $[i] \cup \set{c_\bot}$, and let
	\begin{equation*}
		f^{\C_i}(x,y) = \begin{cases}
			-\infty	&\qquad \text{if $x=y$, $x,y \in [i]$,} \\
			0	&\qquad \text{if $x=y=c_\bot$,} \\
			1		&\qquad \text{otherwise.}
		\end{cases}
	\end{equation*}
	In other words, for a structure $\A$ the instance $(\A,\C_i)$ asks to colour the vertices of $\A$ with $i$ colours (or assign it no colour, $c_\bot$), such that there are no monochromatic edges and the total weight of edges with at least one endpoint coloured is maximised.
	Then, $\maxval(\bbK_n,\C_i)=i(n-1)$ for $i \leq n$.
	Further, for all $i \geq \chi(\Gaifman(\B))$, $\maxval(\B,\C_i)=\maxval(\B,\C_{\chi(\Gaifman(\B))})$: the optimal solution corresponds to any proper colouring of $\Gaifman(\B)$ with colours $[i]$.
	As $k \geq \chi(\Gaifman(\B))$, we have that
	$\maxval(\bbK_n,\C_{2k})/\maxval(\bbK_n,\C_{k})=2$, but
	$\maxval(\B,\C_{2k})/\maxval(\B_n,\C_{k})=1$,
	contradicting $\dopt(\bbK_n,\B) \leq \eps$.
\end{proof}

In view of \cref{prop:clique-not-tw-pliable}, one may ask whether it
is possible to obtain a PTAS on cliques via different means, not relying on our notion of
strong tw-pliability. It turns ous that this is not possible.
This follows from an easy reduction from the Maximum Clique problem.

\begin{lemma}[{\cite[Theorem~1.1]{Zuckerman07:toc}}]\label{lem:max-clique-hard}
It is NP-hard to approximate Maximum Clique within a factor $\mathrm{opt}^{1-\eps}$ for any $\eps > 0$.
That is, unless P=NP, for any $\eps>0$ there is no polynomial-time algorithm taking a graph $G$ and an integer $r$ as input, that can distinguish between the following cases:
\begin{itemize}
	\item $G$ has a clique of size at least $r$,
	\item $G$ has no clique of size $r^\eps$.
\end{itemize}
\end{lemma}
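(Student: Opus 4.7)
The plan is simply to cite the result directly: the lemma is a verbatim restatement of Zuckerman's theorem~\cite[Theorem~1.1]{Zuckerman07:toc}, so no new argument is required in this paper beyond pointing to that reference. Since the result is used here purely as a black box---to derive inapproximability of $\MaxSol$ on cliques via a short reduction from Maximum Clique---invoking the cited theorem is the only approach that makes sense in the body of the paper.

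For context on what a from-scratch proof would look like, I would proceed in three stages. First, I would invoke the PCP theorem to obtain an NP-hard gap version of some constraint problem (such as $3$-SAT) with constant completeness--soundness separation. Second, I would apply the standard FGLSS reduction, which turns a $q$-query PCP verifier using $O(\log n)$ random bits into a graph whose maximum clique size equals the maximum acceptance probability of the verifier times the number of random strings; this already yields some constant-factor inapproximability for Maximum Clique. Third, I would amplify this constant gap all the way to $n^{1-\eps}$ using a derandomised graph product based on dispersers or randomness extractors---this is precisely Zuckerman's technical contribution, which improved on H\aa{}stad's earlier $n^{1/2-\eps}$ bound.

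The main obstacle in a self-contained proof is the third step: constructing an explicit disperser whose parameters are tight enough that the amplification blows up the instance size only polynomially while pushing the gap all the way to $n^{1-\eps}$. For our black-box use, the only check required is that the distinguishing problem stated in the lemma coincides with the inapproximability setting of Zuckerman's theorem---and indeed a polynomial-time algorithm approximating Maximum Clique within factor $\mathrm{opt}^{1-\eps}$, given input $(G,r)$, would distinguish the two cases by comparing its output to $r^{\eps}$, contradicting Zuckerman's bound.
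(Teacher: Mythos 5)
Your proposal matches the paper exactly: the lemma is stated purely as a citation of Zuckerman's Theorem~1.1 and no proof is given in the paper, so treating it as a black box is the intended approach, and your sanity check that a factor-$\mathrm{opt}^{1-\eps}$ approximation would solve the stated gap problem is the only verification needed.
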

\begin{proposition}\label{prop:max-dense}
	Let $\cA=\set{\bbK_n \mid n \in \mathbb N}$ be the class of cliques (as defined in \cref{prop:clique-not-tw-pliable}). 
	Then VCSP restricted to instances $(\A,\C)$ where $\A \in \mathcal{A}$ and $\C$ is a \MaxSol structure, does not admit a PTAS unless P=NP\@.
\end{proposition}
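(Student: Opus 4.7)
The plan is a straightforward gap-preserving reduction from Maximum Clique using \cref{lem:max-clique-hard}. Given a graph $G$ on $n$ vertices, I will construct, in polynomial time, a \MaxSol $\sigma$-structure $\C$ (depending on $G$) over the domain $C = V(G) \cup \{c_\bot\}$ such that $\maxval(\bbK_n, \C)$ encodes $\omega(G)$, the clique number of $G$. A hypothetical PTAS for this problem, applied to $(\bbK_n, \C)$, will then yield a polynomial-time algorithm that distinguishes the two gap cases from \cref{lem:max-clique-hard}, yielding a contradiction.

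First I would define $\C$ by setting, for $u, v \in C$,
\[
 f^{\C}(u,v) \coloneqq \begin{cases}
  1 & \text{if } u,v \in V(G) \text{ and } \{u,v\} \in E(G), \\
  0 & \text{if } u = c_\bot \text{ or } v = c_\bot, \\
  -\infty & \text{otherwise.}
 \end{cases}
\]
I would then verify that $\C$ is a \MaxSol structure with bottom element $c_\bot$: the only tuples of value $\geq 0$ are edges of $G$ (value $1$) and tuples with a $c_\bot$ coordinate (value $0$), and replacing further coordinates by $c_\bot$ in either case yields a tuple of value $0$.

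Next I would compute $\maxval(\bbK_n, \C)$. For any feasible assignment $h\colon [n] \to C$, let $S = h^{-1}(V(G))$; feasibility forces $f^\C(h(x),h(y)) \neq -\infty$ for every ordered pair $x\neq y$ in $[n]$, which (using the $-\infty$ case, including $u = v \in V(G)$) forces $h|_S$ to be injective with image a clique of $G$. Hence $\val(h) = |S|(|S|-1)$, and taking the maximum gives $\maxval(\bbK_n,\C) = \omega(G)(\omega(G)-1)$.

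Finally, I would derive the contradiction. Suppose a PTAS exists; apply it with a fixed constant $\eps$ (e.g.\ $\eps = 1$) to the instance $(\bbK_n, \C)$ to obtain, in polynomial time, a value $v$ with $\omega(G)(\omega(G)-1) \leq v \leq 2\omega(G)(\omega(G)-1)$. Invoke \cref{lem:max-clique-hard} with gap parameter $1/4$: in the YES case ($\omega(G) \geq r$) we have $v \geq r(r-1)$, while in the NO case ($\omega(G) < r^{1/4}$) we have $v < 2 r^{1/4}(r^{1/4}-1) < 2 r^{1/2}$. For all sufficiently large $r$, $r(r-1) > 2r^{1/2}$, so the two cases are distinguished in polynomial time, contradicting \cref{lem:max-clique-hard} (instances with $r$ below the threshold can be handled by brute force in constant time). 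The only non-routine step is ensuring the definition of $\C$ satisfies the \MaxSol monotonicity condition around $c_\bot$ while also forbidding both non-edges and ``diagonal collisions'' $h(x)=h(y) \in V(G)$; the three-case definition above handles both simultaneously.
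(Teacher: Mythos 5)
Your proposal is correct and takes essentially the same route as the paper: both reduce from Maximum Clique via \cref{lem:max-clique-hard} by building a \MaxSol structure $\C$ from $G$ in which $-\infty$ entries forbid non-edges and repeated vertices of $V(G)$ while $c_\bot$ absorbs unused variables of the clique instance. The only difference is that the paper adds one extra element $\star$ (and uses $\bbK_{n+1}$) so that $\maxval(\bbK_{n+1},\C)$ equals the clique number itself, ruling out even sublinear-factor approximation directly, whereas your objective equals $\omega(G)\bigl(\omega(G)-1\bigr)$, which still preserves the gap after your short calculation.
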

\begin{proof}
	We reduce an instance $G$ of Maximum Clique to a suitable VCSP instance.
	Given a graph $G$ on $n$ vertices, define a $\sigma$-structure $\C$ over the domain $V(G) \cup \set{\star}\ \cup \set{c_\bot}$ as follows. Let
	\begin{align*}
		\fC(x,y) = \begin{cases}
			-\infty	&\qquad \text{if $x,y \in V(G)$ and $xy \not\in E(G)$, or $x=y=\star$} \\
			1		&\qquad \text{if $x=\star$ and $y \in V(G)$} \\
			0		&\qquad \text{otherwise.}
		\end{cases}
	\end{align*}
	It is clear that $\C$ is a \MaxSol structure, and it is easy to see that $\maxval(\bbK_{n+1},\C)=\MaxClique(G)$: in any feasible homomorphism from $\bbK_{n+1}$ to $\C$, at most one vertex may map to each element in $V(G) \cup \set{\star}$, and vertices that are not mapped to $c_\bot$ or $\star$ have to map to a clique in $G$. The optimal solution is achieved by mapping exactly one vertex to $\star$, one to each vertex of a maximum clique in $G$ and the remaining to $c_\bot$.

	Thus by \cref{lem:max-clique-hard}, unless P=NP there is no polynomial-time algorithm approximating $\maxval(\bbK_{n+1},\C)$ within any sublinear factor,
	let alone a constant factor approximation or a PTAS\@.
\end{proof}

Finally, we give a simple example showing that there are strongly tw-pliable (and
therefore, by \cref{lem:tract}, tractable) classes of structures that are not captured by fractional-treewidth-fragility.
Note that this does not contradict our conjecture made after
\cref{thm:main-maxsol} in \cref{sec:intro}, as the conjecture is
restricted only to classes parametrised by their Gaifman graphs.  
The class of structures in the following proposition does \emph{not} include all possible structures over the Gaifman graphs.

\begin{proposition}
	Let $\sigma$ be the signature with one function symbol $f$ of arity 2.
	Let $\mathbb A_n$ be $\sigma$-structure over the domain $[n]$ defined by $f^{\A_n}(x,x)=1$ and $f^{\A_n}(x,y)=1/n$ if $x \neq y$, $x,y \in [n]$. (That is, $\A_n$ is clique with loops around each vertex, loops have weight 1, and simple edges have weight $1/n$.)
	Then, $\cA=\set{\A_n \mid n \in \mathbb N}$ is strongly tw-pliable but $\Gaifman(\cA)$ is not fractionally-treewidth-fragile.
\end{proposition}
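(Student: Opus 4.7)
The plan is to verify the two halves independently: non-fragility of the Gaifman graphs $\{K_n\}$, and strong $\tw$-pliability of the class $\cA$.

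First I would handle the Gaifman side, which is straightforward: loops contribute no Gaifman edges, so $\Gaifman(\A_n) = K_n$, and any $(\tw\leq k)$-modulator $X \subseteq V(K_n)$ must leave a clique of treewidth $n-|X|-1 \leq k$, forcing $|X| \geq n-k-1$. An $\eps$-thin fractional $(\tw\leq k)$-modulator $\pi$ would then satisfy both $\EX_\pi|X| = \sum_v \Pr[v\in X] \leq \eps n$ and $\EX_\pi|X| \geq n-k-1$, which is inconsistent once $n$ is large compared to $k$ and $1/(1-\eps)$.

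For the pliability half, given $\eps > 0$ I would set $k = \lceil 1/(1-e^{-\eps})\rceil$ and, for each $n \geq k$, approximate $\A_n$ by the $\sigma$-structure $\B_{k,n}$ on domain $[k]$ with $f^{\B_{k,n}}(x,x) = n/k$ and $f^{\B_{k,n}}(x,y) = (n-1)/(k(k-1))$ for $x \neq y$ (one can simply take $\B = \A_n$ when $n < k$). Its Gaifman graph is $K_k$, giving $\tw(\B_{k,n}) = k-1$, a bound depending on $\eps$ alone. To prove $\dopt(\A_n, \B_{k,n}) \leq \eps$ I would invoke \cref{lem:duality} in both directions. For $\A_n \overcasts e^{-\eps}\B_{k,n}$, take $g \colon [n] \to [k]$ uniformly at random; since all tuples of both structures are positive, every $g$ is automatically a partial homomorphism, and a short calculation yields $\EX[f^{\A_n}(g^{-1}(u,u))] = n/k + (n-1)/k^2 \geq n/k = f^{\B_{k,n}}(u,u)$ and $\EX[f^{\A_n}(g^{-1}(u,v))] = (n-1)/k^2 \geq e^{-\eps}(n-1)/(k(k-1)) = e^{-\eps}f^{\B_{k,n}}(u,v)$ by the choice of $k$. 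For $\B_{k,n} \overcasts \A_n$, take $g \colon [k] \to [n]$ a uniformly random injection; by the symmetries $\Pr[g(x){=}u] = 1/n$ and $\Pr[g(x){=}u, g(y){=}v] = 1/(n(n{-}1))$, both sides of the required inequality match with equality (hence a fortiori up to the factor $e^{-\eps}$).

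No real obstacle arises beyond calibrating the weights of $\B_{k,n}$: they must reproduce the total loop mass $n$ and the total edge mass $n-1$ of $\A_n$ on a clique of the smaller size $k$, and the residual factor $(k-1)/k$ in the edge calculation (reflecting the chance that two independent images in $[k]$ collide) is what produces the $e^{-\eps}$ loss and pins down the choice $k \approx 1/\eps$.
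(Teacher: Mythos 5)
Your proposal is correct and follows essentially the same route as the paper: the non-fragility half is the identical counting argument on cliques, and the pliability half approximates $\A_n$ by a rescaled clique-with-loops on $k \approx 1/\eps$ vertices, certified via \cref{lem:duality} with uniformly random maps. The only (harmless) deviation is that you calibrate the weights of $\B_{k,n}$ and use a random injection in the reverse direction so that one overcast holds with equality, whereas the paper takes $\B = \lambda\A_k$ with $\lambda = n/k$ and argues both directions symmetrically with a $(1-\eps)$ loss on each side.
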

\begin{proof}
	First note that $\Gaifman(\cA)$ is the class of cliques with a loop on each vertex.
	As for any (non-empty) $X \subseteq V(K_n)$ we have $\tw(K_n - X) = n-1-\card{X}$, it is easy to see that $\Gaifman(\cA)$ is not fractionally-treewidth-fragile --- alternatively, it follows from \cref{prop:clique-not-tw-pliable,lem:fragility-pliable} that the class of cliques is not fractionally-treewidth-fragile, and thus neither is the class of cliques with a loop around each vertex, i.e.\ $\Gaifman(\cA)$.
	
	We now show that $\cA$ is strongly tw-pliable. Let $\eps>0$ be small, $k \coloneqq \lceil 2/\eps \rceil$ and $n > k$ arbitrary.
	We show that $\dopt(\A_n,\lambda\A_k) \leq \eps$ where $\lambda \coloneqq \frac nk$.
	
	Let $\omega$ be a random map from $V(\A_n)=[n]$ to $V(\A_k)=[k]$ and we check that it is an overcast from $\A_n$ to $\lambda\A_k$.
	It is clear that $\omega$ maps positive tuples to positive tuples.
	Further, for $e \in E(\A_k)$ that is a simple edge (i.e.\ $e=(x,y)$ for some $x \neq y$),
	\[
	\Eg f^{\A_n}(g^{-1}(e))
	= \frac 1n {n \choose 2} \frac 2{k^2}
	= \frac{n-1}{k^2}
	= \lambda  \frac 1k \cdot (1-1/n).
	\]
	For $e \in E(\A_k)$ that is a loop (i.e.\ $e=(x,x)$ for some $x$),
	\[
	\Eg f^{\A_n}(g^{-1}(e))
	= \frac nk + \frac 1n {n \choose 2} \frac 1{k^2}
	\geq \lambda.
	\]
	As $n > \frac 2\eps$, therefore $\A_n \overcasts (1-\eps) \lambda \A_k$. By symmetry (as $k>\frac 2\eps$ also), $\lambda \A_k \overcasts (1-\eps) \A_n$.
	Thus, $\dopt(\A_n,\lambda\A_k) \leq \eps$ as claimed.
	It follows that $\cA$ is strongly tw-pliable.
\end{proof}

\end{document}